\crefname{theorem}{Theorem}{Theorems}
\crefname{lemma}{Lemma}{Lemmas}
\crefname{invariant}{Invariant}{Invariants}
\crefname{claim}{Claim}{Claims}
\crefname{observation}{Observation}{Observations}
\crefname{algorithm}{Algorithm}{Algorithms}
\crefname{figure}{Figure}{Figures}
\newtheorem{theorem}{Theorem}[section]
\newtheorem{lemma}[theorem]{Lemma}
\newtheorem{definition}[theorem]{Definition}
\newtheorem{remark}{Remark}
\let\citet\cite
\def\ref#1{\textcolor{red}{[\textbackslash ref is disabled. Please use \textbackslash cref instead. Use \textbackslash cref\{sec:foo,sec:bar\} to reference two things at once.]}}
\def\eps{\varepsilon}
\newcommand{\E}{\mathbf{E}}
\newcommand{\EE}[2][]{%
  \if\relax\detokenize{#1}\relax
    \E\left[#2\right]%
  \else
    \E_{#1}\left[#2\right]%
  \fi
}
\DeclareMathOperator*\poly{poly}
\DeclareMathOperator*\Geom{\textsf{Geom}}
\DeclareMathOperator*\Exp{\textsf{Exp}}
\DeclareMathOperator*\TrunExp{\textsf{TrunExp}}
\DeclareMathOperator*\supp{supp}
\DeclarePairedDelimiter\set{\lbrace}{\rbrace}
\renewcommand\O{\mathcal O}
\newcommand\Bout{B^+}
\newcommand\Bin{B^-}
\newcommand\eout{\delta^+}
\newcommand\ein{\delta^-}
\newcommand{\dist}{\mathrm{dist}}
\newcommand{\mk}{\mathrm{mk}}
\newcommand{\vol}{\mathrm{vol}}
\newcommand{\diam}{\mathrm{diam}}
\newcommand{\ey}{\ex'}
\newcommand{\ex}{\mathcal E}
\DeclarePairedDelimiter\ceil{\lceil}{\rceil}
\def\suc{P}
\def\ved{\mathbf{d}}
\title{Stronger Directed Low-Diameter Decompositions\\ with Sub-Logarithmic Diameter and Separation}
\author{Bernhard Haeupler\thanks{INSAIT, Sofia University ``St. Kliment Ohridski'' and ETH Zurich. \href{mailto:bernhard.haeupler@inf.ethz.ch}{bernhard.haeupler@inf.ethz.ch}. This research was partially funded by the Ministry of Education and Science of Bulgaria (support for INSAIT, part of the Bulgarian National Roadmap for Research Infrastructure) and by the European Research Council (ERC) under the European Union's Horizon 2020 research and innovation program (ERC grant agreement 949272).} \and
Richard Hladík\thanks{ETH Zurich. \href{mailto:rihl@uralyx.cz}{rihl@uralyx.cz}. Partially funded by the European Research Council (ERC) under the European Union's Horizon 2020 research and innovation program (ERC grant agreement 949272).} \and
Shengzhe Wang\thanks{Utrecht University. \href{mailto:s.wang5@uu.nl}{s.wang5@uu.nl} Partially funded by the European Research Council (ERC) under the European Union's Horizon 2020 research and innovation program (ERC grant agreement 949272).} \and
Zhijun Zhang\thanks{INSAIT, Sofia University ``St. Kliment Ohridski''. \href{mailto:zhijun.zhang@insait.ai}{zhijun.zhang@insait.ai}. This research was partially done while at Princeton University and was partially funded by the Ministry of Education and Science of Bulgaria (support for INSAIT, part of the Bulgarian National Roadmap for Research Infrastructure).}}
\date{}
\begin{document}

\maketitle

\begin{abstract}
This paper significantly strengthens directed low-diameter decompositions in several ways. 

We define and give the first results for separated low-diameter decompositions in directed graphs, tighten and generalize probabilistic guarantees, and prove new independence results between (far away) edges. Our results are the first to give meaningful guarantees for decompositions with small diameters $D = \Omega(\log \log n)$ in contrast to the state of the art that only applies to super-logarithmic diameters $D = \omega(\log n)$.

These results transfer several important and widely used aspects of undirected low-diameter decompositions to the directed setting. All our results are algorithmic -- small modifications to two existing directed low-diameter decompositions \cite{BFHL25,Li25} can be used to sample decompositions with our new guarantees in near-linear time $\tilde{O}(m)$.
\end{abstract}

\thispagestyle{empty}
\newpage

\thispagestyle{empty}
\tableofcontents

\newpage

\setcounter{page}{1} 
\section{Introduction}

This paper proves stronger and new guarantees for low-diameter decompositions (LDDs) in directed graphs and shows how state-of-the-art algorithms~\cite{BFHL25,Li25}\footnote{The preprint of \citet{Li25} has since been published as \citet{LMR26}. The LDD algorithm is very similar in both versions. In this work, we refer exclusively to the preprint version, although the analysis should be easily adaptible to the conference version as well.} can be adapted to sample such decompositions in near-linear time.

Low-diameter decompositions aim to randomly cut a graph into clusters of bounded diameter as uniformly and unbiased as possible. A sample from a low-diameter decomposition should result in a decomposition where every cluster behaves, as much as possible, like a uniformly random, representative low-diameter neighborhood.
For example, edges should be cut with probability inversely proportional to the target diameter $D$ by a random decomposition and no edge should be much more likely to be cut than any other edge. More generally, one should expect that every short edge, short path, or small neighborhood has a good probability of being clustered (i.e., contained within a single cluster), where short and small are measured in terms of $D$. Moreover, in a random decomposition, far away parts of a graph should be clustered essentially independently, ensuring for example that any shortest path much longer than $D$ gets cut roughly equally often and not too many clusters just barely graze any given long path. 

Since their introduction in the 80's \cite{Awerbuch85}, many variants/notions of uniformity, and algorithms for LDDs have been developed (network decompositions \cite{AwerbuchGLP89}, neighborhood covers \cite{AwerbuchBCP92}, 
and sparse covers \cite{AwerbuchP90a} are some of the names used). LDDs were motivated from the start by applications and have led to countless applications in a wide variety of topics, areas, and fields over the last 40~years. A small subsample of examples includes applications to graph structures (spanners \cite{Cohen98,HarPeledMO23,FiltserGN24}, hopsets \cite{Cohen00,HaeuplerH0RS24}, etc.), data structures (distance oracle \cite{AwerbuchBCP98,ThorupZ01,HaeuplerLS24}, distance labeling \cite{Peleg00}, etc.), network communications (oblivious routings \cite{PelegU89,BuschLT14,ZuzicGYHS22}, synchronizers \cite{Awerbuch85,AwerbuchPPS92,BuschLT14}, etc.), metric embeddings \cite{Bartal96,KrauthgamerLMN04,Filtser21}, and efficient algorithms for 
shortest-paths \cite{KleinS97,AwerbuchBCP98,Cohen00,BernsteinGS21}, low-stretch trees \cite{AlonKPW95,AbrahamBN08,ElkinEST08,KoutisMP11,AN19,ElkinN19}, and related problems in a wide variety of models (parallel \cite{MillerPX13, BlellochGKMPT14}, distributed \cite{AwerbuchGLP89,AwerbuchBCP92}, (semi-)streaming \cite{AhnGM12,FiltserKM23}, derandomization \cite{AwerbuchGLP89,RozhonG20,RozhonEGH22}, etc.).

More recently, research on fast graph algorithms has pushed more into the much harder setting of directed graphs, and unsurprisingly, directed LDDs have started to show their impact there as well. In particular, \citet{BernsteinNW22} showed how LDDs can be generalized to the directed setting and employed their probabilistic guarantees in a breakthrough result, giving a simple near-linear time combinatorial algorithm for negative-length single-source shortest paths in directed graphs. Follow-up works have refined both the decomposition~\cite{BFHL25,Li25}
and its application in this context~\cite{BringmannCF23,fischer2025simple,HJS25,HJS26}.
These are just the beginning of our understanding of directed LDDs and we expect that in the (near) future, many different versions and refinements of directed LDDs will be developed and find a wide variety of applications, as their undirected versions have over the last 40~years and very much continue to this day. 

This paper makes several steps in this direction by giving positive and negative results on how key aspects of undirected LDDs transfer and work in the directed setting. 
The rest of this paper is organized as follows: We first formally state the simplest undirected LDD guarantee in \cref{sec:intro-uLDD-simple}, which also maps directly to the current state-of-the-art for the directed setting. In \cref{sec:intro-uLDD-general}, we then state many more general and stronger LDD properties that have been developed 
in the undirected setting. These properties are key and crucially used in many applications. 
\Cref{sec:our-results} states our results, which demonstrate that many of these guarantees have equivalents in the directed setting. %
In \Cref{sec:our-techniques}, we summarize our techniques.
In \cref{sec:geom-cutting,sec:jason-modified}, respectively, we present and analyze algorithms adapted from \cite{BFHL25,Li25} to prove and provide efficient sampling algorithms for these new and improved directed LDD guarantees.

\subsection{Simplest Low-Diameter Decompositions -- Undirected and Directed}
\label{sec:intro-uLDD-simple}

In this section, we first give some minimal notation and key definitions.
Then, the simplest version of undirected LDDs is stated.
We will also explain how this corresponds to the current state-of-the-art in the directed setting. 

Consider a graph $G=(V,E)$ with $n=|V|$ vertices and $m=|E|$ edges.
Throughout, we assume edges have positive integral lengths, which naturally induce shortest-path distances $\dist_G(\cdot,\cdot)$. The \emph{diameter} of a vertex subset $U$ is defined as $\diam(U) = \max_{u,v \in U} \dist_G(u,v)$.
A \emph{clustering} is a collection of disjoint vertex subsets $C_{1}, C_{2}, \ldots$ called \emph{clusters}. We say a clustering has diameter $D$ if for every cluster $C_i$, its diameter $\diam(C_i)$ is at most $D$. In the \emph{undirected} setting, an edge $\{u,v\}$ is \emph{clustered} by a clustering if there exists an $i$ such that cluster $C_i$ contains both endpoints, otherwise we say the clustering \emph{cuts} the edge $\{u,v\}$. 

With this minimal setup, we can now formally state the simplest undirected LDDs:

\begin{theorem}[Simplest Undirected LDDs]\label{thm:uLDDsimple}
For any undirected graph $G$ and diameter $D$, there exists a distribution over $D$-diameter clusterings such that the following holds 

for every edge $e$ of length $d_{e}$:
\[
\Pr(\text{$e$ is cut}) \leq  \frac{d_{e}}{D} \cdot \O(\log n).
\]
\end{theorem}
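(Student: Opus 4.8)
The plan is to use the exponential-shift low-diameter decomposition of Miller--Peng--Xu~\cite{MillerPX13}, which yields the diameter bound and the cut-probability bound from a single parameter. Fix $\beta := c\log n / D$ for a suitable constant $c>0$. Independently for each vertex $v$, sample a shift $\delta_v \sim \Exp(\beta)$, and assign every vertex $u$ to the cluster whose ``center'' is $\arg\max_{v\in V}\bigl(\delta_v - \dist_G(v,u)\bigr)$ (ties occur with probability zero and are broken by vertex id). This partitions $V$ into clusters, one per center that wins at least one vertex.

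First I would bound the diameter. Let $\mathcal E$ be the event $\max_{v} \delta_v \le D/2$; by a union bound over the exponential tail, $\Pr[\neg\mathcal E] \le n\,e^{-\beta D/2}$, which is at most $1/2$ once $c$ is large enough. Conditioned on $\mathcal E$, if $u$ is assigned to center $v$ then $\delta_v - \dist_G(v,u) \ge \delta_u - \dist_G(u,u) = \delta_u \ge 0$, so $\dist_G(v,u) \le \delta_v \le D/2$; hence any two vertices in a common cluster lie within $D/2$ of their shared center and thus within $D$ of each other. So conditioning on $\mathcal E$ produces a distribution supported on $D$-diameter clusterings, at the cost of only a factor $1/\Pr[\mathcal E] \le 2$ in every ``bad event'' probability.

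Next, the cut probability. Fix an edge $e=\{u,w\}$ of length $d_e$ and write $r := \dist_G(u,w) \le d_e$. Set $h_v := \delta_v - \dist_G(v,u)$ and $h'_v := \delta_v - \dist_G(v,w)$; the triangle inequality gives $|h_v - h'_v| \le r$ for every $v$. If some $v_1$ maximizes $\{h_v\}$ with a margin exceeding $2r$ over all other $v$, then $v_1$ also strictly maximizes $\{h'_v\}$, so $u$ and $w$ share a cluster. Contrapositively, if $e$ is cut then the top two values of $\{h_v\}_{v\in V}$ differ by at most $2r$, so it suffices to bound $\Pr[\text{top-two gap of }\{h_v\} \le 2r]$. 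The key point is to avoid a union bound over ``competing centers'' (which would cost an extra $\log n$): instead one conditions on the shifts of all vertices other than the eventual winner $v^*$; then ``$v^*$ wins'' is exactly the event that $\delta_{v^*}$ exceeds a threshold determined by the other shifts, and by memorylessness of the exponential the overshoot---which equals the top-two gap---is again $\Exp(\beta)$. A short calculation of this form, in which the aggregate contribution of the competing centers is $O(1)$ (intuitively, a center far from $u$ is exponentially unlikely to be the runner-up) rather than $O(\log n)$, gives $\Pr[\text{top-two gap} \le 2r] \le \O(\beta r) \le \O(\beta d_e)$. Accounting for the conditioning on $\mathcal E$ costs another factor $2$, so $\Pr[e\text{ cut}] \le \O(\beta d_e) = d_e/D \cdot \O(\log n)$, as claimed.

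The main obstacle is exactly this last exponential-race estimate: making the ``$\O(\beta r)$, no $\log n$'' bound rigorous requires handling the conditioning on the winner's identity carefully---the winner is the one vertex whose shift has not been ``used up''---and verifying that the summed competitor contribution stays $O(1)$. An alternative route that shifts where the delicacy lies is the recursive ball-growing decomposition in the style of \cite{Bartal96,Awerbuch85}: repeatedly pick an uncovered vertex, carve out a ball of radius drawn from a truncated exponential, and recurse; there the $D$-diameter guarantee is immediate from truncation, but the cut-probability analysis instead needs a telescoping/charging argument across the sequence of carved balls. Finally, I would add a remark that the $\O(\log n)$ factor is unavoidable in general---expander-type graphs force a cut-to-diameter ratio of $\Omega(\log n / D)$ for some edge---so one should not hope to remove it in \cref{thm:uLDDsimple}.
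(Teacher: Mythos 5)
The paper does not prove \cref{thm:uLDDsimple} at all --- it is stated in the introduction as a classical result and attributed to the literature (the surrounding text cites \cite{Bartal96,MillerPX13,LinialS93} and relatives) --- so there is no in-paper proof to compare against. Your proposal is the standard Miller--Peng--Xu exponential-shift argument, which is exactly one of the constructions the paper is implicitly invoking, and it is essentially correct: the diameter argument (condition on $\max_v \delta_v \le D/2$, note that $u$ itself competes so the winning center satisfies $\dist_G(v,u)\le\delta_v$) and the reduction of the cut event to ``top-two gap of $\{h_v\}$ at most $2r$'' are both right.

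The one step you should tighten is the exponential-race estimate, and you correctly identified it as the crux. As written, ``condition on the shifts of all vertices other than the eventual winner $v^*$'' is circular, since the identity of $v^*$ depends on all the shifts. The standard repair is to decompose over candidate winners: for each fixed $i$, condition on $\{\delta_j\}_{j\ne i}$, let $T=\max_{j\ne i}(\delta_j-\dist_G(j,u))\ge 0$ be the resulting threshold, and use memorylessness to get $\Pr[\delta_i-\dist_G(i,u)>T+2r]\ge e^{-2\beta r}\Pr[\delta_i-\dist_G(i,u)>T]$. Summing over $i$ (the winning events are disjoint) gives $\Pr[\text{top-two gap}>2r]\ge e^{-2\beta r}$, hence $\Pr[\text{gap}\le 2r]\le 1-e^{-2\beta r}\le 2\beta r$, with no $\log n$ loss and no need for the heuristic ``competitor contributions sum to $O(1)$.'' With that substitution the proof is complete; your closing remark on the $\Omega(\log n)$ lower bound matches the paper's own comment following \cref{thm:uLDDsimple}.
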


While it is immediately obvious, by looking at a path of unit-length edges, that at least a $\frac{1}{D}$ fraction of edges needs to be cut, implying the multiplicative factor is at least $1$, it is also known~\cite{Bartal96} that a loss of $\Theta(\log n)$ is necessary and the best possible up to a constant. 

\emph{Low-diameter decompositions for directed graphs}, as proposed by~\citet{BernsteinNW22}, are similar in that one cuts edges as uniformly as possible, such that the remaining \emph{strongly connected components} have small diameters. Notably, this allows for edges to go between clusters as long as they do not form cycles or, equivalently, as long as they follow some topological order. The only difference in the formal setup is that we use the ordering of the clusters as a (topological) order and edges are considered cut only if they are against this order.

The best known directed LDDs, stated in the following, are due to \citet{BFHL25,Li25}.

\begin{theorem}[State-of-the-art Directed LDDs]\label{thm:dLDDsimple}
For any directed graph $G$ and diameter $D$, there exists a distribution over ordered $D$-diameter clusterings such that the following holds 
for every edge $e$ of length $d_{e}$:
\[
\Pr(\text{$e$ is cut}) \leq  \frac{d_{e}}{D} \cdot \O(\log n \log \log n).
\]
\end{theorem}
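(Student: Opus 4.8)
The plan is to follow and refine the recursive ball-carving scheme that \citet{BernsteinNW22} introduced for directed graphs. Unlike the undirected setting (\cref{thm:uLDDsimple}), where one round of Miller--Peng--Xu exponential-shift clustering suffices, the directed case must recurse, since carving an out-ball around a vertex controls only the out-distances inside a cluster. The skeleton: on the current induced subgraph $H$, repeatedly pick a vertex $v$, sample a radius $R$ from a geometric/truncated-exponential law, carve the out-ball $B^+_H(v,R)$ out of $H$ as a contiguous block placed \emph{after} the rest of $H$ in the topological order, and recurse into the carved ball \emph{with all edges reversed}; the leaves of the recursion are the clusters. Because the carved ball is ordered last, an edge entering it from outside is never against the order, so the only cut edges are ``inside-to-outside'' crossings of some sampled ball.

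Two ingredients control the recursion. First, shrinkage, via a heavy/light dichotomy: call $v$ out-heavy if its $(D/8)$-out-ball already contains a $\tfrac34$-fraction of $V(H)$, and in-heavy symmetrically. If some $v$ is both, the intersection of its $(D/8)$-in- and out-balls is a set of at least $|V(H)|/2$ vertices of pairwise distance $\le D/4$, which is extracted as one cluster. If some $v$ is heavy in one direction only, carving a ball of radius uniform in $[D/8,D/4]$ around it leaves a residual of at most $|V(H)|/4$ vertices, while the carved ``super-ball'' --- in which the center reaches everything within $D/4$ --- is recursed into. Otherwise every vertex is light, and carving the out-ball of any vertex with a \emph{small} random radius produces a piece with at most a $\tfrac34$-fraction of $V(H)$. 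Either way every subproblem is a constant factor smaller, so the recursion has depth $O(\log n)$; taking the radius law of mean $\Theta(D/\log n)$ with a hard cap at $D/8$, a Chernoff bound shows that along any root-to-leaf path the radii sum to $O(D)$, and a standard argument then bounds each leaf cluster's diameter by $O(D)$ (rescale $D$ by a constant for the stated bound).

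Second, the per-step cut probability: an edge $e=(y,z)$ is cut by a carving at center $v$ only when the sampled $R$ lands in the interval $[\dist_H(v,y),\dist_H(v,z))$, which has length at most $d_e$; since the geometric pmf is everywhere at most $p=\Theta(\log n/D)$, this has probability $O(d_e\log n/D)$ (the usual Miller--Peng--Xu bookkeeping handles the several balls carved at one level without extra loss). The hard cap is exceeded with probability $1/\poly(n)$, so a union bound over the $O(n)$ carved balls puts us in a high-probability event where the $O(D)$-diameter bound holds deterministically. Since $e$ can only be cut in the subproblem still containing both endpoints, it is exposed at each of the $O(\log n)$ levels, giving a first bound of $O\!\left(\tfrac{d_e}{D}\log^2 n\right)$ --- an $O(\log^2 n)$ version of the theorem.

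The remaining step, shaving this to $O(\log n\log\log n)$, is the one I expect to be hardest. The obvious fix of a scale-aware parameter $\Theta(\log n_i/D)$ at a depth-$i$ subproblem of size $n_i\le(\tfrac34)^i n$ does not help, since $\sum_i\log n_i$ still telescopes to $\Theta(\log^2 n)$; one instead needs an accounting in which the $\log n$ union-bound overhead is charged only $O(\log\log n)$ times per root-to-leaf path --- e.g.\ by bundling a run of consecutive light-case carvings into a single randomized phase whose diameter-failure probability is controlled in aggregate rather than ball-by-ball, or by a two-scale radius distribution that decouples bulk shrinkage from boundary behaviour. This is the delicate part carried out in \cite{BFHL25,L25}; together with truncated-Dijkstra ball computations and random-sampling size estimates to keep the running time $\tilde O(m)$, it yields the theorem.
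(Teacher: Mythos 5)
Your sketch correctly reproduces the $O(\log^2 n)$ argument of \cite{BernsteinNW22}: recursive ball carving with geometric radii, a heavy/light dichotomy guaranteeing constant-factor shrinkage, the memoryless bound of $O(d_e\log n/D)$ per level, and $O(\log n)$ levels. But the statement to be proved is the $O(\log n\log\log n)$ bound, and that is precisely the part you leave open, gesturing at ``bundling light-case carvings'' or ``a two-scale radius distribution'' before deferring to \cite{BFHL25,L25}. Since the paper itself obtains \cref{thm:dLDDsimple} by citing those works (and, independently, as a corollary of the stronger \cref{thm:geom-modified} proved in \cref{sec:geom-cutting}), the entire substance of the theorem relative to \cite{BernsteinNW22} is the improvement you skip; this is a genuine gap, not mere terseness.

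The missing idea is a Seymour-style doubly-exponential scale hierarchy (cf.~\cite{S95}), not a rearrangement of the union bound. In \cite{BFHL25} one runs $L=\Theta(\log\log n)$ phases; phase $\ell$ carves only balls whose radius-$r_{\ell-1}$ neighborhoods contain at least roughly $m/s_{\ell-1}$ edges, where $s_\ell=\min(2^{2^{L-\ell}},m+1)$, using a radius window $r_\ell-r_{\ell-1}=\Theta\bigl(D\cdot\max(2^{-(L-\ell)},1/L)\bigr)$ and rate $p_\ell=\Theta\bigl(\ln(s_{\ell-1})/(r_\ell-r_{\ell-1})\bigr)=O\bigl(\tfrac{2^{L-\ell}L}{D}\bigr)$. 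The accounting then couples the two branches of the recursion: if the edge stays outside the carved ball it pays $O(d_e p_\ell)$ at this scale, and there are only $L$ scales; if it falls inside, the recursive subproblem has $O(m/s_\ell)$ edges, so $\log m$ drops by $\Theta(2^{L-\ell})$, which exactly pays for the factor $2^{L-\ell}$ in $p_\ell$. Along any root-to-leaf path this telescopes to $O\bigl(\tfrac{d_e}{D}L\log m\bigr)=O\bigl(\tfrac{d_e}{D}\log n\log\log n\bigr)$. As you yourself observe, a scale-aware rate $\Theta(\log n_i/D)$ with only constant-factor shrinkage still telescopes to $\log^2 n$; without tying the radius/rate schedule to doubly-exponentially shrinking size thresholds (or using the alternative random-ordering analysis of \cite{L25}), the $\log\log n$ bound does not follow.
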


In fact, a factor of $\O(\log^2n)$ loss was initially proved by \citet{BernsteinNW22}, which was later improved by \citet{BFHL25} to the state-of-the-art loss factor $\O(\log n \log \log n)$. \citet{Li25} subsequently gave a different proof for the same loss factor. Given that the directed setting is a strict generalization of the undirected setting (one can replace an undirected edge by two directed edges), the loss factor of \Cref{thm:dLDDsimple} is optimal up to the extra $\log \log n$ factor.
Furthermore, this additional $\log \log n$ factor appears in many other different contexts (see for example \cite{B98,AN19,CLRS20}).
Essentially, the same overhead can all be attributed to a general approach by \citet{S95} in analyzing recursive structures.
It seems very unlikely that current techniques can avoid this.
It is quite possible that the $\log \log n$ loss is necessary and tight in the directed setting.

We remark that all results discussed above are known to have near-linear time implementations. 

\subsection{General Undirected Low-Diameter Decompositions}
\label{sec:intro-uLDD-general}

As mentioned above, much stronger guarantees than \Cref{thm:uLDDsimple} are known for undirected LDDs.
For any path, subset of edges, or ball, it is considered cut if any of its edges is cut.
Next, we state the more general version, part or all of which can be obtained for example from~\cite{LinialS93,AwerbuchBCP92,Bartal96,BlellochGKMPT14,MillerPX13}:

\begin{theorem}[General Undirected LDDs]\label{thm:uLDDgeneral}
Let $\epsilon > 0$ be a sufficiently small constant.
For any undirected graph $G$ and diameter $D$, there exists a distribution over $D$-diameter clusterings such that:
    \begin{enumerate}[noitemsep]
        \item For every edge $e$ of length $d_{e} \le \epsilon D$:
        \[\Pr(\text{$e$ is not cut}) \ge \exp\left(- \frac{d_{e}}{D} \cdot \O(\log n)\right).\]
        \item For every path $P$ of length $d_{P} \le \epsilon D$:
        \[\Pr(\text{$P$ is not cut}) \ge \exp\left(- \frac{d_{P}}{D}\cdot \O(\log n)\right).\]
        \item For every subset of edges $\Gamma$ with $d_\Gamma = \sum_{e \in \Gamma} d_e \le \epsilon D$:
        \[\Pr(\text{$\Gamma$ is not cut}) \ge \exp\left(-\frac{d_{\Gamma}}{D}\cdot \O(\log n)\right).\]
        \item For every vertex $v$ and ball $B(v,d_B)$ centered at $v$ with radius $d_B \le \epsilon D$:
        \[\Pr(\text{$B(v,d_B)$ is not cut}) \ge \exp\left(- \frac{d_B}{D}\cdot \O(\log n)\right).\]
        \item For any two edges, paths, subsets of edges, or balls that are distance $\Omega(D)$ apart, the above probabilities are as if independent.
    \end{enumerate}
Furthermore, there is a near-linear time algorithm that samples such a low-diameter decomposition. 
\end{theorem}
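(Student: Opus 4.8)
The plan is to sample from the random-shift decomposition of Miller--Peng--Xu \cite{MillerPX13} (exponential clocks), from which all five guarantees follow. Fix $\beta = \Theta(\tfrac{\log n}{D})$ with a large enough constant, and independently assign each vertex $v$ a clock $\delta_v \sim \Exp(\beta)$, capped at $D/2$ (the cap is invisible up to an additive $1/\poly(n)$ and keeps the clocks independent). Put each vertex $u$ in the cluster of the center $c(u)$, defined as the vertex $w$ minimizing the shifted distance $\dist(u,w)-\delta_w$ (ties broken arbitrarily; ties have probability $0$). This clustering is computed by a single multi-source Dijkstra-type search seeded with key $-\delta_w$ at each $w$, running in $\tilde{O}(m)$ time; this is the claimed algorithm.

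Two structural facts are immediate. First, taking $w=u$ in the definition of $c(u)$ gives $\dist(u,c(u)) \le \delta_{c(u)} \le D/2$, so each cluster lies in a ball of radius $D/2$ and hence has diameter at most $D$: every sample is a $D$-diameter clustering. Second, since every clock is at most $D/2$, a center $w$ with $\dist(u,w)>D/2$ can never win for $u$, so $c(u)$ is a function only of the clocks in $B(u,D/2)$. Hence whether an object $X$ (edge, path, edge set, or ball) is cut depends only on the clocks within distance $D/2$ of $X$, and two objects more than $D$ apart have disjoint such ``influence regions'' and therefore exactly independent cut-events; rescaling $\beta$ by a constant gives item~5.

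The quantitative bounds all come from one fact: for i.i.d.\ $\Exp(\beta)$ variables with arbitrary shifts, the gap between the largest and the second-largest value stochastically dominates $\Exp(\beta)$ (a short memorylessness computation; the cap costs only another $1/\poly(n)$). Applied to $g_w:=\delta_w-\dist(u,w)$ at the endpoints $u,v$ of an edge $e$: since $|\dist(u,w)-\dist(v,w)|\le d_e$, the maximizer of $g_{\cdot}$ can differ between $u$ and $v$ only when the top gap is below $2d_e$, so $\Pr(e\text{ is cut})\le 1-e^{-2\beta d_e}\le \tfrac{d_e}{D}\cdot\O(\log n)$ (this re-proves \Cref{thm:uLDDsimple}); together with $1-x\ge e^{-2x}$ and a union bound over the edges of a path or edge set, this already gives items~1--3 whenever the relevant total length is $\O(D/\log n)$. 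For longer objects we argue the other way: given a connected object $S$, a reference vertex $s_0\in S$, and $\Delta_S:=\max_{u\in S}\dist(s_0,u)$, if the top gap of $\{\delta_w-\dist(s_0,w)\}_w$ exceeds $2\Delta_S$ then the maximizing center wins for \emph{every} $u\in S$, so $S$ is uncut; hence $\Pr(S\text{ uncut})\ge\Pr(\text{top gap}\ge 2\Delta_S)\ge e^{-2\beta\Delta_S}=\exp\!\bigl(-\tfrac{\Delta_S}{D}\cdot\O(\log n)\bigr)$. Specializing $S$ to an edge ($\Delta_S=d_e$), the vertex set of a path ($\Delta_S\le d_P$), and a ball ($\Delta_S\le 2d_B$) yields items~1, 2, and 4. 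For item~3 we may assume $\Gamma$ is a forest (an edge that closes a cycle in $\Gamma$ is automatically uncut once the rest of $\Gamma$ is), apply the last bound to each tree component of total length $d_{\Gamma_j}$ and multiply: when the components are pairwise far apart this is exact by the locality above, and in general it needs a positive-correlation bound for the ``capture'' events of disjoint vertex sets.

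I expect the main obstacle to be making the exponential lower bound hold across the full meaningful range $\Delta_S=\Theta(D)$: the ``single dominating clock'' argument breaks once $2\Delta_S$ approaches the cap $D/2$ (and bounding $\dist(u,w)\le\dist(s_0,w)+\Delta_S$ is already lossy by a factor of~$2$), so reaching $\Delta_S=\Theta(D)$ requires the sharper padded-decomposition analysis of \cite{Bartal96,AwerbuchBCP92}, which charges a split of $B(x,r)$ to the centers in increasing order of distance from $x$ and pays only $\O(\tfrac{r}{D}\log n)$ in the exponent rather than an extra additive $\log n$. The second delicate point is item~3: ``$e$ is uncut'' is not a monotone function of the clocks, so FKG does not apply directly and one needs a tailored peeling/coupling argument built on the locality from the second paragraph. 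Everything else -- the $\tilde{O}(m)$ running time, the exact diameter bound, and the independence of item~5 -- follows from the elementary observations above.
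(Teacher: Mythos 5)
First, a framing note: the paper does not prove \cref{thm:uLDDgeneral} itself --- it is stated as known background and attributed to \cite{LinialS93,AwerbuchBCP92,Bartal96,BlellochGKMPT14,MillerPX13} --- so your choice of the exponential-shift decomposition of \cite{MillerPX13} is exactly one of the intended sources rather than a deviation. Your treatment of the diameter bound, the running time, items 1, 2 and 4, and the independence in item 5 is essentially correct: the ``top gap stochastically dominates $\Exp(\beta)$'' fact, combined with the observation that a center winning at a reference vertex $s_0$ with margin $2\Delta_S$ wins at every vertex of $S$, is the standard route to the exponential padding bound, and the locality radius $D/2$ induced by capping the clocks gives exact independence for objects at distance $>D$. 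The cap issue you flag for $\Delta_S=\Theta(D)$ is real but benign: the additive $n^{-\Theta(1)}$ failure probability is absorbed by the constants in $\beta$ and in the $\O(\cdot)$ of the exponent, and the statement is only meaningful for $\Delta_S=O(D)$ in the first place (indeed the paper's own formal directed version, \cref{thm:geom-modified}, explicitly restricts to lengths at most $\epsilon D/\log\log n$).

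The genuine gap is item 3 for disconnected $\Gamma$, and you have correctly diagnosed it without closing it. Reducing to a forest and bounding each tree component is fine, but multiplying the per-component bounds requires that the ``captured entirely by one center'' events for disjoint connected sets be positively associated; as you note, these events are not monotone in the clocks, so FKG does not apply, and no coupling or peeling argument is actually supplied --- it is not even clear that positive association holds for the shift-based construction in general graphs. This is precisely the difficulty that the paper's machinery for the directed analog is designed to sidestep: in \cref{sec:geom-cutting} (see \cref{lem:exp-cutting-induction}) the decomposition is produced by sequentially carving balls with independently sampled truncated-exponential radii, each carve sees $\Gamma$ as a vector $\ved$ of at most $d_\Gamma$ ``bad'' radii, and the key step shows that the survival probability is minimized when the bad radii form a contiguous block at one end of the sampling interval ($\ved^\leftarrow$ or $\ved^\rightarrow$); a single recurrence then yields the multiplicative bound for arbitrary, arbitrarily disconnected $\Gamma$ with no correlation inequality whatsoever. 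To make item 3 airtight you should either switch to such a sequential ball-carving construction and run the same one-scale induction (in the undirected setting this loses only $\O(\log n)$, with no $\log\log n$ factor), or supply the missing positive-association argument for exponential shifts; as written, the product step is an unproved claim.
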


Note that guarantees for edges and paths are direct corollaries of guarantees for subsets of edges and balls.
\Cref{thm:uLDDgeneral} strengthens \Cref{thm:uLDDsimple} in multiple ways as we discuss next:

\paragraph{Stronger Probability Bounds and Clustering with Constant Diameters.}
The first notable difference already occurs for the cutting probabilities of edges. Indeed, the $\exp(-\frac{d}{D} \cdot \O(\log n))$ guarantee for edges being not cut in \cref{thm:uLDDgeneral} is strictly stronger than the $\frac{d}{D} \cdot \O(\log n)$ guarantee for an edge being cut in \cref{thm:uLDDsimple}, because $\exp(-\frac{d}{D} \cdot \O(\log n)) > 1 - \frac{d}{D} \cdot \O(\log n)$. In particular, for $d = \O(\frac{D}{\log n})$, both give at least a constant probability for an edge not to be cut. However, $1 - \frac{d}{D} \cdot \O(\log n)$ becomes vacuous for even larger $d$ while $\exp(-\frac{d}{D} \cdot \O(\log n))$ remains meaningful for $d$ as large as $\epsilon D$ for any sufficiently small (constant) $\epsilon$, in which case with probability $n^{-\O(\epsilon)}$, the edge is not cut. This enables meaningful clustering results with constant-diameter clusters. 

\paragraph{Results for Subsets of Edges and Their (In)dependence.}
The guarantees for edges can be generalized to hold for paths and even arbitrary subsets of edges.
From \cref{thm:uLDDsimple}, one can get a similar bound for subsets of edges, with a linear dependency on $d/D$, for free by union bound. 
However, the stronger bounds in \cref{thm:uLDDgeneral} cannot be derived this way. 
Instead, note that intuitively, if only balls with independent and bounded radii are cut, the events of being cut for two far away edges are as if independent,
i.e., the probability bound still holds for one edge being cut, when we condition on the other far away edge.
We give the formal definition for the independence property later in \cref{thm:geom-modified}.
Indeed, such 
property also holds for far away paths, subsets of edges, and balls as stated in \cref{thm:uLDDgeneral}.
Furthermore, the probability guarantees for them essentially implies that even if they are not far away, these events are never worse than independent (i.e., could either be as if independent or positively dependent).

\paragraph{Results for Neighborhoods.}
Motivated by applications, especially in non-sequential models, there has been a  line of work, tracing back to~\cite{AwerbuchP90a,AwerbuchBCP92,AwerbuchBCP98}, devoted to decompositions preserving not only individual edges but also small neighborhoods.
Formally, \cref{thm:uLDDgeneral} shows that all (potentially exponentially many) paths starting from a node and all vertices in a $d$-neighborhood are all clustered together with similar probabilities. To see why this is useful, note that it is basically equivalent to the following \cref{thm:uLDDseparated} about separated (partial) clusterings. A partial clustering is a clustering in which the union of $C_i$ is not all of $V$. We say a node $v$ is clustered if it is contained in some cluster and unclustered if $v \in V \setminus \bigcup_i C_i$. We remark that \cref{thm:dLDDsimple,thm:uLDDgeneral} do not require clusterings to be a partition.
Nevertheless, since edges of any unclustered vertex are by definition always cut, it is without loss of generality to assume all vertices are clustered.
\cref{thm:uLDDseparated} below examines clustering probabilities of vertices instead of cutting probabilities of edges and cares about separation between clusters.

\begin{theorem}[Undirected LDDs with Separation]\label{thm:uLDDseparated}
For any undirected graph $G$, diameter $D$, and separation $d$, there exists a distribution over partial $D$-diameter clusterings such that:
    \begin{enumerate}[noitemsep]
        \item For every vertex $v$:
         \[\Pr(\text{$v$ is clustered}) \ge \exp\left(-\frac{d}{D}\cdot \O(\log n)\right).\] 
        \item For any two clusters $C,C'$, and any two vertices $u \in C, v \in C'$, it holds that $\dist_G(u,v) > d$.
    \end{enumerate}
\end{theorem}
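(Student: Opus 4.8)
The plan is to derive \cref{thm:uLDDseparated} from the ball-clustering guarantee of \cref{thm:uLDDgeneral} (its fourth item) by the standard \emph{cluster-shrinking} reduction. First I would sample a $D$-diameter clustering $\{C_1,C_2,\dots\}$ from the distribution promised by \cref{thm:uLDDgeneral}; as noted in the discussion preceding the statement, we may assume without loss of generality that it is a partition of $V$. For each cluster $C_i$ I define the \emph{shrunken} cluster
\[
  \hat C_i \ \coloneqq\ \{\, v \in C_i \ :\ \dist_G(v, V\setminus C_i) > d \,\},
\]
i.e.\ I delete from $C_i$ every vertex lying within distance $d$ of the complement of $C_i$, and output the partial clustering $\{\hat C_1,\hat C_2,\dots\}$ (discarding empty parts). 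All randomness resides in the sample from \cref{thm:uLDDgeneral}; the shrinking step is deterministic. Since $\hat C_i\subseteq C_i$ and the original clusters are pairwise disjoint, the output is a partial clustering with $\diam(\hat C_i)\le\diam(C_i)\le D$, hence a valid partial $D$-diameter clustering.

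Next I would verify the two asserted properties. For separation, take $u\in\hat C_i$ and $v\in\hat C_j$ with $i\ne j$; since $v\in C_j\subseteq V\setminus C_i$, we get $\dist_G(u,v)\ge\dist_G(u, V\setminus C_i)>d$, which is property~2 (and this is \emph{strict}, precisely because every retained vertex is strictly more than $d$ away from the complement of its cluster). For the clustering probability, the key observation is that a vertex $v$ — say $v\in C_i$ in the sampled LDD — is clustered in the output precisely when the closed ball $B(v,d)=\{\, u\ :\ \dist_G(v,u)\le d \,\}$ is entirely contained in $C_i$. One direction is immediate from the definition of $\hat C_i$; for the other, every $u\in B(v,d)$ is joined to $v$ by a shortest path all of whose vertices lie in $B(v,d)$ (prefixes of shortest paths are shortest paths), so if no edge inside $B(v,d)$ is cut then all of $B(v,d)$ lies in the single cluster containing $v$. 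Consequently
\[
  \Pr(\text{$v$ is clustered}) \ \ge\ \Pr(\text{$B(v,d)$ is not cut}) \ \ge\ \exp\left(-\tfrac{d}{D}\cdot\O(\log n)\right),
\]
where the final inequality is the fourth item of \cref{thm:uLDDgeneral} applied with ball radius $d_B=d$; this gives property~1.

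I do not expect a genuine obstacle: all of the probabilistic content is inherited wholesale from \cref{thm:uLDDgeneral}, and what remains is bookkeeping. The step most deserving of attention — though still routine — is the implication ``no edge inside $B(v,d)$ is cut $\Rightarrow$ $B(v,d)$ lies in a single cluster,'' which genuinely relies on the shortest-path-prefix structure of metric balls and would fail for an arbitrary bounded-radius vertex set; the only other mild subtlety is strict-versus-nonstrict separation, which is the reason one shrinks by the \emph{closed} ball of radius exactly $d$. Finally, although \cref{thm:uLDDseparated} as stated only asserts existence of the distribution, the shrinking step is computable in near-linear time (for each $C_i$, a single multi-source shortest-path computation from the vertices of $C_i$ incident to an edge leaving $C_i$ recovers $\dist_G(v, V\setminus C_i)$ for all $v\in C_i$), so the near-linear-time sampler behind \cref{thm:uLDDgeneral} gives one here as well.
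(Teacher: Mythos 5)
Your proposal is correct and is essentially the reduction the paper itself sketches (in the discussion following \cref{thm:uLDDseparated}): sample from \cref{thm:uLDDgeneral}, declare a vertex clustered only if a radius-$\Theta(d)$ ball around it stays inside its cluster, and inherit the probability bound from the ball guarantee. Your variant — trimming each cluster to the vertices at distance more than $d$ from its complement, rather than the paper's "keep $v$ iff $B(v,d/2)$ is uncut" — is a slightly cleaner execution of the same idea (it makes the separation property immediate rather than routed through disjointness of radius-$d/2$ balls), but it is not a different approach.
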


Up to constants, \cref{thm:uLDDseparated} is actually equivalent to \Cref{thm:uLDDgeneral}.
To see this, fix the value of $d$.
Given LDDs where neighborhoods are preserved by \cref{thm:uLDDgeneral}, we consider each vertex $v$ as clustered if its ball $B(v,d/2)$ is not cut.
Then, for any two clustered vertices $u,v$ from different clusters, $B(u,d/2)$ and $B(v,d/2)$ are disjoint.
Consequently, clusters are at least distance $d$ apart from each other.
Conversely, suppose we are given separated clusterings by \cref{thm:uLDDseparated}.
We assign each unclustered vertex $v$ to the nearest cluster.
Due to the separation, $v$ can only be in the ball $B(u,d/2)$ for originally clustered vertices $u$ in at most one cluster.
Moreover, if such a cluster exists, $v$ is assigned to this cluster.
As a result, the ball $B(u,d/2)$ is never cut for any originally clustered vertex $u$.
For both directions, neighborhood preservation probabilities translate to vertex clustering probabilities and vice versa.

\subsection{Importance of Decompositions with Small Diameter and Separation}

It is important to note that in many applications of (undirected) low-diameter decompositions, the stronger guarantees of \Cref{thm:uLDDseparated} and \Cref{thm:uLDDgeneral} compared to the simple per edge $1 - \frac{d}{D} \cdot \O(\log n)$ cutting guarantee of \Cref{thm:uLDDsimple} are absolutely crucial. 

Indeed, in many cases $\frac{d}{D}$ corresponds to a stretch, length-slack, or approximation ratio while the clustering or cutting probability goes into the running time or size and being able to achieve sub-logarithmic, typically even constant, slack is key.

Examples among the many listed above include: Distance oracles and distance labeling schemes, which are data structures that provide $\O(k)$-approximate distances while storing only $\tilde{\O}(n^{1+1/k})$ bits of information about a graph. They are most interesting for $k$ being a small constant or at least sub-logarithmic and they crucially rely on clusterings with diameter proportional to $k$. Also intimately connected to LDDs are graph spanners. A $(2k-1)$-stretch spanner for a given graph $G$ is a sparse subgraph in which all distances are approximated up to the stretch-factor. For every graph there exists such a subgraph with only $\O(n^{1 + 1/k})$ edges. LDDs with diameter $D=k$ (for unweighted graphs with $d=1$) have been key from the start in both proving the existence and giving fast (parallel and distributed)  constructions of spanners and closely-related synchronizers~\cite{Awerbuch85} (see for example the paper ``Improved parallel algorithms for spanners and hopsets''~\cite{miller2015improved}). 

Separation and independence are also used in many applications. They lead for example to neighborhood covers with constant length-approximation or length-slack $\frac{d}{D}$. Together they for example guarantee that running an LDD with diameter $D$ and contracting the components shrinks the diameter of the graph by at least $d$ with high probability (see for example \cite{haeupler2016faster,czumaj2021exploiting}). Even the optimal~\cite{abboud2022hardness} way to compute $t$-pair approximate shortest paths that runs in $(m+t)\cdot n^{\O(1/k)}$ time simply uses $n^{\O(1/k)}$ LDDs for every scale $D=2^i$, checks at what scale a pair appears in a joint cluster, and uses the fact that paths of length $d = D/k$ are clustered with good enough probability to obtain an $\O(k)$-approximation. 

As a last of many examples, all recent works on length-constrained expanders (see for example \cite{haeupler2025cut}) and their applications (such as $\O_\eps(1)$-approximate dynamic distance oracles~\cite{HaeuplerLS24} with $n^{1/\eps}$ update time or $\O_\eps(1)$-approximate $t$-commodity flows in $(m+t)^{1+1/\eps}$ time~\cite{HaeuplerH0RS24} crucially rely on low-diameter decompositions with sub-logarithmic stretch.

\section{Results and Technical Overview}

\subsection{Our Results}\label{sec:our-results}
Our main contribution is the following result on directed LDDs that have the same strong guarantees as in \cref{thm:uLDDgeneral} for undirected LDDs, except for the hard-to-avoid $\log\log n$ factor.

\begin{theorem}[General Directed LDDs]\label{thm:dLDDgeneral}
Let $\epsilon > 0$ be a sufficiently small constant.
For any directed graph $G$ and diameter $D$, there exists a distribution over ordered $D$-diameter clusterings such that:
\begin{enumerate}[noitemsep]
    \item For every edge $e$ of length $d_{e} \le \epsilon D/\log\log n$:
        \[\Pr(\text{$e$ is not cut}) \ge \exp\left(-\frac{d_{e}}{D}\cdot \O(\log n \log\log n)\right).\]
    \item For every path $P$ of length $d_{P} \le \epsilon D/\log\log n$:
        \[\Pr(\text{$P$ is not cut}) \ge \exp\left(-\frac{d_{P}}{D}\cdot \O(\log n \log\log n)\right).\]
    \item For every subset of edges $\Gamma$ with $d_\Gamma = \sum_{e \in \Gamma} d_e \le \epsilon D/\log\log n$:
        \[
        \Pr(\text{$\Gamma$ is not cut}) \ge \exp\left(-\frac{d_\Gamma}{D}\cdot \O(\log n \log\log n)\right).
        \]
    \item For any two edges, paths, or subsets of edges that are distance $\Omega(D)$ apart \emph{in the underlying undirected graph} (i.e., the graph obtained by replacing each directed edge with an undirected edge of the same length), the above probabilities are as if independent.
\end{enumerate}
Furthermore, there is a near-linear time algorithm that samples such a low-diameter decomposition.
\end{theorem}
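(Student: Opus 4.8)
The plan is to start from the near-linear-time directed LDD algorithms of \cite{BFHL25,L25} underlying \cref{thm:dLDDsimple}, apply two cosmetic modifications — (i) drive each recursive carving step by geometrically (equivalently, truncated-exponentially) distributed random radii/thresholds, and (ii) rescale the rate parameters by a constant — and then re-run their analysis while tracking, level by level, the \emph{exponential} survival probability rather than the linear cut probability. Both algorithms are recursive with depth $\O(\log\log n)$ (the Seidel-style recursion of \cite{S95}, which is the source of the extra $\log\log n$): at a level with diameter target $D_i$ one samples thresholds, peels off clusters as balls of the sampled radii grown in a symmetric proxy metric (e.g.\ ball intersections, as in \cite{BernsteinNW22}), and recurses on the remainder. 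Since only distributions and constants change, the $\tilde O(m)$ running time and the $D$-diameter guarantee are inherited verbatim.

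For Item 1, the heart is a per-level claim: conditioned on $e$ surviving into some cluster at level $i$, the probability $e$ is cut at level $i$ is at most $1-e^{-\beta_i d_e}$ with $\beta_i=\O(\log n/D_i)$ — the standard memoryless computation (order the pivots by the time their growing ball reaches the nearer endpoint of $e$; the first such pivot fails to also capture the far endpoint, before being cut off or overtaken, with probability $\le 1-e^{-\beta_i d_e}$), now carried out in the asymmetric proxy metric. Multiplying over levels yields $\Pr(e\text{ uncut})\ge\prod_i e^{-\beta_i d_e}=\exp(-d_e\sum_i\beta_i)=\exp(-\tfrac{d_e}{D}\,\O(\log n\log\log n))$, since $\sum_i\beta_i$ collapses to the same $\O(\log n\log\log n/D)$ bounded in \cref{thm:dLDDsimple}. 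The same pivot-ordering argument upgrades verbatim to a path $P$ (Item 2): a pivot reaching one vertex of $P$ must travel at most the residual path length $\le d_P$ to capture the rest, giving level-$i$ uncut probability $\ge e^{-\beta_i d_P}$. For an arbitrary subset $\Gamma$ (Item 3) one reduces to the path case per connected component of the subgraph spanned by $\Gamma$: a connected component admits a closed walk of length $\O(d_\Gamma)$; far-apart components multiply by the independence of Item 4 applied at that level; nearby components are only positively correlated — for a net per-level bound $\Pr(\Gamma\text{ uncut at level }i)\ge\exp(-\beta_i\,\O(d_\Gamma))$, multiplied over levels. The cleanest packaging is a level-local lemma stating that the union of the boundaries of balls with independent geometric radii avoids $\Gamma$ with probability $\ge\exp(-\beta_i\,\O(d_\Gamma))$ — the directed analog of the undirected factorization of survival probabilities over a spanning structure.

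For Item 4 I would prove locality: whether an object $X$ (edge, path, or subset) is cut is a deterministic function of only the random choices of vertices within undirected distance $cD$ of $X$, for a suitable constant $c$. Every cluster ever produced has diameter $\le D$, so a pivot at undirected distance $>cD$ from all of $X$ has its ball cut off before it can reach $X$, making its radius irrelevant to $X$; this persists through the recursion because each carving is a partition into diameter-$\le D$ pieces, so the sub-instances touching $X$ never leave the $cD$-ball around $X$. Hence two objects at undirected distance $\Omega(D)$ (with the constant a bit above $2c$) are decided by disjoint, independent blocks of randomness, so their cut events are genuinely independent; together with Items 1--3 holding conditionally (hence showing the events are never worse than independent), this yields the stated ``as if independent'' guarantee.

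The main difficulty I anticipate is the per-level analysis in the asymmetric recursive setting: the exponential survival bound must hold \emph{conditioned on earlier-level survival}, which requires that once $X$ is confined to a surviving cluster the remaining process behaves like a fresh instance on that cluster — a fact that has to be checked against the precise recursive structure of \cite{BFHL25,L25}. A second delicate point is the subset level-local lemma: the proxy metric is not a genuine metric, so the undirected spanning-structure factorization must be re-derived, presumably by coupling the boundary process to ordinary undirected ball-growing on the underlying undirected graph and losing only constant factors. Finally, one must re-verify that the exponential bookkeeping still sums to the same $\O(\log n\log\log n)$ as the linear bookkeeping of \cref{thm:dLDDsimple}, i.e.\ that no level contributes an exponential rate asymptotically exceeding its linear cut-probability share.
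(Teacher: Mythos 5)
Your algorithmic modification (replacing the geometric radii of \cite{BFHL25} with truncated exponential radii so that the algorithm never restarts) is exactly the right move and matches the paper. But the analysis you build on top of it has a genuine gap: the ``standard memoryless computation'' you use as the heart of Item~1 --- condition on a pivot's ball reaching the near endpoint of $e$, then bound by $e^{-\beta_i d_e}$ the probability it fails to also capture the far endpoint --- is precisely the step that breaks once the radius distribution is truncated. If the ball center $x$ has $\dist(x,u) \in [r_1 - d_e, r_1)$, then conditioned on $u \in B$ the sampled radius satisfies $r \in [\dist(x,u), r_1) \subseteq [\dist(x,u), \dist(x,u)+d_e)$, so $v$ is \emph{never} captured and the conditional survival probability is $0$, not $e^{-\beta_i d_e}$. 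The memorylessness you invoke simply does not hold near the truncation boundary, and without truncation you are back to restarts with probability $1/\poly\log n$, which destroys any exponential bound as small as $n^{-\O(\eps)}$. The paper's proof is structured specifically to avoid ever conditioning on a ball reaching one endpoint: it sets up a dynamic-programming quantity $\suc_{i,j}(m,d)$ (the worst-case survival probability of a weight-$d$ edge set from state $(i,j)$ onward), writes the one-step recursion as a sum over \emph{unconditional} radius values weighted by the survival of the inside and outside pieces, and then shows via a local exchange argument that the worst-case placement of the $\le d$ ``bad'' radii is one of two extreme contiguous blocks ($\ved^{\leftarrow}$ at $r_{\ell-1}$ or $\ved^{\rightarrow}$ at $r_\ell$), each of which can be bounded directly. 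Your proposal contains no substitute for this idea, and the per-level conditional bound you assert is false as stated.

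A second, smaller gap is Item~3: you reduce a subset $\Gamma$ to closed walks per connected component and then assert that nearby components are ``only positively correlated.'' That positive-association claim is exactly the kind of statement that needs proof and is not supplied by Items~1--2 or~4; the paper instead handles arbitrary subsets natively in the DP (the exchange argument works for any bad-radius vector $\ved$ with $\|\ved\|_1 = d_\Gamma$), so that paths and single edges fall out as corollaries rather than the other way around. Your locality argument for Item~4 is essentially the paper's (disjoint blocks of independent radius randomness for objects $\Omega(D)$ apart) and is fine.
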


The strong guarantees of \cref{thm:dLDDgeneral} enable meaningful discussion about directed LDDs with substantially sub-logarithmic diameters as small as $\Theta(\log \log n)$.

We remark that compared to \cref{thm:uLDDgeneral}, \cref{thm:dLDDgeneral} lacks the property for neighborhoods.
As will be seen in \cref{sec:our-techniques}, it turns out that it is impossible to preserve out-balls/in-balls for each vertex with nonzero probability in the directed setting.
This also implies that we no longer obtain LDDs with separation for free.
Indeed, it is, a priori, even unclear if directed LDDs with any notion of separation should exist at all.
The recursive approaches taken by all known directed LDD constructions may increase distances between vertices when recursing into induced subgraphs.
It thus demands a careful selection of clustered vertices to ensure that separation in induced subgraphs imply separation in the original graph. 

Nevertheless, while there does not seem to be a truly directed analog for neighborhoods, we do prove the following meaningful generalization of \Cref{thm:uLDDseparated} to the directed setting:

\begin{theorem}[Directed LDDs with Separation]\label{thm:dLDDseparated} 
For any directed graph $G$, diameter $D$, and separation $d$, there exists a distribution over ordered partial $D$-diameter clusterings such that:
\begin{enumerate}[noitemsep]
    \item For every vertex $v$:
        \[
        \Pr(\text{$v$ is clustered}) \ge \exp\left(- \frac{d}{D}\cdot \O(\log n \log\log n)\right).
        \]
    \item For any two clusters $C,C'$ such that $C$ is ordered after $C'$, and any two vertices $u \in C, v \in C'$, it holds that $\dist_G(u,v) > d$.
\end{enumerate}
Furthermore, there is a near-linear time algorithm that samples such a low-diameter decomposition.
\end{theorem}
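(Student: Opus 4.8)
The plan is to start from the modified recursive decompositions of \cref{sec:geom-cutting,sec:jason-modified} (adapted from \cite{BFHL25,L25}), which already yield \cref{thm:dLDDgeneral}, and to augment them with a running set $\suc$ of \emph{surviving} vertices — those still eligible to be clustered — together with an \emph{interiority filter} that shrinks $\suc$ each time the recursion descends. Recall that such an algorithm operates on an induced subgraph $H$ of $G$, carves a ball $B$ — either an out-ball $\Bout_H(s,r)=\{w:\dist_H(s,w)\le r\}$ or an in-ball $\Bin_H(t,r)=\{w:\dist_H(w,t)\le r\}$, with $r$ drawn from a geometric / truncated-exponential distribution on a $\Theta(D)$-scale — adds the order-violating crossing edges to the cut set, and recurses on $B$ and on $V(H)\setminus B$. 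We fix the ordering so that a carved out-ball's sub-clusters come \emph{after} those of its complement, and a carved in-ball's sub-clusters come \emph{before}. The modification is: when recursing into a part $P\in\{B,\,V(H)\setminus B\}$, we only pass a surviving vertex $v\in \suc\cap P$ into the recursion on $H[P]$ if $\Bout_H(v,d)\subseteq P$; every other surviving vertex in $P$ is declared permanently unclustered. The output is the ordered collection of clusters produced at the base cases, restricted to still-surviving vertices.

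For separation (Property~2), the filter maintains the invariant that at every recursive call on a subgraph $H'$, \emph{every $G$-shortest path of length at most $d$ emanating from a surviving vertex lies entirely inside $H'$}: inductively, if $\Bout_H(v,d)\subseteq P$ and the invariant holds at $H$, then $\Bout_H(v,d)$ already coincides with $\Bout_G(v,d)$ and contains a $G$-shortest path to each of its vertices, and since every prefix of a shortest path is a shortest path, every such path from $v$ stays inside $\Bout_H(v,d)\subseteq P$. Now let $u\in C$, $v\in C'$ be clustered with $C$ ordered after $C'$, and suppose $\dist_G(u,v)\le d$. Let $N$, on subgraph $H_N$, be the recursive call at which $C$ and $C'$ first separate; then $u,v$ are surviving vertices of $N$. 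By the ordering convention, $u$ lies in the carved ball if it is an out-ball, and in the complement if it is an in-ball; either way the filter guarantees that $\Bout_{H_N}(u,d)$ is contained in $u$'s part. But by the invariant a $G$-shortest $u$--$v$ path of length $\le d$ lies in $H_N$, hence (every prefix being shortest) inside $\Bout_{H_N}(u,d)$, hence inside $u$'s part — contradicting that $v$ lies in the other part. When $N$ is a base case, that same path forces $C$ to precede $C'$ in the topological order output there, again a contradiction. Hence $\dist_G(u,v)>d$.

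For the clustering probability (Property~1), note that $v$ is clustered iff it is never filtered out. At each ball-carving call the probability that $v$ is filtered out is $O(d/D)$ over the random radius whenever the carved ball is oriented favorably for $v$: if $v$ lies in a carved out-ball, or in the complement of a carved in-ball, then the radii $r$ for which $\Bout_H(v,d)$ spills out of $v$'s part form an interval of length $O(d)$, by the same triangle-inequality window computations that bound edge-cut probabilities in \cref{thm:dLDDgeneral}. The opposite orientations ($v$ in the complement of an out-ball, or inside an in-ball) are the dangerous ones and must be neutralized by choosing which ball to carve, and with what radius law, so as not to cut deeply into the relevant reverse neighborhoods — at the cost of at most the $O(\log n\log\log n)$ factor already present. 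Multiplying the per-level survival probabilities over the $O(\log n\log\log n)$ effective recursion levels and using $1-O(d/D)\ge\exp(-O(d/D))$ together with the exponential tails of the geometric / truncated-exponential radius distributions (exactly as in the proof of \cref{thm:dLDDgeneral}) then yields $\Pr(\text{$v$ clustered})\ge\exp(-\tfrac{d}{D}\cdot\O(\log n\log\log n))$.

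The only extra work per recursive call is a single bounded-distance exploration — which surviving vertices reach the removed part within distance $d$ — to evaluate the filter, and the sub-instances only shrink, so the $\tilde{O}(m)$ running time of \cite{BFHL25,L25} carries over. The main obstacle is precisely the tension underlying Property~1: the interiority filter is exactly what makes separation lift from induced subgraphs back to $G$, but a careless filter eliminates some vertices with \emph{constant} probability at a single level — a vertex lying just outside a carved out-ball yet able to reach into it within distance $d$ is filtered out for essentially every radius — so the ball-selection rule must be coordinated with the filter so that every vertex only ever meets the favorable orientation, and one must verify that this coordination leaves the diameter-reduction and cut-probability analyses of the base algorithms intact. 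I expect this to be where the real work lies.
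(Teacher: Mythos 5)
There is a genuine gap, and it sits exactly where you suspected the ``real work'' lies --- but the problem is not one of coordinating orientations; your filter is unfixable in the form you propose. Requiring $\Bout_H(v,d)\subseteq P$ for a vertex to survive into its part $P$ is equivalent to requiring that $v$'s $d$-out-ball is never cut by the decomposition, and the paper's own counterexample (a directed cycle $v_1\to\cdots\to v_n\to v_1$ with extra edges from $v_1$ to every other vertex) shows this can have probability $0$: here $\Bout_G(v_1,1)=V$, while any $D$-diameter clustering with $D<n-1$ must separate $v_1$ from $v_2$, so under your filter $v_1$ is declared unclustered with probability $1$ and Property~1 fails. No choice of ball orientation or radius law avoids this, because \emph{every} nontrivial split cuts $v_1$'s out-ball. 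Weakening to a one-sided filter (only filtering vertices inside the carved ball, say) does not work either, for the reason the paper emphasizes: distances can strictly increase when passing to induced subgraphs, so an unfiltered vertex in the complement may later look ``separated'' from another vertex inside its subgraph while a short connecting path through the removed part exists in $G$. This is precisely the tension between your two failure modes, and it cannot be resolved by any filter of the form ``my $d$-out-ball stays in my part.''

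The paper's resolution (\cref{thm:jason-modified}, \cref{lem:dist,lem:sep}) marks a vertex $v$ based on its distance \emph{from the cut's center(s)} rather than on containment of its own neighborhood: when a (union of) out-ball(s) of radius $r$ is carved, exactly the vertices with $\dist_B(v)\in(r-d,r+d]$ are marked, on \emph{both} sides of the boundary. This is two-sided, so shortest paths of length $\le d$ between unmarked vertices are preserved in the induced subgraphs (items 2 and 3 of \cref{lem:dist}), yet it excludes only an $O(d)$-length window of radii per vertex per ball, so the marking probability is controllable; and it yields separation in one direction only, via the triangle inequality $\dist_B(v)\le\dist_B(u)+\dist_G(u,v)$ (item 1 of \cref{lem:dist}), which is why \cref{thm:dLDDseparated} asserts $\dist_G(u,v)>d$ only for $C$ ordered after $C'$. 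Your probability sketch also understates a second difficulty specific to \cite{L25}: a single uniformly sampled radius is shared by up to $k\gg D$ balls per iteration, and the naive per-ball union bound gives a linear (not exponential) loss in $\log k$; the paper needs a new convexity argument over the random order of centers (\cref{lem:prob-mark,lem:avg-ratio-prefix-d}) to get the bound $1-\exp(-\O(d/(a_{i-1}-a_i)\cdot\log k))$ per iteration.
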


This allows to show meaningful separation properties for partial clusterings with diameters as small as $\Theta(\log \log n)$.

\subsection{Our Techniques}\label{sec:our-techniques}

Next we give a technical overview of our directed LDDs.

\paragraph{General Directed LDDs.}
\cref{thm:dLDDgeneral} is proved by adapting \cite{BFHL25}'s algorithm.
We first give a brief overview of \cite{BFHL25}'s algorithm and its analysis, before introducing our modifications and new proof ideas.

At a high level, the algorithm of \cite{BFHL25} repeatedly cuts balls of radii between $r_0$ and $r_1$, both of which are initially set to be $\Theta(D)$.
Each time, it picks a random center vertex $x$ and samples a radius $r \sim r_0+\Geom(p)$ for some $p$ such that $r \ge r_1$ (implying the algorithm fails) only occurs with probability $\delta=1/\poly\log(n)$ across all cutting steps.
If $r \ge r_1$ does occur, the entire algorithm restarts from the very beginning.
Otherwise, by cutting the radius-$r$ ball around $x$, the algorithm recurses on the subgraph induced by the ball while continuing the cutting procedure on the remaining graph.
Once there are no such balls, it gradually decreases $r_0,r_1$ (so cutting smaller balls) and adjusts $p$ accordingly.
The analysis of (single edge) cutting probabilities crucially relies on the following arguments.
\begin{enumerate}
    \item Since $\delta$ is very small, restarting the algorithm can at most double the cutting probability.
    \item When no restarting happens, the overall cutting probability is upper bounded by a union bound over all cutting steps of individual balls.
    \item For any edge $e=(u,v)$ of length $d_e$ and any ball $B$, by the memoryless property of geometric distributions, it is shown that conditioned on $u$ being included in $B$, $v$ is not included in $B$ (so $e$ is cut by $B$) with probability at most $\O(d_e p)$.
\end{enumerate}
However, none of these arguments is suitable for obtaining our stronger exponential-type bounds.
\begin{enumerate}
    \item A restarting probability of $\delta=1/\poly\log(n)$ is no longer affordable as it can ruin any exponential-type bound we get, which can be as small as $n^{-\O(\epsilon)}$ for some $\epsilon$.
    \item Even if exponential-type bounds can be shown for individual cutting steps using prior analysis, they cannot be straightforwardly combined using union bounds.
        It will not produce exponential-type bounds for overall cutting probabilities.
    \item It turns out that proving exponential-type bounds for individual cutting steps is also problematic.
        Indeed, consider some radius-$r$ ball $B$ around some vertex $x$ and some edge $e=(u,v)$ such that $\dist(x,u) \in [r_1-d_e,r_1)$.
        Conditioned on $u$ included in $B$, $v$ is actually never included in $B$ as $r \in [\dist(x,u),\dist(x,u)+d_e)$.\footnote{$r \ge \dist(x,u)$ because $u$ is included in $B$, and $r < r_1 \le \dist(x,u)+d_e$ because no restarting is allowed anymore due to the first issue.}
        That is, $e$ is always cut.
        Recall that the ultimate goal is to show some small, but strictly positive, probability of not cutting each edge.
\end{enumerate}

Our modified algorithm resolves the first issue by utilizing truncated exponential distributions instead of geometric distributions\footnote{It is the truncation that eliminates restarting. The choice of exponential distributions over geometric distributions is mainly for technical reasons as geometric distributions are not well defined for $p>1$. Other than that, exponential distributions can be viewed as the analog of geometric distributions in the continuous setting, and they both have the memoryless property, which is the most important for the sake of analysis.}.
Regarding the second issue, independence between the cutting probabilities we choose to analyze is critical for enabling us to multiply them together.
To address this as well as the third issue, we propose a new dynamic-programming-style approach in \cref{sec:geom-cutting}.
It completely eliminates the need to analyze conditional cutting probabilities as in \cite{BFHL25}, by a more careful analysis of how the value of $r$, relative to $r_0,r_1$, affects the cutting probabilities.
Intuitively, the scenario above in the third issue should not hurt too much as it implies $r \in [r_1-d_e,r_1)$, which only happens with very small probability because $r_1-r_0 \gg d_e$ and tail probabilities of exponential distributions decay super fast.

As an added bonus, our approach also enjoys the benefit of easy generalization to the case of paths and subsets of edges, with little extra effort.
This is not possible with \cite{BFHL25}'s analysis.
Actually, even one path may end in multiple segments due to cutting a single ball.
It is unclear if there is an easy way to deal with the complicated conditional probabilities arising from this.
In contrast, our proof in \cref{sec:geom-cutting} directly works with the case of subsets of edges and the other two follow as direct corollaries.
We are able to show that there are essentially only two situations to consider for each cutting step, regardless of the subset of edges.
Specifically, we can assume without loss of generality that either all edges are included in the ball and go into recursion, or all edges are outside the ball.
Any other partition of edges into the two parts will never lead to larger cutting probabilities.

We also remark that the independence stated in \cref{thm:dLDDseparated} holds so long as the decomposition is obtained by cutting balls with independently sampled diameters at most $\O(D)$.
This is because for each ball, at least one of the two subsets of edges can never be cut, regardless of the radius, due to the distance assumption.
The claimed independence then follows from the independence between sampled radii.
So far, many known LDD results, including \cite{BernsteinNW22,BringmannCF23,BFHL25}, follow this generic approach, while \cite{Li25} is one notable exception.

\paragraph{No Directed Neighborhoods.}

Regarding neighborhoods, it is not clear what is the proper notion of neighborhood for a vertex in the directed setting. One tempting option would be the union of its out-ball and its in-ball. Alternatively, we may consider the intersection of its out-ball and its in-ball. It turns out that neither of the two definitions makes much sense. Indeed, it is even impossible for every vertex to have a strictly positive probability of its out-ball (or in-ball) not being cut, because of the following promised counterexample, which shows the requirement of not cutting an out-ball may be in conflict with the requirement of having a small diameter.
\begin{lemma}
There exists a strongly connected graph with diameter $n-1$ such that all vertices have distance at most $1$ from some vertex.
\end{lemma}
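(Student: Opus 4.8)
The plan is to construct an explicit graph. I would take vertex set $V = \{v_0, v_1, \dots, v_{n-1}\}$ and put in a long directed path $v_0 \to v_1 \to \cdots \to v_{n-1}$ (each edge of unit length). This path alone gives $\dist(v_0, v_{n-1}) = n-1$, so the diameter is at least $n-1$. To make the graph strongly connected without shrinking this distance, I would add a single "hub" vertex — the cleanest choice is to reuse $v_0$ itself — together with a back-edge $v_i \to v_0$ of length $1$ from every vertex $v_i$ with $i \ge 1$. Now from any $v_i$ one reaches $v_0$ in one step and then any $v_j$ along the forward path, so the graph is strongly connected; and every vertex is within distance $1$ of the hub $v_0$, giving the "distance at most $1$ from some vertex" clause. (If one prefers the hub to be a genuinely new vertex $h$ rather than an existing path vertex, add $h$ with edges $v_i \to h$ for all $i$ and $h \to v_0$; then every vertex has distance $\le 1$ to $h$, and one re-indexes so the path has $n-1$ internal vertices.)

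The steps, in order: (1) define the graph as above; (2) verify strong connectivity by exhibiting, for any ordered pair $(v_i, v_j)$, the walk $v_i \to v_0 \to v_1 \to \cdots \to v_j$ (or just the forward sub-path if $i \le j$); (3) lower-bound the diameter by observing that the only edges leaving $v_i$ for $i \ge 1$ go to $v_{i+1}$ (forward, length $1$) or to $v_0$ (back, length $1$), so any path from $v_0$ to $v_{n-1}$ that ever uses a back-edge returns to $v_0$ and has made no net progress, hence $\dist(v_0, v_{n-1}) = n-1$ exactly, i.e. $\diam(G) \ge n-1$; (4) note $\diam(G) \le n-1$ as well since the $v_i \to v_0 \to v_j$ route has length at most $1 + (n-2) = n-1$, so in fact $\diam(G) = n-1$; (5) observe every vertex $v_i$ satisfies $\dist(v_i, v_0) \le 1$, establishing the last clause.

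I do not expect any serious obstacle here — the statement is a small explicit counterexample and the verification is elementary. The only thing to be mildly careful about is the exact constant: we want diameter exactly $n-1$ (not $n-2$ or $n$), so the index bookkeeping between "$n$ vertices," "$n-1$ path edges," and whether the hub is a path vertex or a fresh vertex needs to be done consistently. The conceptual point the lemma is making — that forcing an out-ball (here the out-ball of $v_0$, or really the in-ball of $v_0$ which contains everything at distance $\le 1$) to stay uncut clashes with any diameter bound below $n-1$ — is already captured by this construction: a $D$-diameter clustering with $D < n-1$ must cut the path $v_0 \to \cdots \to v_{n-1}$, and any such cut separates $v_0$ from some $v_i$ that lies in $v_0$'s in-ball (equivalently, $v_0$ lies in the out-ball of that $v_i$), so some small in-ball/out-ball is necessarily cut.
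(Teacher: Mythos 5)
Your construction is correct and is essentially the paper's: the paper takes a directed cycle $(v_1,\ldots,v_n)$ and adds edges from $v_1$ to every other vertex, which is exactly the edge-reversal of your graph. The only (harmless) difference is orientation --- the paper's hub $v_1$ has its radius-$1$ \emph{out}-ball equal to all of $V$ (i.e.\ $\dist(v_1,v)\le 1$ for all $v$, matching the literal reading of ``distance at most $1$ from some vertex''), whereas your hub $v_0$ has its radius-$1$ \emph{in}-ball equal to all of $V$ (i.e.\ $\dist(v,v_0)\le 1$); since the lemma is invoked to rule out preserving both out-balls and in-balls, either orientation serves.
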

\begin{proof}
Consider a directed cycle $(v_1,v_2,\ldots,v_n)$ and add additional edges from $v_1$ to all other vertices.
Clearly, $v_1$ reaches every vertex using at most one edge.
The diameter of the constructed graph is $n-1$ because the only path from $v_2$ to $v_1$ goes around the entire cycle.
\end{proof}

Instead, suppose one only cares about the intersection of the $d$-radius in-ball and the $d$-radius out-ball.
Note that the intersection is a strongly connected component with diameter at most $2d$.
The desired property is already implied by \cref{thm:uLDDgeneral} since we can consider the round-trip distance $\dist_G(u,v) + \dist_G(v,u)$, which is a symmetric metric and thus undirected LDDs are sufficient.
This has for example been observed and used in \cite{HaeuplerLSW25}.

From the round-trip distance, one can also get a partial clustering in which any two clusters are ``separated'' in the following weak sense: For any two clusters $C, C'$, and any two vertices $u \in C$ and $v \in C'$, either $\dist_G(u,v)>d$ or $\dist_G(v,u)>d$.
However, such guarantee is not of much use because clusters as a whole are not meaningfully separated in either direction. Indeed, there can be vertices $u,u' \in C$ and $v,v' \in C'$ such that $\dist_G(u,v) \ll d$ but $\dist_G(v,u) > d$, while $\dist_G(u',v') > d$ and $\dist_G(v',u') \ll d$.

\paragraph{Directed LDDs with Separation.}

Roughly speaking, to achieve separation in \cref{thm:dLDDseparated}, each time we execute a cut, any vertex within distance $d$ (in the direction of cutting) to the boundary is marked as unclustered.
Crucially, vertex marking is needed on both sides of the boundary to ensure that any short path between unmarked vertices are entirely contained in induced subgraphs after cutting (\cref{lem:dist}).
In \cref{sec:jason-modified}, we adapt and simplify \cite{Li25}'s algorithm to prove \cref{thm:dLDDseparated}.
The construction of \cite{Li25}, based on \cite{CalinescuKR04}, samples a single cutting radius, which is used for all randomly ordered balls in the same iteration, from a uniform distribution.
Due to this nature of the cutting radius, it may be quite surprising that any exponential-type bound would even hold in the first place.

At a high level, the essence of \cite{Li25}'s proof is showing that even though the number $k$ of balls to cut in each iteration can be huge, randomly ordering all balls ensures that the probability of not clustering\footnote{Technically, \cite{Li25}'s proof is for edge cutting, but it can be adapted to clustering with separation using our modification to the algorithm.} any vertex $v$ only grows linearly in $\log k$.
Building upon this, we further observe that to achieve separation $d$, relevant balls must satisfy $\dist(x,v) \in (r_0-d,r_1+d]$ where $x$ is the center of the ball and $(r_0,r_1]$ is the range from which the cutting radius is sampled.
With $k \gg (r_1+d)-(r_0-d)$ balls, each of which has $\dist(x,v)$ be an integer in the range $(r_0-d,r_1+d]$, we are actually able to prove a much tighter bound of the form $1-\exp(-\O(\log k))$ using convexity arguments (\cref{lem:prob-mark}).

\paragraph{Remark on \cite{BFHL25} vs.~\cite{Li25}.}
For the modified \cite{Li25}'s algorithm, we will show that the same probability guarantee as in \cref{thm:dLDDgeneral} also holds for edges, but it is not clear if this is the case for paths or subsets of edges.
In fact, by using the same radius, sampled from a uniform distribution, for all balls in each iteration, the events of being cut may be correlated in complicated ways even for edges that are further apart than $\Omega(D)$, as opposed to the independence stated in \cref{thm:dLDDgeneral}, which is satisfied by the modified \cite{BFHL25}'s algorithm.

We also remark that our modified version of \cite{BFHL25}'s algorithm can be adapted, by the same vertex marking procedure, to provide the separation guarantee as in \cref{thm:dLDDseparated}.
Vertex clustering probabilities can be bounded by a similar argument to the one for cutting probabilities of subsets of edges (each cutting step has $2d$ ``bad'' radii instead of $d_\Gamma$).
Regarding the running time, one has to be careful as vertices are marked on both sides of the boundary.
In other words, not all marked vertices go into recursion.
We have to bound the total contribution of remaining vertices to the running time.
Fortunately, since centers of balls are sampled randomly, a similar argument to the one for the modified \cite{Li25}'s algorithm can show the running time is still subsumed by that of Cohen's algorithm \cite{C97} with high probability.
We choose to present the analysis for the modified \cite{Li25}'s algorithm, as the same type of analysis is also crucial to proving probability bounds there.

\section{Preliminaries}\label{sec:prelims}

We consider directed simple graphs $G = (V, E)$, where $E \subseteq V^2$, with $n = |V|$ vertices and $m = |E|$ edges.
For each vertex $v \in V$, its \emph{degree} $\deg(v)$ is the total number of edges with $v$ being one of the endpoints.
For any $U \subseteq V$, its \emph{volume} $\vol(U)$ is defined as $\sum_{v \in U} \deg(v)$.
Note that $\vol(V)=2m$.
We write $\overline U = V \setminus U$. 
Let $G[U]$ be the subgraph induced by $U$. 
We denote with $\eout(U)$ the set of edges that have their starting point in $U$ and endpoint in~$\overline U$. 
We define $\ein(U)$ symmetrically. 

For each edge $e \in E$, we write $d_e$ for its length.
The distance from vertex $u$ to $v$ is written $\dist_G(u,v)$.
For any $U \subseteq V$, we say that $U$ has \emph{diameter} $D$ if for all pairs of vertices $u, v \in U$, we have $\dist_G(u, v), \dist_G(v, u) \leq D$. A strongly connected component in a directed graph $G$ is a subgraph where for every pair of vertices $u,v$ in the component, there is a path from $u$ to $v$ and vice versa. For a radius $r \geq 0$, the \emph{out-ball} of vertex $v$ is $\Bout_G(v, r) = \set{x \in V \mid \dist_G(v, x) \leq r}$ and the \emph{in-ball} is $\Bin_G(v, r) = \set{y \in V \mid \dist_G(y, v) \leq r}$.
We may omit the subscript if it is clear from context.

\section{Strong Bounds for Modified \cite{BFHL25}'s Algorithm}
\label{sec:geom-cutting}

In this section, we adapt and strengthen \cite{BFHL25} in two ways, proving \cref{thm:dLDDgeneral}. We improve the bound on the probability that an edge is cut from $\O(d_e / D \cdot \log n \log \log n)$ to $1 - \exp(-\O(d_e / D \cdot \log n \log \log n))$, which is stronger when $d_e$ is close to $D$. We also show that analogous bounds even hold for paths and subsets of edges in general.
The following is a more formal version of \cref{thm:dLDDgeneral}.

\begin{theorem}
\label{thm:geom-modified}
Let $\epsilon > 0$ be a sufficiently small constant.
There is a randomized algorithm that, given any directed graph $G=(V,E)$ with $n$ vertices and $m$ edges, and diameter parameter $D=\poly(n)$, returns an ordered (full) clustering $\sigma=(C_1,\ldots,C_k)$ of $V$, satisfying all of the following.
\begin{description}[labelindent=.5cm]
    \item[Edge cutting probability:] Each edge $e=(u,v)$ is \emph{cut}, i.e., $u \in C_i$ and $v \in C_j$ for $i > j$, with probability at most $1-\exp(-d_e/D \cdot \O(\log n \log\log n))$ if $d_e \le \epsilon D / \log\log n$.
    \item[Path cutting probability:] Each path $P$ is \emph{cut}, i.e., some of its edges are cut, with probability at most $1-\exp(-d_P/D \cdot \O(\log n \log\log n))$ if $d_P \le \epsilon D / \log\log n$.
    \item[Subset cutting probability:] Each subset of edges $\Gamma$ is \emph{cut}, i.e., some of its edges are cut, with probability at most $1-\exp(-d_\Gamma/D \cdot \O(\log n \log\log n))$ if $d_\Gamma \le \epsilon D / \log\log n$.
    \item[Independence property:] 
    For any subsets of edges $\Gamma,X,Y$ such that for every vertex $v$, either $B^\pm_G(v,D) \cap \Gamma$ or $B^\pm_G(v,D) \cap (X \cup Y)$ is empty,
    the above bound for $\Pr(\text{$\Gamma$ is cut})$ also holds for $\Pr(\text{$\Gamma$ is cut}\mid \text{all of $X$ is cut $\wedge$ none of $Y$ is cut})$.
    \item[Diameter property:] Each $C_i$ is strongly connected and has diameter at most $D$.
    \item[Running time:] The algorithm runs in $\O(m \log^5 n \log\log n)$ time with high probability.
\end{description}
\end{theorem}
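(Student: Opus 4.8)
The plan is to retain the recursive ball-carving skeleton of \cite{BFHL25} but replace the geometric radius distribution by a \emph{truncated} exponential $\TrunExp$ supported on a bounded interval $[r_0,r_1]$, so the carved radius is never too large and the algorithm never restarts. Concretely, following Cohen's clustering algorithm~\cite{C97}, each scale repeatedly samples a random center $x$ from the current graph, draws $r\sim\TrunExp$ on $[r_0,r_1]$ (rate parameter $p$), carves $\Bout(x,r)$ off as the next (ordered) cluster-to-be, recurses on $G[\Bout(x,r)]$, and continues the sweep on the remainder; once the residual graph is ball-free the scale is refined ($r_0,r_1$ shrink, $p$ is rescaled), exactly as in \cite{BFHL25}, over $\O(\log\log n)$ scales with an $\O(\log n)$-deep recursion of geometrically shrinking volumes. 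The \textbf{diameter property} is then immediate: every carved ball has radius $\le r_1=\O(D)$, so after the scales shrink all the way down one obtains strongly connected clusters of diameter $\le D$, exactly as argued in \cite{BFHL25}.

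The heart of the proof is the \textbf{subset cutting probability}; the edge bound is the case $|\Gamma|=1$ and the path bound the case $\Gamma=E(P)$. I would establish it by a potential / dynamic-programming induction over the recursion tree: for every recursive instance $(H,r_0,r_1,p)$ and every $\Gamma\subseteq E(H)$, the probability that $\Gamma$ survives uncut in the decomposition of $H$ is at least $\exp(-c_0\,\ell\,d_\Gamma/D)$, where $\ell$ is the number of ball-carvings remaining below this instance. The inductive step analyzes one ball-carving with center $x$ and $r\sim\TrunExp$ on $[r_0,r_1]$. The key structural observation --- which \cite{BFHL25}'s conditional-probability analysis cannot supply --- is that, conditioned on $x$, as $r$ sweeps from $r_0$ to $r_1$ the edges of $\Gamma$ enter $\Bout(x,r)$ in order of distance from $x$, and each $e\in\Gamma$ is cut \emph{by this ball} only while $r$ lies in a window of width $\le d_e$. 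Since the total window width is $d_\Gamma\ll r_1-r_0$ and the exponential tail is thin, memorylessness shows that the probability of \emph{any} partial swallow of $\Gamma$ is dominated by the two extreme outcomes --- either all of $\Gamma$ ends up inside $\Bout(x,r)$, so we recurse on $G[\Bout(x,r)]$ with the same $\Gamma$, or all of $\Gamma$ stays outside and moves intact to the residual instance --- so the induction only needs to track those two cases. Each carving $\Gamma$ encounters then costs a factor $1-\O(d_\Gamma/D)$ in its survival probability, and this loss, accumulated along the $\O(\log n)$-deep, $\O(\log\log n)$-scale recursion exactly as accounted in \cite{BFHL25}, telescopes to $\exp(-\O(d_\Gamma/D\cdot\log n\log\log n))$; the hypothesis $d_\Gamma\le\eps D/\log\log n$ is precisely what keeps each factor bounded away from $0$ throughout.

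For the \textbf{independence property}, note that a carved ball has radius $\le r_1=\O(D)$, so whether an edge $e$ is cut by a given ball is a deterministic function of that ball's center and radius, and it is non-constant only when $e$ meets $\Bout(x,D)\cup\Bin(x,D)$. If $\Gamma$ and $X\cup Y$ satisfy the stated separation --- for every $v$, one of $B^\pm_G(v,D)\cap\Gamma$ and $B^\pm_G(v,D)\cap(X\cup Y)$ is empty --- then for every ball at least one of ``$\Gamma$ meets this ball'' and ``$X\cup Y$ meets this ball'' fails regardless of the radius, so once all centers are fixed the radii that can influence $\Gamma$ form a set disjoint from those that can influence $X\cup Y$. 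As all radii are drawn independently, conditioning on any event describing whether $X$ is cut and $Y$ is uncut leaves the conditional law of the $\Gamma$-relevant randomness unchanged, and the subset bound survives the conditioning verbatim. Finally, the \textbf{running time} follows the accounting of \cite{BFHL25,C97}: each scale is an exponential-start-time clustering in the style of Cohen's algorithm, truncation only affects sampling by an $\O(1)$ factor and never triggers a restart, the recursion is $\O(\log n)$-deep with geometrically decreasing volume, and with $\O(\log\log n)$ scales the total is $\O(m\log^5 n\log\log n)$ with high probability.

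The step I expect to be the main obstacle is the single-ball structural reduction of the second paragraph: making precise, and quantitatively tight, the claim that no partial swallow of $\Gamma$ is worse than the two extreme cases, and verifying that the resulting $1-\O(d_\Gamma/D)$ factors compose across all levels without the error terms overrunning the target exponent. This is exactly where \cite{BFHL25}'s ``third issue'' lives --- an edge at distance in $[r_1-d_e,r_1)$ from $x$ is \emph{always} cut by that ball --- and the delicate point is to charge such adversarial configurations against the tiny probability (at most $\O(d_e/D)$, by the truncated-exponential tail bound) that $r$ even lands in the relevant window, and to check that this bookkeeping telescopes cleanly through the recursion.
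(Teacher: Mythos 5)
Your proposal follows essentially the same route as the paper: truncated-exponential radii on $[r_0,r_1)$ with no restarts, a dynamic-programming induction on the survival probability of $\Gamma$ indexed by the position in the sequence of ball-carvings, reduction of each carving step to the two extreme outcomes (all of $\Gamma$ swallowed into the recursion or all of it left outside), and independence via the fact that each ball's radius is at most $\O(D)$ and sampled independently. The step you flag as the main obstacle --- that no partial swallow of $\Gamma$ is worse than the two extremes --- is exactly what the paper supplies, via an exchange argument: it shows that swapping a cut edge between adjacent radius values changes the objective by a fixed ratio $\alpha$ independent of the position, so the minimizer over all cut-patterns is one of the two boundary configurations, after which both are verified against an explicit potential $Q_{i,j}(m,d)$ whose $\log m$ dependence makes the per-level losses telescope.
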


\subsection{Overview of \cite{BFHL25}'s Algorithm and Our Modification}

Before going into our modification, let us quickly summarize the algorithm of \cite{BFHL25} and describe the slight modifications we need to make the strengthened analysis work. In the following, we ignore many subtle computational and other issues irrelevant to our analysis and focus more on intuition. A more detailed description can be found in \cref{alg:cutting-lemma,alg:bernhard-modified}.

\paragraph{Algorithm summary.}
The algorithm of \cite{BFHL25} repeatedly picks a vertex $v$, samples a radius $r$ and then carves out the ball $B = \Bout(v, r)$ from the graph, by removing its vertices from the vertex set and cutting the out-edge set $\eout(B)$. Then it proceeds by carving more balls in the remaining graph, while also calling itself recursively on $B$.

Logistically, the algorithm is split into two parts: the \emph{cutting lemma} (Lemma 6.2 of \cite{BFHL25}) repeatedly carves out balls of a certain kind (specified by the caller) from $G$, until no such ball remains with high probability. It returns the set of cut edges and the union of all balls that have been carved. The main part of the algorithm repeatedly calls the cutting lemma on the remaining graph with different parameters. It also cleans up the returned graph and runs itself recursively on the returned balls.

The analysis of the edge cutting probability can be roughly split into two parts: first, one bounds the edge cutting probability in one invocation of the cutting lemma. Afterwards, one bounds the edge cutting probability over all the invocations of the cutting lemma that an edge participates in throughout the algorithm.

The most important part of the algorithm is how $v$ and $r$ are chosen in the cutting lemma. The following description is sufficient for our purposes. The cutting lemma receives two parameters, $r_0$ and $r_1$. Each ball center $v$ is sampled at random from the set of ``viable'' center vertices, then $r$ is sampled as $r \sim r_0 + \Geom(p)$, where $p$ is chosen such that $r_0 \le r \le r_1$ with a sufficiently high probability. Whenever $r > r_1$, this invocation of the cutting lemma fails, and the whole main algorithm restarts.

\paragraph{Our change.}
Our only change to the algorithm is as follows: we change $p \gets 2p$ and sample $r \sim r_0 + \lfloor\Exp(p)\rfloor$ with rejection until $r < r_1$.
In other words, we sample from a truncated exponential distribution.
The cutting procedure never fails and the main algorithm never restarts.

\begin{algorithm}[t]
\caption{Modified implementation of the cutting procedure of \cite{BFHL25}.}
\label{alg:cutting-lemma}
\begin{enumerate}[label=\arabic*.]
    \item Run Cohen's algorithm (with parameter $\epsilon = \frac18$) on $G$ to compute approximations~$b_1(v)$ satisfying that $\tfrac78 \cdot |\Bout(v, r_1)| \leq b_1(v) \leq \tfrac98 \cdot |\Bout(v, r_1)|$. Mark all nodes $v$ satisfying that $b_1(v) \leq \frac98 \cdot \frac{m}{s_1}$ as \emph{good} and all other nodes as \emph{bad.}
    \item Repeat the following steps $\log n + 1$ times:
    \begin{enumerate}[label*=\arabic*.]
        \item Repeat the following steps $s_0 \cdot 100 \log n$ times:
        \begin{enumerate}[label*=\arabic*.]
            \item Among the nodes that are marked good, pick a uniformly random node $v$.
            \item Test whether $|\Bout(v, r_0)| \geq \frac{1}{2} \cdot \frac{m}{s_0}$, and otherwise continue with the next iteration of 2.1.
            \item \textcolor{red}{Repeatedly sample $r \sim r_0 + \lfloor\Exp(p)\rfloor$} (where~\smash{$p = \frac{\textcolor{red}2\ln(2s_0 / \delta)}{r_1 - r_0}$}) \textcolor{red}{until $r < r_1$}.
            \item Compute $B^+(v, r)$. Update $S \gets S \cup \eout(B^+(v, r))$, mark the nodes in $B^+(v, r)$ as bad, and remove these nodes from $G$.
        \end{enumerate}
        \item Run Cohen's algorithm (with parameter $\epsilon = \frac18$) on $G$ to compute approximations $b_0(v)$ satisfying that $\tfrac78 \cdot |\Bout(v, r_0)| \leq b_0(v) \leq \tfrac98 \cdot |\Bout(v, r_0)|$. Mark all nodes $v$ satisfying that $b_0(v) < \frac78 \cdot \frac{m}{s_0}$ as bad.
    \end{enumerate}
    \item Let $R$ denote the set of remaining nodes in $G$, and return $(S, R)$.
\end{enumerate}
\end{algorithm}

The modified cutting procedure is given in \cref{alg:cutting-lemma}.
For completeness, we also restate the main algorithm in \cref{alg:bernhard-modified}. For the rest of this section, we fix the following parameters, which are the same as in \cite{BFHL25}:

\begin{itemize}
    \item $L = \ceil{\log\log m} + 1$,
    \item \smash{$\delta = \frac{1}{\log^{10} m}$},
    \item $r_0 := 0$ and \smash{$r_\ell := r_{\ell-1} + \frac{D}{2^{L-\ell+3}} + \frac{D}{4 L}$} (for $1 \leq \ell \leq L$),
    \item \smash{$s_\ell := \min(2^{2^{L-\ell}}, m+1)$} (for $0 \leq \ell \leq L$).
\end{itemize}

We also analyze the algorithm via a different approach, albeit with a similar general idea. Instead of analyzing the cutting lemma and the main algorithm separately, we establish a single recursion that deals with all invocations of the cutting lemma. 
As a bonus, we manage to get rid of the need to analyze conditional probabilities.

We remark that for a slightly easier analysis, during the recursive calls of the algorithm where we have $m' < m$ edges, we keep using the same value of $\delta = (\log m)^{-10}$ instead of changing it to $\delta' = (\log m')^{-10}$. We set the parameters of our algorithm such that all failure events happen with $1/\poly(m)$ probability, instead of $1/\poly(m')$. This only changes some $\log m'$ factors to $\log m$ in the time complexity of the algorithm. Concretely, we need to explicitly use the initial $m$ and $n$ in two places: for Cohen's algorithm, and in step 2.1 of \cref{alg:cutting-lemma}, so that they all succeed ``with high probability'' $1-1/\poly(m)$.

\begin{algorithm}[t]
\caption{The directed LDD algorithm of \cite{BFHL25}.} \label{alg:bernhard-modified}
\begin{enumerate}
    \item Initially let $S \subseteq E$ be the set of edges of length at least $\frac{D}{4L}$, and remove these edges from $G$
    \item For $\ell \gets L, \dots, 1$:
    \begin{enumerate}
        \item[2.1.] Run \cref{alg:cutting-lemma} on $G$ with parameters $\delta, r_{\ell-1}, r_\ell, s_{\ell-1}, s_{\ell}$, and let $S^+_\ell, R^+_\ell$ denote the resulting sets.
        \item[2.2.] Compute the strongly connected components in $(G \setminus S^+_\ell)[V \setminus R^+_\ell]$. Recur on each such component and add the recursively computed cut edges to $S$.
        \item[2.3.] Update $G \gets G[R^+_\ell]$ and $S \gets S \cup S^+_\ell$.
        \item[2.4.] Run \cref{alg:cutting-lemma} on $\mathrm{rev}(G)$ with parameters $\delta, r_{\ell-1}, r_\ell, s_{\ell-1}, s_{\ell}$, and let $S^-_\ell, R^-_\ell$ denote the resulting sets.
        \item[2.5.] Compute the strongly connected components in $(G \setminus S^-_\ell)[V \setminus R^-_\ell]$. Recur on each such component and add the recursively computed cut edges to $S$.
        \item[2.6.] Update $G \gets G[R^-_\ell]$ and $S \gets S \cup S^-_\ell$.
    \end{enumerate}
\end{enumerate}
\end{algorithm}

Before focusing on the cutting probability, let us check that the rest of the original analysis still applies after our modification, including its correctness and its running time complexity. For this we need to look at both changes that we do:
\begin{itemize}
\item Changing the way $r$ is sampled. Clearly, the absence of restarts cannot increase the algorithm's time complexity.
All the analysis except for the edge cutting probability does not rely on how $r$ is sampled, except that $r \in [r_{\ell - 1}, r_\ell)$, which it still is after our change.
\item Changing some factors from $\log n'$ and $\log m'$ (where $n'$ and $m'$ are the number of nodes and edges in the current recursive call) to $\log n$ and $\log m$ (where $n$ and $m$ are the number of nodes and edges in the original graph). This does not worsen the time complexity bound, as the original analysis anyway eventually has $\log m$ and $\log n$ factors.
\end{itemize}

More details can be found in \cite{BFHL25}.

\subsection{Analysis of Modified \cite{BFHL25}'s Algorithm}

We start the analysis by investigating the truncated exponential distribution $r$ is sampled from:
\begin{definition}
For $p > 0$, and integers $a < b$, the truncated exponential distribution $\TrunExp(p, a, b)$ is a distribution defined on $x \in \{a, a + 1, \ldots, b - 1\}$ as follows:
\[
\Pr_{\mathbf x \sim \TrunExp(p, a, b)}(\mathbf x = x) = \frac{\exp(-p(x - a)) - \exp(-p(x - a + 1))}{1 - \exp(-p(b - a))}.
\]
For $x$ outside of this range, the probability is $0$. We also define $\varepsilon = \exp(-p(b - a)/2)$. Then we can equivalently write:
\[
\Pr_{\mathbf x \sim \TrunExp(p, a, b)}(\mathbf x = x) = \frac{\exp(-p(x - a)) - \exp(-p(x - a + 1))}{1 - \varepsilon^2}.
\]
Equivalently, $\TrunExp(p, a, b)$ is the distribution that one gets when repeatedly sampling $x \sim a + \lfloor\Exp(p)\rfloor$ and keeping the first outcome that satisfies $x \in [a, b)$.
\end{definition}

The following properties will be useful later in the analysis:

\begin{remark}
\label{lem:truncated-exp-properties}
Let $\mathbf x \sim \TrunExp(p, a, b)$. For $a \le x < x + c \le b$, we have
\[
\Pr(\mathbf x \in [x, x + c)) = \frac{e^{-p(x - a)} - e^{-p(x + c - a)}}{1 - \varepsilon^2}
= e^{-p(x - a)}\cdot\frac{1 - e^{-pc}}{1 - \varepsilon^2}.
\]
Furthermore, if $c \ge (b - a) / 2$, we have
\[
\frac{1 - e^{-pc}}{1 - \varepsilon^2}
\ge
1 - e^{-pc}
\ge
1 - e^{-(b - a) / 2}
=
1-\varepsilon.
\]
and thus
\[
\Pr(\mathbf x \in [x, x + c)) \ge e^{-p(x - a)} \cdot (1 - \varepsilon) \ge \exp(-p(x - a) - 2\varepsilon),
\]
for sufficiently small $\varepsilon$.
Finally, for $a \le x < b - 1$, we have
\[
\Pr(\mathbf x = x + 1) = \frac{e^{-p(x + 1 - a)} - e^{-p(x + 2 - a)}}{1 - \varepsilon^2}
= 
e^{-p} \cdot \frac{e^{-p(x - a)} - e^{-p(x + 1 - a)}}{1 - \varepsilon^2}
= 
e^{-p} \cdot \Pr(\mathbf x = x).
\]
\end{remark}

For iteration $\ell$ of \cref{alg:bernhard-modified}, in the invoked \cref{alg:cutting-lemma}, we sample $r \sim \TrunExp(p_\ell, r_{\ell-1}, r_\ell)$ for $p_\ell = \frac{2\ln(2s_{\ell - 1}/\delta)}{r_\ell-r_{\ell - 1}}$.

In \cite{BFHL25}, it is shown that for iteration $\ell$ of \cref{alg:bernhard-modified}, each invocation of \cref{alg:cutting-lemma} cuts at most $2s_{\ell-1}$ balls.
For the purposes of our analysis, assume that exactly $2s_{\ell - 1}$ balls are cut, by padding with dummy balls of zero size.

The main idea of the analysis is simple: we have a lower bound on the probability that, assuming the algorithm has made some number of cuts without cutting our edge set, it will not cut the edge set later. We analyze this probability directly by looking at the current ball: with some probability the edge set survives the cut by this ball, a part of it ends up on the inside of the ball and it survives there, and the rest ends up on the outside of the ball and it survives there too. An important step in the analysis is to show that we do not need to analyze all the possible ways the edge set can be split, as the worst case happens either when all of the edge set ends up on the inside or on the outside of the cut, in a very specific way. After we show this, the rest of the proof boils down to showing a straightforward recurrence relation.

\begin{definition}
For all $i \in \{0, \ldots, 2L\}$, $0 \le j \le 2s_{\lceil L-i/2\rceil - 1}$, $m \ge 1$, $d \ge 0$, we define the quantity $\suc_{i, j}(m, d)$ as follows:

Suppose \cref{alg:bernhard-modified} is run on any graph $G$ with at most $m$ edges.
\cref{alg:cutting-lemma} has been invoked $i$ times\footnote{We count the invocations on $G$ and $\mathrm{rev}(G)$ in one iteration as two invocations.}, and in the $(i+1)$-th invocation, we have so far cut $j$ balls.
Let $\Gamma$ be any set of edges with total length $\le d$ that have not yet been cut or gone into recursion.
Conditioned on this, let $p$ be the probability that we will not cut $\Gamma$ in the rest of the algorithm. Then $\suc_{i, j}(m, d)$ is defined as the minimum such $p$ over all $G$, $\Gamma$ and possible runs of the algorithm that have led to this state.

Furthermore, define
\begin{align*}
Q_{i, j}(m, d) &=\exp\left(-\frac{d}{D} \cdot 640L \log m - 3L \log_{4/3} m \cdot \delta-\left(2L - i - \frac{j}{2s_{\ell - 1}}\right)\delta\right),
\end{align*}
where $\ell = \lceil L - i/2\rceil$. As a special case, $Q_{i, j}(m, 0) = 1$.
\end{definition}

\begin{lemma}
\label{lem:exp-cutting-induction}
For any $i,j,m$, and $d \le \frac{D}{8L}$, we have $\suc_{i, j}(m, d) \ge Q_{i,j}(m, d)$.
\end{lemma}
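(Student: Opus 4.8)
The proof will proceed by backward induction on the pair $(i,j)$ in lexicographic order — that is, we induct from the very end of the algorithm back to the beginning. The base case is when the algorithm has terminated (all $2L$ invocations of \cref{alg:cutting-lemma} are done, or equivalently, there are no further balls to cut and no further recursion): here $\suc(m,d) = 1$ since $\Gamma$ has survived, and $Q \le 1$ always, so the bound holds. For the inductive step, we are at state $(i,j)$ and consider the next ball $B = \Bout(v,r)$ to be carved, where $r \sim \TrunExp(p_\ell, r_{\ell-1}, r_\ell)$ with $\ell = \lceil L - i/2\rceil$. We need to relate $\suc_{i,j}(m,d)$ to $\suc_{i,j+1}(\cdot,\cdot)$ (when the ball carving does not finish the invocation) and to $\suc_{i+1,0}(\cdot,\cdot)$ (when it does, or when we pass into recursion on $B$), using the recursion depth accounting baked into $Q$.

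The heart of the argument, as the overview promises, is a \emph{worst-case split lemma}: for the current ball $B$, the set $\Gamma$ gets partitioned into the part $\Gamma_{\mathrm{in}}$ with both endpoints inside $B$ (which goes into recursion on $B$), the part $\Gamma_{\mathrm{out}}$ with both endpoints outside (which stays in the remaining graph), and the part that is cut by $\eout(B)$ (which kills $\Gamma$). I would show that, for a fixed value of $r$, the probability of survival is minimized — over all ways the adversary can place the edges of $\Gamma$ relative to $v$ — either when \emph{all} of $\Gamma$ is forced outside $B$, or when all of $\Gamma$ is just barely pulled inside $B$ in the "dangerous" configuration where each edge $e=(a,b)\in\Gamma$ has $\dist(v,a)$ as large as possible subject to $b$ still being in $B$ (i.e.\ $\dist(v,a) \approx r - d_e$). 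The reason only these two extremes matter: conditioning on $r$, survival inside the ball and survival outside the ball are independent events (disjoint subgraphs, independent future randomness), and the "cut by $\eout(B)$" bad event only depends on how many edges straddle the boundary; splitting $\Gamma$ across the two sides can only help because it spreads the length-budget $d$ over two independent sub-instances whose survival probabilities are both of the form $\exp(-c\cdot d_{\mathrm{part}}/D)$, and $\exp$ is multiplicative so there's no gain but also — crucially — the boundary-cut risk is what the adversary wants to maximize, and that is maximized precisely in the all-inside-dangerous or all-outside configurations. I would make this precise by a short exchange/convexity argument.

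Once the split lemma is in hand, the remaining work is to set up and verify the recurrence. Writing $p = p_\ell$ and $\varepsilon_\ell = \exp(-p_\ell(r_\ell - r_{\ell-1})/2)$, the key quantitative inputs are \cref{lem:truncated-exp-properties}: the probability that $r$ avoids any length-$d_e$ "bad window" near a given distance is $\ge 1 - O(d_e p_\ell)$ with $p_\ell = \Theta(L\log(s_{\ell-1}/\delta)/D) = \Theta(L\log m / D)$ after plugging in the parameters (this is where the $640 L \log m$ in the exponent of $Q$ comes from — a constant times $L\log m$ per length-unit, times the number of cutting levels), and the memoryless tail bound $\Pr(r \in [x,x+c)) \ge \exp(-p_\ell(x-r_{\ell-1}) - 2\varepsilon_\ell)$ that lets me peel one ball at a time while only losing an additive $O(\varepsilon_\ell) = O(\delta/s_{\ell-1})$ per ball — these additive losses, summed over the $2s_{\ell-1}$ balls in an invocation and over levels and recursion depth, are exactly the $-(2L - i - j/(2s_{\ell-1}))\delta - 3L\log_{4/3}m\cdot\delta$ slack terms in $Q$. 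With $d \le D/(8L)$ ensuring $d \cdot p_\ell = O(1/8)$ so all the "bad window" probabilities are bounded away from $1$, the recurrence $Q_{i,j} \le (\text{per-ball survival factor}) \cdot Q_{i,j+1}$ (and the analogous step into recursion, where the recursion-depth counter $3L\log_{4/3}m$ absorbs the at-most-$\log_{4/3}m$ recursive levels since each level shrinks the edge count by a $3/4$ factor) closes by the inductive hypothesis.

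The main obstacle I anticipate is the worst-case split lemma — specifically, ruling out "clever" intermediate splits where the adversary puts some of $\Gamma$ inside and some outside so as to simultaneously exploit a dangerous near-boundary configuration inside \emph{and} keep a nontrivial mass outside. The delicate point is that the inside-instance after recursion is on a graph where distances may have grown, so the length budget $d_{\mathrm{in}}$ there must still satisfy $d_{\mathrm{in}} \le D/(8L)$ for the inductive hypothesis to apply at the next level; one has to check the parameters $r_\ell$ were chosen so that diameters stay bounded by $D$ and the recursion is well-founded. I expect this to go through because $\sum_\ell (r_\ell - r_{\ell-1}) \le D/2 + D/4 = 3D/4 < D$ by the choice of $r_\ell$, but verifying that the length budget and the additive $\delta$-slack bookkeeping line up across the inside/outside/recursion branches without double-counting is the part that requires genuine care rather than routine calculation.
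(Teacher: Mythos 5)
Your plan matches the paper's proof essentially step for step: the same backward induction over the algorithm's state $(i,j)$ (together with $m$ and $d$), the same reduction of the adversary's placement of $\Gamma$ to the two extremal configurations (all of $\Gamma$ inside the ball with the bad radii at the near end of $[r_{\ell-1},r_\ell)$, or all of $\Gamma$ outside with the bad radii at the far end) via an exchange argument whose swap ratio is position-independent, and the same quantitative bookkeeping in which the truncated-exponential tail bounds supply the $640L\log m$ main term and the per-ball $O(\delta/s_{\ell-1})$ losses are absorbed by the slack terms of $Q$. The one worry you flag — that distances grow inside the recursed ball — is moot, since the budget $d_\Gamma$ is a sum of edge lengths and can only decrease when $\Gamma$ is split.
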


In the rest of this section, we prove \cref{lem:exp-cutting-induction}, which will be used for the proof of \cref{thm:dLDDgeneral}, by induction with increasing $m, d$ and decreasing $i, j$. For any $i$, we have $\ell = \lceil L - i/2 \rceil$, which is the value of $\ell$ used by the algorithm for the $(i+1)$-th invocation of the cutting lemma.

Let us start with the base cases: the claim holds trivially for $m \le 1$ or $d = 0$ or $(i, j) = (2L, 0)$, since then $\suc_{i, j}(m, d)$ is in fact 1, as the algorithm terminates without any cut.
Moreover, for $j = 2s_{\ell - 1}$, we have $\suc_{i,j}(m, d) = \suc_{i + 1, 0}(m, d)$ and $Q_{i,j}(m, d) = Q_{i + 1, 0}(m, d)$ and we can thus rely on the inductive hypothesis.

Now, let us have $m > 1$, $d > 0$, $i < 2L$, $j < 2s_{\ell - 1}$. Without loss of generality, we only consider the case of cutting out-balls. Fix the edge set $\Gamma$ of total length $d$ that we want to preserve. Fix $v$ the center of the $(j+1)$-th ball. We sample the radius $r \in [r_{\ell-1}, r_\ell)$. The set $\Gamma$ determines which values of $r$ are ``good'' or ``bad'', that is, if we sample this particular $r$ and make a cut, whether the set $\Gamma$ will survive or not. This can be formalized as follows: let $\ved$ be a vector indexed by $r \in [r_{\ell - 1}, r_\ell)$ such that $d_r$ counts the total number of edges of $\Gamma$ that would be cut by $\Bout(v, r)$, that is, edges $e = (x, y)$ with $\dist_G(v, x) \le r < \dist_G(v, y)$. Note that a value of $r$ is good if and only if $d_r = 0$, i.e., $r \notin \supp(\ved)$. We have $\|\ved\|_1 = d$ and $|\supp(\ved)| \le d$. The probability that $\Gamma$ survives is equal to
\[
\sum_{r \notin \supp(\ved)} \Pr(\mathbf r = r \land \text{$\Gamma$ survives the rest of the algorithm}).
\]
Let us express one term of this sum. Assuming the sampled radius is $r \notin \supp(\ved)$, let $B = \Bout(v, r)$ be the ball being cut. The set $\Gamma$ gets split into two: $\Gamma \cap B$, of total length $d_{< r} = \sum_{r' < r} d_{r'}$, ends up inside $B$, while $\Gamma \setminus B$ of total length $d_{>r} = d - d_{<r}$ ends up on the outside. $\Gamma$ will survive the rest of the algorithm if and only if $\Gamma \cap B$ survives the recursion and $\Gamma \setminus B$ survives the rest of the algorithm. It is shown in \cite{BFHL25} that the recursive call is on a graph with at most $\frac32 \cdot \frac{m}{s_\ell}$ edges. Thus, due to the independence of these two events, they occur with probability at least
\[
\suc_{0, 0}\left(\frac{3m}{2s_\ell}, d_{<r}\right) \cdot \suc_{i,j+1}(m, d_{>r})
\]
Summing over $r$, we know that the probability of $\Gamma$ surviving the rest of the algorithm is at least
\[
\sum_{r \notin \supp(\ved)} 
\Pr(r) \cdot
\suc_{0, 0}\left(\frac{3m}{2s_\ell}, d_{<r}\right) \cdot \suc_{i,j+1}(m, d_{>r})
.
\]
Fix $i, j, m, d$ and denote the above expression by $\ey(\ved)$.
Now necessarily $\suc_{i, j}(m, d) \ge \min_\ved \ey(\ved)$. To see this, take the minimizer of $\suc_{i,j}(m, d)$: that is, the $G$, $\Gamma$ and the set of cuts made by the algorithm so far for which the probability of $\Gamma$ surviving the rest of the algorithm is exactly $\suc_{i, j}(m, d)$. Then take the associated $\ved$ and get that $\suc_{i, j}(m, d) \ge \ey(\ved) \ge \min_\ved \ey(\ved)$ by the above analysis.

Now define $\ex$ as $\ey$ with $P$ replaced by $Q$:
\[
\ex(\ved) = \sum_{r \notin \supp(\ved)} 
\Pr(r) \cdot
Q_{0, 0}\left(\frac{3m}{2s_\ell}, d_{<r}\right) \cdot Q_{i,j+1}(m, d_{>r})
.
\]

We invoke the inductive hypothesis on the inner terms and get $\ey(\ved) \ge \ex(\ved)$.
The proof of \cref{lem:exp-cutting-induction} will be completed if we prove that $\ex(\ved) \ge Q_{i,j}(m, d)$ no matter what $\ved$ is, since this proves $\suc_{i, j}(m, d) \ge \min_\ved \ey(\ved) \ge \min_\ved \ex(\ved) \ge Q_{i, j}(m, d)$ as needed.
The rest of the proof will be split in two parts: first we prove that we can without loss of generality assume that $\ved$ is of one of two special vectors, and then we prove the inequality for both of them.

\paragraph{Finding the minimizer of $\ex(\ved)$.}
Define $\ved^\leftarrow$ as the 0-1 vector with $\supp(\ved^\leftarrow) = \{r_{\ell - 1}, \ldots, r_{\ell - 1} + d - 1\}$. Similarly, define $\ved^\rightarrow$ as the 0-1 vector with $\supp(\ved^\rightarrow) = \{r_\ell - d, \ldots, r_\ell - 1\}$. We show that one of $\ved^\leftarrow,\ved^\rightarrow$ is always a minimizer of $\ex(\ved)$. We show this in two steps: first we show that a non-0-1 vector is never a minimizer, then we show that any 0-1 vector can be gradually transformed into either $\ved^\leftarrow$ or $\ved^\rightarrow$ by a series of local changes that do not increase the value of $\ex(\ved)$.

First assume $\ved$ is not a 0-1 vector. Let $i$ be any index such that $d_i > 1$ and let $j$ be the closest index to $i$ such that $d_j = 0$ (so all coordinates between $d_i,d_j$ are nonzero). Construct a new $\ved'$ such that $d'_i = d_i - 1$, $d'_j = 1$ and $d'_k = d_k$ for all other $k$. Note that $\supp(\ved') = \supp(\ved) \setminus \{j\}$ and that for all $r \in \supp(\ved')$, we have $d_{<r} = d'_{<r}$ and $d_{>r} = d'_{>r}$. In other words, the terms present in both $\ex(\ved)$ and $\ex(\ved')$ are the same, except that $\ex(\ved')$ lacks the term for $r = j$. Thus, $\ex(\ved) \ge \ex(\ved')$.

Now we know that the minimizing $\ved$ is a 0-1 vector. Consider two 0-1 vectors $\ved, \ved'$ that are nearly identical, except that for some $r$, we have $d_r = d'_{r + 1} = 0$ and $d_{r + 1} = d'_r = 1$. We can write:
\begin{align*}
\ex(\ved) - \ex(\ved')
=&
\Pr(r) \cdot
Q_{0, 0}\left(\frac{3m}{2s_\ell}, d_{<r}\right) \cdot Q_{i,j+1}(m, d_{>r})
\\&
-
\Pr(r + 1) \cdot
Q_{0, 0}\left(\frac{3m}{2s_\ell}, d_{<r}+1 \right) \cdot Q_{i,j+1}(m, d_{>r} - 1),
\end{align*}
where we used $d'_{<r + 1} = d_{<r} + 1$ and $d'_{>r+1} = d_{>r} - 1$, and the fact that all the other terms cancel out.
Now let
\[
\alpha =
\frac{
\Pr(r) \cdot
Q_{0, 0}\left(\frac{3m}{2s_\ell}, d_{<r}\right) \cdot Q_{i,j+1}(m, d_{>r})
}{
\Pr(r + 1) \cdot
Q_{0, 0}\left(\frac{3m}{2s_\ell}, d_{<r}+1 \right) \cdot Q_{i,j+1}(m, d_{>r} - 1),
}.
\]
Clearly, $\ex(\ved) > \ex(\ved')$ if and only if $\alpha > 1$.
After we expand all terms and cancel, we are left with
\[
\alpha = \exp(p_\ell) \cdot \frac{\exp(-\frac{1}D \cdot 640L\log m)}{\exp\left(-\frac1D \cdot 640L\log \left(\frac{3m}{2s_\ell}\right)\right)}.
\]
where we use \cref{lem:truncated-exp-properties} to show that
\[
\frac{\Pr(\textbf r = r)}{\Pr(\textbf r = r + 1)}
=
\exp(p_\ell).
\]

Note that the value of $\alpha$ does not depend on $r$, $\ved$, or $\ved'$. Suppose $\alpha \ge 1$. Then $\ex(\ved) \ge \ex(\ved')$. We can now repeat the same process by finding a subsequence $(0, 1)$ in $\ved'$ and changing it to $(1, 0)$ to construct $\ved''$. Since $\alpha$ does not depend on $r$, $\ved'$ or $\ved''$, the same analysis concludes that $\ex(\ved') \ge \ex(\ved'')$. We can continue this process until there is no $(0,1)$ subsequence in the vector, that is, until we end up with $\ved^\leftarrow$. What we have shown is that assuming $\alpha \ge 1$, we have $\ex(\ved) \ge \ex(\ved^\leftarrow)$, and this holds for all $\ved$. Analogously, if instead $\alpha < 1$, we can do the same operation in reverse, changing $(1, 0)$-subsequences to $(0, 1)$ until there are no more $(1, 0)$-subsequences in the vector. This way, we show that if $\alpha < 1$, then $\ex(\ved) > \ex(\ved^\rightarrow)$ for all $\ved$.

\paragraph{Bounding $Q_{i,j}(m, d)$.}
All that remains now is to show that $\ex(\ved^\leftarrow) \ge Q_{i, j}(m, d)$ and $\ex(\ved^\rightarrow) \ge Q_{i, j}(m, d)$. We start with the former. Recall that $\supp(\ved^\leftarrow) = \{r_{\ell - 1}, \ldots, r_{\ell - 1} + d - 1\}$. For $r \notin \supp(\ved^\leftarrow)$, we have $d^\leftarrow_{<r} = d$, $d^\leftarrow_{>r} = 0$. The whole expression thus simplifies to:
\begin{align*}
\ex(\ved^\leftarrow)
&=
\sum_{r \ge r_{\ell - 1} + d}
\Pr(r) \cdot
Q_{0, 0}\left(\frac{3m}{2s_\ell}, d\right) \cdot \underbrace{Q_{i,j+1}(m, 0)}_{=1}
\\&=
\left(\sum_{r \ge r_{\ell - 1} + d}
\Pr(r)\right)
\cdot
Q_{0, 0}\left(\frac{3m}{2s_\ell}, d\right)
\end{align*}

Recall that $d \le (r_\ell-r_{\ell - 1})/2$ by our assumption and $\varepsilon = \delta / 2s_{\ell - 1} = o(1)$. Thus, we can use \cref{lem:truncated-exp-properties} to bound:

\begin{align*}
\sum_{r \ge r_{\ell - 1} + d}
\Pr(r)
&=
\Pr(\mathbf r \in [r_{\ell - 1} + d, r_\ell))
\ge
\exp(-p_\ell d - 2\varepsilon)
\ge
\exp(-p_\ell d - L\delta),
\end{align*}
where in the last step, we used a very crude bound on $2\varepsilon = \delta / s_{\ell - 1} \le L\delta$.

\def\m{\frac{3m}{2s_\ell}}
We also expand:
\begin{align*}
Q_{0, 0}\left(\frac{3m}{2s_\ell}, d\right)
&=
\exp\left(-\frac{d}{D} \cdot 640L \log \m - 3L\log_{4/3} \left(\m\right)\cdot \delta - 2L \delta \right)
\\&\ge
\exp\left(-\frac{d}{D} \cdot 640L \log \left(\frac{m}{2^{2^{L-\ell-1}}}\right) -3L\log_{4/3} \left(\frac34 m\right)\cdot\delta - 2L\delta\right),
\\&=
\exp\left(-\frac{d}{D} \cdot 640L \left(\log m - 2^{L-\ell-1}\right)  -3L(\log_{4/3} m - 1)\delta - 2L\delta \right),
\end{align*}
where we use $\frac{3m}{2s_{\ell}} \le \frac34 m$ and $\frac{3m}{2s_\ell} \le m / 2^{2^{L - \ell - 1}}$.

Before we put everything together, let us also bound $p_\ell d$ identically to \cite{BFHL25}:
\begin{align*}
p_\ell d
    &= \frac{2d \ln(2s_{\ell-1} / \delta)}{r_\ell - r_{\ell-1}} \\
    &\leq \frac{2d \ln(2^{2^{L - \ell + 1}} \cdot (\log m)^{10})}{\frac{D}{2^{L-\ell+3}} + \frac{D}{4L}} \\
    &\leq \frac{2d}{D} \cdot \frac{2^{L - \ell + 1} + 10L}{\max\set*{\frac{1}{2^{L-\ell+3}}, \frac{1}{4L}}} \\
    &\leq \frac{2d}{D} \cdot (2^{L-\ell+3} + 10L) \cdot \min\set*{2^{L-\ell+3}, 10L} \\
    &\leq \frac{d}{D} \cdot 2^{L-\ell} \cdot 320L.
\end{align*}
In the last step we used that $(a + b) \min\set{a, b} \leq 2 \max\set{a, b} \min\set{a, b} = 2ab$.

Now we can write:
\begin{align*}
\log \ex(\ved^\leftarrow)
&\ge
-p_\ell d - L \delta + \log Q_{0, 0} \left(\m, d\right)
\\&\ge
-\frac dD \cdot 2^{L-\ell}\cdot 320L - L\delta - \frac{d}{D} \cdot 640L \left(\log m - 2^{L-\ell-1}\right)  -3L(\log_{4/3} m - 1)\delta - 2L\delta 
\\&=
-\frac dD \cdot 640L \log m - 3L\log_{4/3} m \cdot \delta
\\&= \log Q_{2L,0}(m, d) \ge \log Q_{i, j}(m, d).
\end{align*}

Similarly, we can calculate $\ex(\ved^\rightarrow)$. Recall that $\supp(\ved^\rightarrow) = \{r_\ell - d, \ldots, r_\ell - 1\}$.
\begin{align*}
\ex(\ved^\rightarrow)
&=
\sum_{r < r_\ell - d}
\Pr(r) \cdot
\underbrace{Q_{0, 0}\left(\frac{3m}{2s_\ell}, 0\right)}_{=1} \cdot Q_{i,j+1}(m, d)
\\&=
\left(\sum_{r < r_\ell - d}
\Pr(r)\right)
\cdot
Q_{i,j+1}(m, d)
\end{align*}

Now we again start by bounding the first term using \cref{lem:truncated-exp-properties}:
\begin{align*}
\Pr(\mathbf r \in [r_{\ell - 1}, r_\ell - d))
\ge
\exp(- \varepsilon).
\end{align*}

Recall that $\varepsilon = \delta / (2s_{\ell - 1})$. We can directly write
\begin{align*}
\log \ex(\ved^\rightarrow) &\ge 
-\varepsilon + \log Q_{i, j + 1}(m, d)
\\&\ge
-\frac\delta{2s_{\ell - 1}}
-\frac{d}{D} \cdot 640L \log m - 3L \log_{4/3} m \cdot \delta-\left(2L - i - \frac{j + 1}{2s_{\ell - 1}}\right)\delta
\\&=
-\frac{d}{D} \cdot 640L \log m - 3L \log_{4/3} m \cdot \delta-\left(2L - i - \frac j{2s_{\ell - 1}}\right)\delta
\\&=\log Q_{i,j}(m, d).\qedhere
\end{align*}

Finally, we are ready to prove \cref{thm:geom-modified}.

\begin{proof}[Proof of \cref{thm:geom-modified}]
As discussed previously, all invocations of Cohen's algorithm and step 2.1 of \cref{alg:cutting-lemma} can be implemented to succeed with high probability $1-1/\poly(m)$.
Together with \cref{lem:exp-cutting-induction} and by union bound, any subset $\Gamma$ of edges with total length $d_\Gamma$ is not cut with probability at least $Q_{0,0}(m,d_\Gamma)-1/\poly(m)$.
The extra $1/\poly\log(m)$ term in the exponent is subsumed if $d_\Gamma/D \cdot \log m \log\log m$ is at least some small constant.
Otherwise, the desired bound already follows from \cite{BFHL25}.
The extra $1/\poly(m)$ loss can always be ignored as we only consider $d_\Gamma \le \epsilon D/\log\log m$ for sufficiently small $\epsilon$.
Note that guarantees for edges and paths are direct corollaries of guarantees for subsets of edges.

To see the independence property, we first note that distance between any two vertices in the remaining graph can only increase after each cutting step, and that all balls being cut have radius at most $D$.
Then observe that in each cutting step, independent randomness is used
and either $\Gamma$ or $X \cup Y$
can never be cut due to the assumption.
Thus, the same probability bound for $\Gamma$ being cut also holds if it is conditioned on $X,Y$.

The theorem follows as our modification does not affect the diameter property and the running time proved in \cite{BFHL25}.
\end{proof}

\section{Strong Bounds for Modified \cite{Li25}'s Algorithm}
\label{sec:jason-modified}

In this section, we modify \cite{Li25}'s algorithm to prove the following result, which is a more formal version of \cref{thm:dLDDseparated}, with probability guarantees for edges at the same time.
The algorithm is presented in \cref{sec:jason-modified-alg}, and \cref{sec:jason-modified-proof} constitutes the proof.

\begin{theorem}
\label{thm:jason-modified}
Let $\epsilon > 0$ be a sufficiently small constant.
There is a randomized algorithm that, given any directed graph $G=(V,E)$ with $n$ vertices and $m$ edges, diameter parameter $D=\poly(n)$, and separation parameter $d$, returns an ordered (full) clustering $\sigma=(C_1,\ldots,C_k)$ of $V$ and boolean vertex marks $\mk(\cdot)$, satisfying all of the following.
\begin{description}[labelindent=.5cm]
    \item[Clustering probability:] Each vertex $v$ is \emph{clustered}, i.e., $\mk(v)=0$, with probability at least $\exp(-d/D \cdot \O(\log n \log\log n))$ if $d \le \epsilon D / \log\log n$.
    \item[Edge cutting probability:] Each edge $e=(u,v)$ is \emph{cut}, i.e., $u \in C_i$ and $v \in C_j$ for $i > j$, with probability at most $1-\exp(-d_e/D \cdot \O(\log n \log\log n))$ if $d_e \le \epsilon D / \log\log n$.
    \item[Separation property:] For any \emph{clustered} vertices $u \in C_i$ and $v \in C_j$ with $i > j$, it holds that $\dist_G(u,v) > d$.
    \item[Diameter property:] Each $C_i$ is strongly connected and has diameter at most $D$.
    \item[Running time:] The algorithm runs in $\O((m+n\log\log n)\log^2 n)$ time with high probability on graphs with polynomially bounded, integral edge lengths.
\end{description}
\end{theorem}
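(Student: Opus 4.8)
The plan is to obtain \Cref{thm:jason-modified} by augmenting the recursive decomposition of \cite{L25}---itself a directed, iterated variant of the CKR clustering of \cite{CalinescuKR00}---with a two-sided vertex-marking rule, and then pushing the separation and probability analyses through the recursion. Recall the skeleton: the algorithm recurses through $\O(\log\log n)$ \emph{scales}; at each scale it fixes a \emph{single} radius $r$ drawn uniformly from an interval $(r_0,r_1]$, draws a uniformly random permutation of the candidate ball centers, processes the centers in that order by carving $\Bout(x,r)$ out of the current graph (placing each carved ball after the still-uncarved remainder in the topological order, so that exactly the out-edges $\eout(\cdot)$ of a ball are cut), then recurses on every carved ball and on the final remainder, decomposing each recursive input into its strongly connected components. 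Our only change: when a ball $B=\Bout(x,r)$ is carved, permanently set $\mk(\cdot)=1$ on its inside fringe $\Bout(x,r)\setminus\Bout(x,r-d)$ and on its outside fringe $\Bout(x,r+d)\setminus\Bout(x,r)$. Marked vertices still follow the recursion, so the output is a full ordered clustering, and since the family of carved balls is unchanged, the diameter property, strong connectivity of the clusters, the validity of the ordering, and the $\O((m+n\log\log n)\log^2 n)$ running time are inherited from \cite{L25} (computing the two fringes is absorbed into the invocations of Cohen's algorithm \cite{C97}, with the one extra point that outside-fringe vertices are marked but do not descend into the ball---they stay in the remainder---so their total contribution over a scale must be bounded separately, which follows from the random choice of centers exactly as in the probability analysis).

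For the separation property I would prove the analogue of \Cref{lem:dist}: if $u$ and $v$ are both unmarked and $\dist_G(u,v)\le d$, then at every level of the recursion $u$ and $v$ lie in the same recursive subproblem and a shortest $u\to v$ path stays inside it. Granting this, the separation property is immediate: if $u\in C_i$, $v\in C_j$ with $i>j$ are both unmarked then they are in different clusters, and since distances only grow under taking induced subgraphs, a short $u\to v$ path that always stayed inside the current subproblem would persist all the way down and place $u,v$ in one strongly connected cluster, contradicting the topological order of the SCCs at the level where they first split. The lemma itself is an induction on recursion depth: one reduces to the case that $u$ and $v$ sit in the same carved ball (or both in the remainder) and shows that otherwise a short $u\to v$ path crossing between two pieces would be exhibited, via its last against-order edge, either as certifying that one endpoint lies within distance $d$ of some carved ball's exit (so it is in that ball's inside fringe) or as certifying that one endpoint is reachable from some carved ball within distance $d$ (so it is in that ball's outside fringe); either way a vertex that was supposed to be unmarked is marked. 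Both fringes are genuinely needed---dropping either leaves a way for a short directed path to leak out of a subproblem---and a convenient consequence of defining both through $\dist(x,\cdot)$ is that a ball can mark a given $v$ only when $\dist(x,v)\in(r_0-d,r_1+d]$, and then only for an interval of radii of width at most $2d$.

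For the clustering and edge-cutting probabilities I would run a CKR-style computation per scale and multiply over the $\O(\log\log n)$ scales. Fix $v$ (the edge case is identical with a bad window of width $d_e$ instead of $2d$). Conditioned on $v$ reaching scale $\ell$ unmarked, sort the balls of that scale by $\dist(x_j,v)$: ball $B_j$ is the first to affect $v$ with probability $1/(\text{rank of }B_j)$ over the random permutation, it marks $v$ only if $r$ lands in an interval of width $\le 2d$ (resp.\ $d_e$) near $\dist(x_j,v)$, and only balls with $\dist(x_j,v)\in(r_0-d,r_1+d]$ matter at all. Summing $\Pr_r[\text{bad window}]\cdot\Pr[\text{small rank}]$ over the relevant balls bounds the marking probability at scale $\ell$ by $\O(\tfrac{d}{\Delta_\ell}\log k_\ell)$ with $\Delta_\ell=r_1-r_0$ and $k_\ell$ the number of balls; the crucial step, which is the content of the convexity argument of \Cref{lem:prob-mark}, is to upgrade this to the \emph{exponential} bound $1-\exp(-\O(\tfrac{d}{\Delta_\ell}\log k_\ell))$ that survives composition, using that $k_\ell$ greatly exceeds $\Delta_\ell+2d$ so that many balls share each integer distance value. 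With the scale widths $\Delta_\ell=\Omega(D/\log\log n)$ and ball counts $k_\ell$ chosen as in \cite{L25} (so that $\sum_\ell \tfrac{\log k_\ell}{\Delta_\ell}=\O(\tfrac{\log n\log\log n}{D})$, following the recursive-structure accounting of \cite{S95}), the product of the per-scale survival bounds telescopes to $\Pr(\text{$v$ clustered})\ge\exp(-\tfrac{d}{D}\cdot\O(\log n\log\log n))$, and symmetrically $\Pr(\text{$e$ cut})\le 1-\exp(-\tfrac{d_e}{D}\cdot\O(\log n\log\log n))$ for $d,d_e\le\epsilon D/\log\log n$.

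The \textbf{main obstacle} I expect is the per-scale probability bound. Extracting \emph{any} exponentially small survival probability from a construction that reuses one uniform radius for all balls is already counterintuitive, and sharpening the CKR estimate to the $1-\exp(-\O(\log k))$ form---so that composing $\O(\log\log n)$ scales lands at $\O(\log n\log\log n)$ rather than at, say, $\O(\log n(\log\log n)^2)$---is exactly where the convexity estimate and the careful count of how many balls share each bad window are essential. The second delicate point is the two-sided-marking invariant behind \Cref{lem:dist}: making separation inside induced subgraphs imply separation in $G$ despite the SCC-decomposition step requires threading the case analysis on the last against-order edge precisely and checking that neither fringe can be weakened.
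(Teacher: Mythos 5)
Your proposal follows essentially the same route as the paper: the identical two-sided fringe-marking rule grafted onto \cite{L25}'s recursion, a distance-preservation lemma for unmarked pairs proved by induction on the recursion (the paper's \cref{lem:dist,lem:sep}), a per-scale CKR-style bound sharpened to the exponential form $1-\exp(-\O(\tfrac{d}{\Delta_\ell}\log k_\ell))$ via a convexity argument over balls sharing integer distance values (the paper's \cref{lem:prob-mark,lem:avg-ratio-prefix-d}), composed over scales and recursion levels, with the running-time issue of outside-fringe vertices handled via the random choice of centers. The only cosmetic difference is the edge-cutting bound, which the paper packages as a reduction to the clustering bound with $d=d_e$ (an edge of length $d_e\le d$ is cut only if both endpoints get marked) rather than rerunning the per-scale computation directly; both yield the same estimate.
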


\subsection{Overview of \cite{Li25}'s Algorithm and Our Modification}
\label{sec:jason-modified-alg}

For a union $\Bout = \bigcup_{i=1}^k \Bout_G(t_i,r)$ of out-balls, define $\dist_{\Bout}(v) = \min_{i \in [k]} \dist_G(t_i,v)$.
Define $\dist_{\Bin}(\cdot)$ for a union $\Bin$ of in-balls similarly.
Note that $B^\pm = \set{v \mid \dist_{B^\pm}(v) \le r}$.

The algorithm is shown in \cref{alg:jason-modified}.

\begin{remark}
\label{rmk:alg}
\cref{alg:jason-modified} is essentially \cite{Li25}'s algorithm with following modifications.
\begin{enumerate}
    \item The separation parameter $d$ is used only for clustering, while having nothing to do with cutting.
        Intuitively, each time a cut is executed, vertices within distance $d$ to the boundary are marked and excluded from clustering.
    \item Exactly one of iterations $i(\pm)$ is executed in an alternating way, while not affecting correctness.
        Consequently, heavy vertex elimination is not needed in cases not having both in-heavy and out-heavy vertices.
    \item In case (b) of having both in-heavy and out-heavy vertices, following changes are made to simplify the algorithm as well as analysis.
        \begin{enumerate}
            \item Both $\Bin_G$ and $\Bout_G$ are computed, which are disjoint by the condition of case (b), so the recursive instance has at most $m/2$ edges.
            \item Cutting $\Bout$ eliminates in-heavy vertices, so step 6 can be executed.
                Similarly, cutting $\Bin$ eliminates out-heavy vertices, so step 7 can be executed.
        \end{enumerate}
\end{enumerate}
\end{remark}

\begin{algorithm}[p]
\caption{Modified \cite{Li25}'s Algorithm}
\label{alg:jason-modified}
\begin{description}
    \item[Input:] graph $G=(V,E)$ with $n$ vertices and $m$ edges, diameter $D$, separation $d$.
    \item[Output:] ordered clustering $\sigma$, boolean marks $\mk(\cdot)$.
\end{description}
\begin{enumerate}
    \item Let $L=\log\log m$, and $a_i = D/8 - \sum_{j=1}^i D/16 \cdot \max(1/L,2^{-i})$ for $i \in [0,L]$.
    \item Set $U \gets V$, $\sigma^+,\sigma^- \gets \bot$, and $\mk(v) \gets 0$ for $v \in V$.
    \item If $m \le 1$, return the singleton clustering $\sigma$ induced by a topological order of $G$, and $\mk(\cdot)$.
    \item Label each vertex as either in-heavy or in-light such that $|E[\Bin_G(v,D/8]| \ge m/2$ for each in-heavy vertex $v$, and that $|E[\Bin_G(v,D/8]| \le 3m/4$ for each in-light vertex $v$.
    \item Label each vertex as either out-heavy or out-light such that $|E[\Bout_G(v,D/8]| \ge m/2$ for each out-heavy vertex $v$, and that $|E[\Bout_G(v,D/8]| \le 3m/4$ for each out-light vertex $v$.
    \item If there is no in-heavy vertex,
        \begin{enumerate}
            \item For $i=1,\ldots,L$, execute iteration $i(-)$ for odd $i$, and iteration $i(+)$ for even $i$.
                \begin{description}
                    \item[Iteration $i(-)$:] \mbox{}
                        \begin{enumerate}
                            \item Sample $r^-_i \in (a_i,a_{i-1}]$, and sample each vertex $v \in U$ with probability $\min(2\deg(v)/m \cdot 2^{2^i}\ln (mD),1)$.
                                Let $S^-_i$ be the set of sampled vertices.
                            \item For $v \in S^-_i$ in random order,
                                \begin{enumerate}[label=(\Alph*)]
                                    \item Set $\mk(v) \gets 1$ for $v \in (\Bin_G(v,r^-_i+d) \setminus \Bin_G(v,r^-_i-d)) \cap U$.
                                    \item Recurse on $G[B=\Bin_G(v,r^-_i) \cap U]$.
                                        Let $\sigma',\mk'(\cdot)$ be returned by the recursion.
                                    \item Set $U \gets U \setminus B$, $\sigma^- \gets \sigma^- \circ \sigma'$, and $\mk(v) \gets \mk(v) \lor \mk'(v)$ for $v \in B$.
                                \end{enumerate}
                        \end{enumerate}
                    \item[Iteration $i(+)$:] \mbox{}
                        \begin{enumerate}
                            \item Sample $r^+_i \in (a_i,a_{i-1}]$, and sample each vertex $v \in U$ with probability $\min(2\deg(v)/m \cdot 2^{2^i}\ln (mD),1)$.
                                Let $S^+_i$ be the set of sampled vertices.
                            \item For $v \in S^+_i$ in random order,
                                \begin{enumerate}[label=(\Alph*)]
                                    \item Set $\mk(v) \gets 1$ for $v \in (\Bout_G(v,r^+_i+d) \setminus \Bout_G(v,r^+_i-d)) \cap U$.
                                    \item Recurse on $G[B=\Bout_G(v,r^+_i) \cap U]$.
                                        Let $\sigma',\mk'(\cdot)$ be returned by the recursion.
                                    \item Set $U \gets U \setminus B$, $\sigma^+ \gets \sigma' \circ \sigma^+$, and $\mk(v) \gets \mk(v) \lor \mk'(v)$ for $v \in B$.
                                \end{enumerate}
                        \end{enumerate}
                \end{description}
        \end{enumerate}
    \item Else if there is no out-heavy vertex,
        \begin{enumerate}
            \item For $i=1,\ldots,L$, execute iteration $i(+)$ for odd $i$, and iteration $i(-)$ for even $i$.
        \end{enumerate}
\end{enumerate}
\end{algorithm}

\begin{algorithm}[p]
\ContinuedFloat
\caption{Modified \cite{Li25}'s Algorithm (Continued)}
\begin{enumerate}[start=8]
    \item Else if there are both in-heavy and out-heavy vertices,
        \begin{enumerate}
            \item If there is an in-heavy vertex $s$ and an out-heavy vertex $t$ such that $\dist_G(s,t) \le D/4$,
                \begin{enumerate}
                    \item Sample $r \in (D/8,D/4]$.
                    \item Set $\mk(v) \gets 1$ for $v \in \Bin_G(s,r+d) \setminus \Bin_G(s,r-d)$.
                    \item Recurse on $G[B=V \setminus \Bin_G(s,r)]$.
                        Let $\sigma',\mk'(\cdot)$ be returned by the recursion.
                    \item Set $\sigma^+ \gets \sigma'$, and $\mk(v) \gets \mk(v) \lor \mk'(v)$ for $v \in B$.
                    \item Set $\mk(v) \gets 1$ for $v \in (\Bout_G(t,r+d) \setminus \Bout_G(t,r-d)) \cap \Bin_G(s,r)$.
                    \item Recurse on $G[B=\Bin_G(s,r) \setminus \Bout_G(t,r)]$.
                        Let $\sigma',\mk'(\cdot)$ be returned by the recursion.
                    \item Set $\sigma^- \gets \sigma' \circ (\Bin_G(s,r) \cap \Bout_G(t,r))$, and $\mk(v) \gets \mk(v) \lor \mk'(v)$ for $v \in B$.
                \end{enumerate}
            \item Else if $\dist_G(s,t) > D/4$ for any in-heavy vertex $s$ and out-heavy vertex $t$,
                \begin{enumerate}
                    \item Sample $r \in (D/16,D/8]$
                    \item Let $\Bin$ be the union of $\Bin_G(t,r)$ over all out-heavy vertices $t$, and $\Bout$ the union of $\Bout_G(s,r)$ over all in-heavy vertices $s$.
                    \item If $|E[\Bin]| \ge |E[\Bout]|$,
                        \begin{enumerate}[label=(\Alph*)]
                            \item Set $\mk(v) \gets 1$ for $v \in V$ if $\dist_{\Bout}(v) \in (r-d,r+d]$.
                            \item Recurse on $G[\Bout]$.
                                Let $\sigma',\mk'(\cdot)$ be returned by the recursion.
                            \item Set $U \gets V \setminus \Bout$, $\sigma^+ \gets \sigma'$, and $\mk(v) \gets \mk(v) \lor \mk'(v)$ for $v \in \Bout$.
                            \item Execute step 6.
                        \end{enumerate}
                    \item Else if $|E[\Bout]| > |E[\Bin]|$,
                        \begin{enumerate}[label=(\Alph*)]
                            \item Set $\mk(v) \gets 1$ for $v \in V$ if $\dist_{\Bin}(v) \in (r-d,r+d]$ .
                            \item Recurse on $G[\Bin]$.
                                Let $\sigma',\mk'(\cdot)$ be returned by the recursion.
                            \item Set $U \gets V \setminus \Bin$, $\sigma^- \gets \sigma'$, and $\mk(v) \gets \mk(v) \lor \mk'(v)$ for $v \in \Bin$.
                            \item Execute step 7.
                        \end{enumerate}
                \end{enumerate}
        \end{enumerate}
    \item Return $\sigma^- \circ \sigma^+$ and $\mk(\cdot)$.
\end{enumerate}
\end{algorithm}

\subsection{Analysis of Modified \cite{Li25}'s Algorithm}
\label{sec:jason-modified-proof}

We show that \cref{alg:jason-modified} satisfies all of the requirements of \cref{thm:jason-modified}.
In the rest of this section, assume $\epsilon > 0$ is a sufficiently small constant and $D=\poly(m)$.

\paragraph{Clustering probability.}

Assume $d \le \epsilon D/\log\log(mD)$.
We use the following lemma of \cite{Li25}.

\begin{lemma}[\cite{Li25}'s Lemma 5]
\label{lem:prob-sample}
For $i \in [L]$, with probability at least $1-(mD)^{-3}$, all remaining vertices $v \in U$ satisfy $\vol_G(B^\pm_G(v,a_i) \cap U) \le 2m/2^{2^i}$ simultaneously after iteration $i(\mp)$.
\end{lemma}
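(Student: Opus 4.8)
The plan is to prove this by the standard ball-carving argument: a union bound over all vertices, where the per-vertex failure probability is controlled via the independent vertex sampling performed inside a single iteration. By symmetry (running the argument on $\mathrm{rev}(G)$) it suffices to treat iteration $i(-)$, which carves in-balls $\Bin_G(\cdot, r^-_i) \cap U$ and for which the claim concerns the out-balls $\Bout_G(\cdot, a_i)$. Write $U_{i-1}$ for the value of $U$ at the start of iteration $i$ and $U_i$ for its value at the end; both are random, with $U_i \subseteq U_{i-1}$, so it suffices to show that with probability at least $1 - (mD)^{-3}$ every $v \in U_i$ satisfies $\vol_G(\Bout_G(v, a_i) \cap U_{i-1}) \le 2m/2^{2^i}$, the stated bound with $U_i$ then following by monotonicity.

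The deterministic observation driving the proof is this: if a vertex $u$ lies in $U_i$, then $u$ remained in $U$ throughout iteration $i$ (vertices are only ever removed from $U$), so for every sampled center $w \in S^-_i$ we had $u \notin \Bin_G(w, r^-_i)$ when $w$ was processed, i.e.\ $\dist_G(u, w) > r^-_i > a_i$; equivalently, $S^-_i \cap \bigl(\Bout_G(u, a_i) \cap U_{i-1}\bigr) = \emptyset$. Let $\mathcal{H}$ denote the history up to the start of iteration $i$, which determines $U_{i-1}$. Conditioned on $\mathcal{H}$, the set $S^-_i$ includes each $v \in U_{i-1}$ independently with probability $q_i(v) = \min\bigl(2\deg(v)/m \cdot 2^{2^i}\ln(mD),\ 1\bigr)$, and the event above does not depend on the random radius $r^-_i$ or the random processing order, so
\[
\Pr\bigl[\, u \in U_i \mid \mathcal{H} \,\bigr] \ \le\ \prod_{v \in \Bout_G(u, a_i) \cap U_{i-1}} \bigl(1 - q_i(v)\bigr) \ \le\ \exp\Bigl(- \sum_{v \in \Bout_G(u, a_i) \cap U_{i-1}} q_i(v)\Bigr).
\]
Suppose now that $\mathcal{H}$ is such that $\vol_G(\Bout_G(u, a_i) \cap U_{i-1}) > 2m/2^{2^i}$. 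Then either some $v$ in this ball has $q_i(v) = 1$, making the first product zero, or all such $q_i(v)$ equal $2\deg(v)/m \cdot 2^{2^i}\ln(mD)$, in which case $\sum_v q_i(v) = \frac{2 \cdot 2^{2^i}\ln(mD)}{m}\,\vol_G(\Bout_G(u, a_i) \cap U_{i-1}) > 4\ln(mD)$; either way $\Pr[u \in U_i \mid \mathcal{H}] \le (mD)^{-4}$. Averaging over $\mathcal{H}$ gives $\Pr\bigl[u \in U_i \text{ and } \vol_G(\Bout_G(u, a_i) \cap U_{i-1}) > 2m/2^{2^i}\bigr] \le (mD)^{-4}$. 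The lemma is trivial when $m \le 1$, and otherwise a union bound over the $n \le m+1$ vertices $u$ shows that with probability at least $1 - (m+1)(mD)^{-4} \ge 1 - (mD)^{-3}$, every $v \in U_i$ satisfies $\vol_G(\Bout_G(v, a_i) \cap U_{i-1}) \le 2m/2^{2^i}$, hence also $\vol_G(\Bout_G(v, a_i) \cap U_i) \le 2m/2^{2^i}$. Applying the same argument to $\mathrm{rev}(G)$ yields the case of iteration $i(+)$, with the roles of in- and out-balls swapped.

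The step that needs the most care is the conditional estimate in the middle paragraph: one must verify that, given $\mathcal{H}$, whether $u$ survives depends on the fresh randomness of iteration $i$ only through whether $S^-_i$ meets the fixed set $\Bout_G(u, a_i) \cap U_{i-1}$ — not through $r^-_i$ (only $r^-_i > a_i$ is used) nor through the order in which the centers of $S^-_i$ are processed — so that independence of the inclusion indicators of $S^-_i$ can be invoked. Everything else is the elementary bound $1 - x \le e^{-x}$ and a union bound.
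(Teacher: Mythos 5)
Your proof is correct. The paper itself does not prove this lemma---it imports it verbatim as Lemma~5 of \cite{L25}---and your argument is the standard one underlying that result: a surviving vertex $u$ forces $S^\mp_i$ to miss the fixed set $B^\pm_G(u,a_i)\cap U_{i-1}$, the independent degree-proportional sampling makes this happen with probability at most $(mD)^{-4}$ whenever that set has volume exceeding $2m/2^{2^i}$, and a union bound over vertices plus monotonicity of $U$ finishes the claim; your care about why only the inclusion indicators of $S^\mp_i$ (and not $r^\mp_i$ or the processing order) matter is exactly the right point to check.
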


We also use the following technical lemma.
Its proof is deferred to \cref{append:proof-jason}.

\begin{lemma}
\label{lem:avg-ratio-prefix-d}
For integers $r,d \ge 1$ such that $r \ge 2d$, and reals $k,k_1,\ldots,k_r \ge 0$ such that $k_1 \ge 1$ and $k_1+\ldots+k_r=k$, it holds that
\[
\frac{1}{r} \cdot \sum_{i=1}^r \frac{\sum_{j=\max(i-d+1,1)}^i k_i}{\sum_{j=1}^i k_i} \le 1-\left(1-\frac{1}{\lfloor r/d \rfloor}\right) \cdot k^{-\frac{1}{\lfloor r/d \rfloor - 1}}.
\]
\end{lemma}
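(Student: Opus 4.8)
The plan is to reduce the sum to a product-of-ratios telescoping argument and then optimize via AM–GM (or convexity of $-\log$). Write $S_i = \sum_{j=1}^i k_j$ for the prefix sums, so $S_0 = 0$, $S_r = k$, and $S_1 = k_1 \ge 1$. The key observation is that for each $i$,
\[
\frac{\sum_{j=\max(i-d+1,1)}^i k_j}{S_i} = 1 - \frac{S_{\max(i-d,0)}}{S_i},
\]
so the left-hand side equals $1 - \frac1r \sum_{i=1}^r \frac{S_{\max(i-d,0)}}{S_i}$. It therefore suffices to prove the lower bound
\[
\frac1r \sum_{i=1}^r \frac{S_{i-d}}{S_i} \ge \left(1 - \frac{1}{\lfloor r/d\rfloor}\right) k^{-1/(\lfloor r/d\rfloor - 1)},
\]
with the convention $S_i = S_1$ (equivalently $S_{\max(i-d,0)}$, but since $S_1\ge 1$ and we want a lower bound, replacing $S_0$ by $S_1$ only helps) — actually one has to be slightly careful: for $i \le d$ the term is $S_{\max(i-d,0)}/S_i$ which is $0$ when $i=d$ would give $S_0=0$; I would handle the first $d$ terms separately or simply drop the nonnegative terms with $i < d$ from a subset and keep a carefully chosen subset of $q := \lfloor r/d\rfloor$ terms.

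Concretely, restrict attention to the $q = \lfloor r/d\rfloor$ indices $i \in \{d, 2d, \dots, qd\}$. For these, $\frac1r\sum_{i=1}^r \frac{S_{\max(i-d,0)}}{S_i} \ge \frac1r \sum_{t=1}^{q} \frac{S_{(t-1)d}}{S_{td}}$ once we note all dropped terms are nonnegative. Set $x_t = S_{td}/S_{(t-1)d}$ for $t = 2, \dots, q$ (these are $\ge 1$ since prefix sums are nondecreasing and $k_1\ge1$ guarantees positivity from $t=1$ on), and note $\prod_{t=2}^q x_t = S_{qd}/S_d \le k / 1 = k$. The $t=1$ term is $S_0/S_d = 0$, so the sum over the selected indices is $0 + \sum_{t=2}^q 1/x_t$. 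By AM–GM, $\sum_{t=2}^q \frac{1}{x_t} \ge (q-1)\left(\prod_{t=2}^q x_t\right)^{-1/(q-1)} \ge (q-1) k^{-1/(q-1)}$. Dividing by $r$ and using $r \ge qd \ge q$... — here a small gap appears: we need to divide by $r$, not $q$, but the bound we want has $1/\lfloor r/d\rfloor = 1/q$ in it, so I would instead average over these $q$ indices scaled appropriately; more precisely I'd bound $\frac1r \sum_{i=1}^r \ge \frac{1}{qd}\sum_{i=1}^{qd}$ and then note that grouping the $qd$ terms into $q$ blocks of size $d$ and applying the above to one representative per block, together with monotonicity of $S_{\max(i-d,0)}/S_i$ within... — this needs the blockwise inequality $\frac1d\sum_{i=(t-1)d+1}^{td} \frac{S_{i-d}}{S_i} \ge \frac{S_{(t-1)d}}{S_{td}}$, which holds term-by-term since $S_{i-d} \ge S_{(t-1)d}$ and $S_i \le S_{td}$ for $i$ in the $t$-th block. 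Summing the blocks and applying AM–GM to $\{x_t\}_{t=1}^q$ with $x_1 = S_d/S_0$ replaced by the honest lower contribution — I would actually keep $x_1$ finite by using $S_0 \to$ nothing and instead absorb the missing $1/x_1$ term as a loss, which is exactly the "$1 - (1-1/q)k^{-1/(q-1)}$" shape: we get $\frac1q\sum_{t=1}^q 1/x_t$ where only $q-1$ of the reciprocals are meaningfully bounded, giving $\ge \frac{q-1}{q} k^{-1/(q-1)}$, hence the $(1 - 1/q)$ factor.

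The main obstacle will be the bookkeeping at the two ends: (i) correctly handling the first block where $S_0 = 0$ makes one ratio vanish — this is precisely what produces the $(1 - 1/\lfloor r/d\rfloor)$ factor rather than a clean $1$, so it must be tracked, not discarded carelessly; and (ii) justifying the reduction from the average over all $r$ terms to the average over the $q$ block-representatives, which relies on the monotonicity of prefix sums within each block and on $r \ge qd$. Once these are pinned down, the core inequality is just AM–GM applied to the telescoping product $\prod x_t \le k$, which is routine. I would also double-check the edge case $q = 1$ (i.e., $r < 2d$, excluded by hypothesis $r \ge 2d$ so $q \ge 2$ and the exponent $1/(q-1)$ is well-defined) and the case $k = 1$ (all mass on $k_1$), where the right-hand side becomes $1 - (1 - 1/q) = 1/q$ and the left-hand side is $\frac1r(\text{number of }i\le \text{... }) $ — easy to verify directly.
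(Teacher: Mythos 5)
Your reformulation of the left-hand side as $1 - \frac1r\sum_{i=1}^r S_{\max(i-d,0)}/S_i$ and your AM--GM core (telescoping the product of ratios of prefix sums down to $S_d/S_{qd}\ge 1/k$) are both sound; in the case $d=1$ this already gives a clean direct proof of the paper's \cref{lem:avg-ratio-prefix-1}. The gap is in the reduction from general $d$ to this core. Your blockwise inequality
\[
\frac1d\sum_{i=(t-1)d+1}^{td}\frac{S_{i-d}}{S_i} \ \ge\ \frac{S_{(t-1)d}}{S_{td}}
\]
is false, and the term-by-term justification you give for it is backwards: for $i$ in the $t$-th block one has $i-d\le (t-1)d$, hence $S_{i-d}\le S_{(t-1)d}$, not $\ge$. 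Concretely, take $d=2$, $r=4$, $k_1=1$, $k_2=M$ large, $k_3=k_4=0$: the second block's average is $\tfrac12\bigl(\tfrac{1}{1+M}+1\bigr)\approx\tfrac12$, while your representative $S_2/S_4$ equals $1$. The same example shows your claimed lower bound on the complementary average ($\approx\tfrac12$) exceeds its true value ($\approx\tfrac14$), so the reduction cannot be repaired by end-of-range bookkeeping alone. (The auxiliary step $\frac1r\sum_{i=1}^r\ge\frac1{qd}\sum_{i=1}^{qd}$ also points the wrong way, though that is minor by comparison.)

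The fix is to partition the indices by residue class modulo $d$ rather than into contiguous blocks, which is what the paper does. For a fixed offset $\hat i\in[d]$, the terms at indices $\hat i,\hat i+d,\hat i+2d,\dots$ have numerators that are exactly the increments of the prefix sums along that arithmetic progression, so each residue class is --- with no inequality needed --- an exact instance of the $d=1$ statement applied to a coarsened sequence of $\hat r+1\ge\lfloor r/d\rfloor$ terms; summing the resulting bounds over the $d$ offsets recovers the full average since the class sizes add up to $r$. Your AM--GM argument then applies verbatim within each residue class, and the vanishing term $S_0/S_{\hat i}$ in each class is exactly what produces the $\bigl(1-1/\lfloor r/d\rfloor\bigr)$ factor, as you anticipated.
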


At a high level, the proof of clustering probability follows a similar approach to \cite{Li25} with two crucial differences.
\begin{itemize}
    \item A much more delicate analysis is needed to derive an exponential bound on the probability of not marking a vertex in each iteration.
    The argument still relies on the random order of sampled vertices, while additionally exploiting the important observation that only $a_{i-1}-a_i$ possibilities of the distance may be relevant.
    Even with $k \gg a_{i-1}-a_i$ sampled vertices, we can obtain a much tighter bound than the one in \cite{Li25} that depends linearly on $\ln k$.
    We elaborate on this in \cref{lem:prob-mark}.
    \item To maintain such an exponential bound throughout the algorithm, independence between iterations and recursions must be carefully examined, as will be seen in \cref{lem:prob-cluster}.
\end{itemize}

\begin{lemma}
\label{lem:prob-mark}
For $i \in [L]$ and vertex $v \in U$ unmarked before iteration $i(\pm)$, $v$ remains unmarked after iteration $i(\pm)$ with probability at least $\exp(-d/D \cdot \O(2^i\log\log(mD)))$.
\end{lemma}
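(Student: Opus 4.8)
I would prove \cref{lem:prob-mark} by freezing the state of the algorithm just before iteration $i(\pm)$ — the current graph $G$, the current alive set $U \ni v$, and the fact that $v$ is unmarked — and bounding, over only the fresh randomness of that iteration (radius, sampled set, processing order), the probability that some ball center marks $v$ in step (A); the marks inherited from recursive calls are incorporated separately in \cref{lem:prob-cluster}. By the in/out symmetry, assume we are in iteration $i(-)$, and for a sampled center $t$ put $\rho(t) = \dist_G(v,t)$, so $v \in \Bin_G(t,r)$ iff $\rho(t) \le r$. The structural heart of the argument is that, since the sampled centers are processed in uniformly random order, a center with $\rho(t) \le r$ deletes $v$ from $U$ (into recursion) while a center with $\rho(t) \in (r-d, r+d]$ marks $v$, so $v$ gets marked in this iteration \emph{iff}, among the sampled centers at distance $\le r+d$ from $v$, the first one processed has $\rho(t) \in (r-d, r+d]$. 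Hence, conditioning on the sampled set $S$ and the radius $r$, the marking probability equals $|S \cap M_r| / |S \cap N_r|$ with $M_r = \{t : \rho(t) \in (r-d, r+d]\} \subseteq N_r = \{t : \rho(t) \le r+d\}$ (convention $0/0 = 0$), and averaging over $r$, uniform among the $\Delta_i := a_{i-1} - a_i$ integer radii in $(a_i, a_{i-1}]$, gives $\Pr[v \text{ marked} \mid S] = \frac{1}{\Delta_i}\sum_r |S \cap M_r|/|S \cap N_r|$.

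Next I would feed this into \cref{lem:avg-ratio-prefix-d}. Indexing distance values by position, the numerators $|S \cap M_r|$ are windows of width $2d$ and the denominators $|S \cap N_r|$ are the matching prefix sums, and — the point flagged in the overview — only the $\Delta_i$ consecutive distance values straddling the radius range are relevant. Dropping the radii with empty prefix (they contribute $0$) and re-indexing so the closest sampled center sits at position $1$ (supplying the hypothesis $k_1 \ge 1$), \cref{lem:avg-ratio-prefix-d} with length $R \le \Delta_i$ (the number of active radii), window $2d$, and total mass $K = \sum_\ell k_\ell \le |S|$ yields
\[
\Pr[v \text{ marked} \mid S] \ \le\ \frac{R}{\Delta_i}\Bigl(1 - \bigl(1 - \tfrac1q\bigr) K^{-1/(q-1)}\Bigr), \qquad q = \Bigl\lfloor \tfrac{R}{2d} \Bigr\rfloor ,
\]
where $q$ is a large constant because $\Delta_i = \Theta\!\bigl(D \max(1/L, 2^{-i})\bigr) = \Omega(D/L) \gg d$ (here $L = \log\log m$ and $d \le \epsilon D/L$). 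Rearranging, $\Pr[v \text{ unmarked} \mid S] \ge \max\bigl(1 - \tfrac{R}{\Delta_i},\ (1 - \tfrac1q) K^{-1/(q-1)}\bigr)$, where the second lower bound follows from $(1 - \tfrac{R}{\Delta_i})(1 - \mu) \ge 0$ with $\mu = (1-\tfrac1q)K^{-1/(q-1)}$.

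The remaining, delicate, step is to convert this into the claimed $\exp(-\tfrac dD \cdot \O(2^i \log\log(mD)))$ bound. A Chernoff bound gives $|S| = O(2^{2^i}\ln(mD))$ with probability $1 - (mD)^{-\Omega(1)}$, hence $\ln K \le \ln|S| = O(2^i + \log\log(mD))$; on that event I split on $R$. If $R > \Delta_i/2$, then $q = \Omega(\Delta_i/d)$ and one uses the exponential factor directly: $\Pr[v \text{ unmarked} \mid S] \ge (1-\tfrac1q)K^{-1/(q-1)} \ge \exp\!\bigl(-\O(\tfrac{d}{\Delta_i}(1+\ln K))\bigr)$. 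If $R \le \Delta_i/2$, then $\Pr[v \text{ marked} \mid S] \le R/\Delta_i \le \tfrac12$, and when additionally $R \ge 4d$ the second factor sharpens $\frac{R}{\Delta_i}(1-\mu)$ to $\O(\tfrac{d(1+\ln K)}{\Delta_i})$ (the factor $R$ cancels), so in all cases $\Pr[v \text{ marked} \mid S] \le \min(\tfrac12, \O(\tfrac{d(1+\ln K)}{\Delta_i}))$ and hence $\Pr[v \text{ unmarked}\mid S] \ge \exp(-\O(\tfrac{d(1+\ln K)}{\Delta_i}))$. In both branches it then remains to check $\tfrac{d(1+\ln K)}{\Delta_i} \le \tfrac dD \cdot \O(2^i \log\log(mD))$, which holds by splitting on whether $2^i \le L$ (then $\Delta_i = \Theta(D2^{-i})$ and $\tfrac{d}{\Delta_i}(2^i + L) = \tfrac dD\,\O(4^i + 2^i L) = \tfrac dD\,\O(2^i L)$, using $4^i \le 2^i L$) or $2^i > L$ (then $\Delta_i = \Theta(D/L)$ and $\tfrac{d}{\Delta_i}(2^i + L) = \tfrac dD\,\O(L 2^i + L^2) = \tfrac dD\,\O(2^i L)$, using $L^2 \le 2^i L$). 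Finally, the $(mD)^{-\Omega(1)}$ failure probability of the $|S|$ estimate is absorbed into the exponent since $\tfrac dD \cdot 2^i \log\log(mD) \ge \tfrac1D \ge (mD)^{-1}$.

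The main obstacle I anticipate is precisely this last conversion: once $2^i$ is large the iteration genuinely marks $v$ with probability close to $1$, so the naive "$\Pr[v \text{ unmarked}] \ge 1 - \Pr[v \text{ marked}]$" is vacuous and the exponential factor $K^{-1/(q-1)}$ of \cref{lem:avg-ratio-prefix-d} must be kept un-linearized, while the interplay between the number $R$ of radii that can touch $v$ and $\Delta_i$ dictates, case by case, which of the two lower bounds is operative. The structural reduction to the ratio formula — and, in particular, pinning down that only a $\Delta_i$-wide band of distances matters, which is what makes \cref{lem:avg-ratio-prefix-d} applicable at all — is the other point needing care, but is comparatively clean.
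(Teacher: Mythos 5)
Your proposal is correct and follows essentially the same route as the paper: the same reduction of the marking event to ``the first sampled center processed among those within distance $r+d$ lies in the annulus $(r-d,r+d]$'', the same averaging over the uniform radius, and the same application of \cref{lem:avg-ratio-prefix-d} with window $2d$ over the $\Theta(D\max(1/L,2^{-i}))$-wide band of relevant distances. The only (minor) divergence is in the last step: the paper normalizes the vector $(k_j)$ so that all radii have nonempty prefix and then takes the expectation over $k$ via convexity of $k\mapsto k^{-c}$ (Jensen, as in \cite{L25}'s Lemma~6), whereas you drop empty-prefix radii, case-split on their number $R$, and control $|S|$ by a Chernoff bound whose failure probability you absorb into the exponent -- both are valid.
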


\begin{proof}
We prove for iteration $i(+)$.
Iteration $i(-)$ is symmetric.
Let $k_j$ be the number of sampled vertices $t \in S^+_i$ such that $\dist_B(t,v)=j$, and $k=\sum k_j$.
$v$ is marked by radius $r^+_i$ if and only if
\begin{itemize}
    \item Radius $r^+_i$ is sampled with probability $1/(a_{i-1}-a_i)$.
    \item Some $t \in S^+_i$ with $\dist_G(t,v) \in (r^+_i-d,r^+_i+d]$ appears first among all $t' \in S^+_i$ with $\dist_G(t',v) \le r^+_i+d$.
        This happens with probability\footnote{$\frac{0}{0}$ is treated as $0$.}
        \[
        \frac{\sum_{j=\max(r^+_i-d+1,0)}^{r^+_i+d} k_j}{\sum_{j=0}^{r^+_i+d} k_j}.
        \]
\end{itemize}
Altogether, $v$ is marked with probability
\[
\frac{1}{a_{i-1}-a_i} \cdot \sum_{r^+_i=a_i+1}^{a_{i-1}} \frac{\sum_{j=\max(r^+_i-d+1,0)}^{r^+_i+d} k_j}{\sum_{j=0}^{r^+_i+d} k_j}.
\]
Without loss of generality, we make the following assumptions without decreasing the probability.
\begin{description}[labelindent=.5cm]
    \item[$\bf k_0 = \cdots = k_{a_i+d}=0$:]
        Otherwise, we set $k_{a_i+d+1}$ to take the value of $\sum_{j=0}^{a_i+d+1} k_j$.
        Indeed, for each term of the above summation, its denominator is unchanged while its numerator may only increase.
        So the sum never decreases.
    \item[$\bf k_j=0$ for $\bf j>a_{i-1}+d$:]
        Otherwise, we set $k_{a_{i-1}+d}$ to take the value of $\sum_{j \ge a_{i-1}+d} k_j$.
        It may only increase the last term of the summation.
    \item[$\bf k_{a_i+d+1} \ge 1$:]
        Otherwise, we set $k_{a_i+d+1}=1$ and decrease $k_j$ by $1$ for the smallest $j$ such that $k_j \ge 1$.
        This results in an increase from $0$ to $1$ for $\min(j-(a_i+d+1),2d)$ terms starting from $a_i+d+1$ and a decrease by at most $1$ for $\min(j-(a_i+d+1),2d)$ terms starting from $j$.
        The net effect is thus nonnegative.
\end{description}
With all these assumptions, the probability of $v$ being marked is at most
\[
\frac{1}{a_{i-1}-a_i} \cdot \sum_{r^+_i=a_i+1}^{a_{i-1}} \frac{\sum_{j=\max(r^+_i-d+1,a_i+d+1)}^{r^+_i+d} k_j}{\sum_{j=a_i+d+1}^{r^+_i+d} k_j},
\]
with $k=\sum_{j=a_i+d+1}^{a_{i-1}+d} k_j$.
Applying \cref{lem:avg-ratio-prefix-d} with parameters $a_{i-1}-a_i$ and $2d$, we get that the probability of $v$ not being marked is at least $\exp(-\O(d/(a_{i-1}-a_i) \cdot \ln k)$ because $1-1/z \ge \exp(-1/(z-1))$.
Due to convexity, the same calculation as in \cite{Li25}'s Lemma 6, taking expectation over $k$, concludes the proof.
\end{proof}

\begin{lemma}
\label{lem:prob-cluster}
Let $p(m)$ be such that for any vertex of any graph with at most $m$ edges, it is never marked during \cref{alg:jason-modified} with probability at least $p(m)$.
It holds that $p(m) \ge \exp(-d/D \cdot \O(\log m\log\log(mD)))$.
\end{lemma}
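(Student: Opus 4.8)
The plan is to prove \cref{lem:prob-cluster} by strong induction on $m$, with the (essentially equivalent) inductive statement $p(m) \ge \exp(-C\,\tfrac dD\,\log m\,\log\log(mD))$ for a sufficiently large absolute constant $C$, ignoring for now an additive $\poly^{-1}(mD)$ slack. The base case $m\le 1$ is trivial since step~3 returns a singleton clustering and marks nothing (the case $d=0$ is also trivial, as a vertex is marked only when $\dist \in (r-d,r+d]=\emptyset$, so assume $d\ge 1$). First I would condition on the high-probability event that \cref{lem:prob-sample} holds in all $L=\log\log m$ iterations of the current call, which costs only an additive $L(mD)^{-3}$.

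For the inductive step, fix a target vertex $v$. It can be marked only (i) during the $L$ iterations $i(\pm)$ of the current call, (ii) during the $\O(1)$ auxiliary cuts in case~8 (and the ``execute step~6/7'' continuation this triggers), or (iii) inside the unique recursive call that $v$ enters once it is first placed into some ball. For (i), \cref{lem:prob-mark} gives that, conditioned on $v$ being unmarked just before iteration $i$, it survives iteration $i$ with probability $\ge \exp(-\alpha\,\tfrac dD\,2^i\log\log(mD))$ for a constant $\alpha$; by the chain rule over iterations $1,\dots,i^\star$, where $i^\star\le L$ is the first iteration in which $v$ enters a ball (this must happen by $i=L$, since the sampling probability $\min(2\deg(v)/m\cdot 2^{2^L}\ln(mD),1)$ equals $1$ as $2^{2^L}=m$), these multiply to $\exp(-\O(\tfrac dD\,2^{i^\star}\log\log(mD)))$. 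For (ii), each auxiliary cut draws its radius uniformly from a window of length $\Theta(D)$, so $v$ survives it with probability $\ge 1-\O(d/D)\ge\exp(-\O(d/D))$, and there are only $\O(1)$ of them. For (iii) — and this is where \cref{lem:prob-sample} is essential — the ball $v$ enters in iteration $i^\star$ is $B^\pm_G(t,r)$ with center $t\in U$ and radius $r\le a_{i^\star-1}$, so by \cref{lem:prob-sample} applied after iteration $i^\star-1$ (or, for $i^\star=1$, directly by the definition of heavy vertices, together with \cref{rmk:alg} for the case-8 recursions) the recursive instance has $m^\star\le m/2^{2^{i^\star-1}}$ edges, i.e.\ $2^{i^\star-1}\le\log m-\log m^\star$. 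Using independence of the fresh randomness in the recursion, these combine into
\[
p(m)\;\ge\;(1-L(mD)^{-3})\cdot\exp\!\Big(-\O\big(\tfrac dD\,(\log m-\log m^\star)\,\log\log(mD)\big)\Big)\cdot p(m^\star).
\]

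To close the induction I would substitute the inductive hypothesis $p(m^\star)\ge\exp(-C\,\tfrac dD\,\log m^\star\,\log\log(m^\star D))$, use $\log\log(m^\star D)\le\log\log(mD)$ and $\log m-\log m^\star\ge 2^{i^\star-1}\ge 1$, at which point the needed exponent inequality collapses to $4\alpha(\log m-\log m^\star)+C\log m^\star\le C\log m$, i.e.\ simply $C\ge 4\alpha$ (padding $C$ slightly more to swallow the $\O(1)$ case-8 factors). Finally I would account for the accumulated additive slack: since $\log m^\star\le\log m-1$ at every level, the recursion has depth $\O(\log m)$, so the total slack is $\poly^{-1}(mD)$, which is absorbed into the exponent because $d\ge 1$ forces $\tfrac dD\,\log m\,\log\log(mD)\gg(mD)^{-1}$.

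The main obstacle I anticipate is not a single calculation but the bookkeeping around (iii) and independence: one must check that (a) the recursive call genuinely uses independent randomness so the probabilities multiply; (b) the iterations $i(\pm)$ of a single call, although dependent on each other, can be handled by a plain chain-rule product because \cref{lem:prob-mark}'s bound is worst-case over the graph state; and (c) every way a vertex gets marked inside case~8 — including the step~6/7 continuation after an $\Bout$/$\Bin$ cut — contributes only $\O(1)$ auxiliary cuts and hence does not spoil the per-level cost. Tying $2^{i^\star}$ to $\log m-\log m^\star$ via \cref{lem:prob-sample} is the crucial step that converts what would otherwise be an $\O(\log^2 m)$ bound into the claimed $\O(\log m\log\log(mD))$.
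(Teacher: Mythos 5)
Your proposal is correct and follows essentially the same route as the paper: the paper packages your steps (i)--(iii) into the recursion $p(m)\ge\min_{\alpha}\exp(-c_2\tfrac dD\log\alpha\log\log(mD))\cdot p(m/\alpha)-(mD)^{-c_3}$ with $\alpha=2^{2^{i-1}}/1.5$ and then solves it, which is exactly your chain-rule product over iterations up to $i^\star$, the tie $2^{i^\star-1}\le\log m-\log m^\star$ via \cref{lem:prob-sample}, the $\O(1)$ case-8 factors of $\exp(-\O(d/D))$, and the absorbed $\poly^{-1}(mD)$ slack. One small slip: absorbing the additive slack requires an \emph{upper} bound on the exponent, i.e.\ it uses $d\le\epsilon D/\log\log(mD)$ so that the main term $\exp(-C\tfrac dD\log m\log\log(mD))\ge m^{-C\epsilon}$ dominates $(mD)^{-c_3}$ --- not the lower bound ``$d\ge1$ forces $\tfrac dD\log m\log\log(mD)\gg(mD)^{-1}$'' you cite, which points the wrong way.
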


\begin{proof}
We claim that $p(\cdot)$ satisfies the recursion
\[
p(m) \ge \min_{\alpha \in [c_1,m]} \exp(-\frac{c_2 d}{D} \cdot \log\alpha \log\log(mD)) \cdot p\left(\frac{m}{\alpha}\right) - \frac{1}{(mD)^{c_3}},
\]
for some constants $c_1,c_2 > 1$ and $c_3 \ge 2$, with the base case $p(1)=1$.
Assuming the claim for now, we can get
\[
p(m) \ge \exp\left(-\frac{c_4 d}{D} \cdot \log m \log\log(mD)\right),
\]
where $c_4 = c_2 + 1/\log c_1$, proving the lemma.
Indeed, note that RHS evaluates to $1$ for $m=1$, which is satisfied by \cref{alg:jason-modified} as no vertex is marked.
For $m > 1$, observe that by induction,
\begin{align*}
& \exp\left(-\frac{c_2 d}{D} \cdot \log\alpha \log\log(mD)\right) \cdot p\left(\frac{m}{\alpha}\right) - \frac{1}{(mD)^{c_3}}\\
& \hspace{.5cm} \ge \exp\left(-\frac{c_2 d}{D} \cdot \log\alpha \log\log(mD)\right) \cdot \exp\left(-\frac{c_4 d}{D} \cdot \log\frac{m}{\alpha} \log\log\frac{mD}{\alpha}\right) - \frac{1}{(mD)^{c_3}}\\
& \hspace{.5cm} \ge \exp\left(\frac{(c_4-c_2) d}{D} \cdot \log\alpha \log\log(mD) -\frac{c_4 d}{D} \cdot \log\alpha \log\log(mD) - \frac{c_4 d}{D} \cdot \log\frac{m}{\alpha} \log\log(mD)\right)\\
& \hspace{2cm} - \exp(-c_3 \cdot \ln(mD))\\
& \hspace{.5cm} \ge \exp\left(\frac{(c_4-c_2) d}{D} \cdot \log\alpha \log\log(mD) -\frac{c_4 d}{D} \cdot \log m \log\log(mD)\right) \cdot \biggl(1 - \exp\left(-\frac{c_3}{2} \cdot \ln(mD)\right)\biggr),
\end{align*}
because the exponent of the first factor is at least
\[
-c_4 \epsilon \log m \ge -c_3/2 \cdot \ln(mD),
\]
for $\epsilon \le c_3/(2c_4) \cdot \ln 2$.
Also note that
\begin{align*}
1 - \exp\left(-\frac{c_3}{2} \cdot \ln(mD)\right)
& = 1 - \frac{1}{(mD)^{c_3/2}}\\
& \ge \exp\left(-\frac{1}{(mD)^{c_3/2}-1}\right)\\
& \ge \exp\left(-\frac{1}{mD-1}\right)\\
& \ge \exp\left(-\frac{1}{D}\right),
\end{align*}
and that
\[
(c_4-c_2) \log\alpha \ge (c_4-c_2) \log c_1 = 1.
\]
Thus, all above together shows that the recursion is indeed satisfied.

It remains to prove the claim.
By union bound, with probability at least $1-2L(mD)^{-3} \ge 1-(mD)^{-2}$, all remaining vertices in $U$ satisfy the statement of \cref{lem:prob-sample} simultaneously in all $2L$ iterations.
Also, as noted in \cite{Li25}, vertices can be efficiently labeled as heavy or light with high probability by \cite{BringmannCF23}.
Its failure probability is subsumed by $(mD)^{-2}$ so long as $D=\poly(m)$.
We assume there is no sampling or labeling failure in the rest of the proof.
Let $c_0 > 0$ be such that the probability in \cref{lem:prob-mark} is at least $\exp(-c_0 d/D \cdot 2^i \log\log(mD))$.

In cases not having both in-heavy and out-heavy vertices, consider any unmarked vertex $v \in U$ at the beginning of iteration $i(\pm)$.
By \cref{lem:prob-mark}, it remains unmarked in the current iteration with probability at least $\exp(-c_0 d/D \cdot 2^i \log\log(mD))$.
Conditioned on this event, it goes into recursion with some probability $q$, and continues to the next iteration with probability $1-q$.

Since we are lower bounding the probability, it is without loss of generality to assume either $q=1$ or $q=0$.
If $q=1$, the recursive instance has at most $1.5m/2^{2^{i-1}}$ edges.
This is because for $i=1$, $G[B^\pm_G(v,r^\pm_1) \cap U]$ has at most $|E[B^\pm_G(v,r^\pm_1) \cap U]| \le |E[B^\pm_G(v,D/8) \cap U]| \le 3m/4$ edges, due to $r^\pm_1 \le a_0 = D/8$ and the assumption that there is no out-heavy or in-heavy vertex, respectively.
Meanwhile, for $i>1$, since \cref{alg:jason-modified} executes iterations in an alternating way, $G[B^\pm_G(v,r^\pm_i) \cap U]$ has at most $\vol_G(B^\pm_G(v,r^\pm_i) \cap U)/2 \le \vol_G(B^\pm_G(v,a_{i-1}) \cap U)/2 \le m/2^{2^{i-1}}$ edges after iteration $(i-1)(\mp)$, due to $r^\pm_i \le a_{i-1}$ and \cref{lem:prob-sample}.
Since all computation in recursion uses independent randomness, the probability is lowered by a multiplicative factor of at least $p(1.5m/2^{2^{i-1}})$.

If $q=0$, \cref{alg:jason-modified} will eventually go into recursion at some later iteration because each vertex in $U$ is sampled with probability $1$ in the last iteration.
Furthermore, since all computation in different iterations also use independent randomness, it implies
\begin{align*}
p(m)
& \ge \min_{i \in [L]} \left(\prod_{j=1}^i \exp\left(-\frac{c_0 d}{D} \cdot 2^j \log\log(mD)\right)\right) \cdot p\left(\frac{1.5m}{2^{2^{i-1}}}\right) - \frac{1}{(mD)^2}\\
& = \min_{i \in [L]} \exp\left(-\frac{c_0 d}{D} \cdot \sum_{j=1}^i 2^j \log\log(mD)\right) \cdot p\left(\frac{1.5m}{2^{2^{i-1}}}\right) - \frac{1}{(mD)^2}\\
& \ge \min_{i \in [L]} \exp\left(-\frac{c_0 d}{D} \cdot 2^{i+1} \log\log(mD)\right) \cdot p\left(\frac{1.5m}{2^{2^{i-1}}}\right) - \frac{1}{(mD)^2}
\end{align*}
By setting $c_1=4/3$, $c_2=\max(10 c_0,64)$, $c_3=2$, and $\alpha=2^{2^{i-1}}/1.5$, we get
\[
p(m) \ge \min_{\alpha \in [c_1,m]} \exp\left(-\frac{c_2 d}{D} \cdot \log\alpha \log\log(mD)\right) \cdot p\left(\frac{m}{\alpha}\right) - \frac{1}{(mD)^{c_3}},
\]
as claimed, because the minimization is relaxed to $\alpha \in [c_1,m]$.

If there are both in-heavy and out-heavy vertices, we verify that the inequality also holds.
In case (a), any vertex is marked with probability at most $2 \cdot 2d/(D/4-D/8) = 32d/D$ before possibly going into recursion.
In other words, it is not marked with probability at least $1 - 32d/D \ge \exp(-64d/D)$ for sufficiently small $\epsilon$.
The recursive instance has at most $m/2$ edges as a heavy ball is excluded.
So we have
\[
p(m) \ge \exp\left(-\frac{c_2 d}{D} \cdot \log\alpha \log\log(mD)\right) \cdot p\left(\frac{m}{\alpha}\right) - \frac{1}{(mD)^{c_3}},
\]
for $\alpha=2$, as desired.

In case (b), any vertex is marked with probability at most $2d/(D/8-D/16)=32d/D$.
Equivalently, it is not marked with probability at least $1-32d/D \ge \exp(-64d/D)$ for sufficiently small $\epsilon$.
As noted in \cref{rmk:alg}, it either goes into recursion with at most $m/2$ edges, or executes the same algorithm as in cases not having both in-heavy and out-heavy vertices, with a subset $U \subset V$ of remaining vertices initially.
A similar argument shows that the same inequality is satisfied with an additional $\exp(-64d/D)$ factor.
Thus, increasing $c_2$ by $64/\log c_1$ concludes the proof.
\end{proof}

\paragraph{Edge cutting probability.}

Fix an edge $e=(u,v)$ with $d_e \le \epsilon D/\log\log(mD)$.
It turns out the edge cutting probability can be directly derived from the clustering probability.

\begin{lemma}
\label{lem:prob-cut-cluster}
For edge $e=(u,v) \in E$ with $d_e \le d$, $(u,v)$ is cut only if none of $u,v$ is clustered.
\end{lemma}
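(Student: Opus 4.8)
The plan is to trace the edge $e=(u,v)$ down the recursion tree of \cref{alg:jason-modified} and show that if $(u,v)$ is cut, then both $u$ and $v$ get marked, so that $\mk(u)=\mk(v)=1$ and neither is clustered. Since marks are propagated upward along the returns of recursive calls, it suffices to exhibit a \emph{single} recursive call at which both $u$ and $v$ are marked.

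First I set up the trace. A recursive call on a graph with at most one edge returns the clustering induced by a topological order of that (necessarily acyclic) graph; if $u$ and $v$ both lie in it, the induced subgraph contains the edge $u\to v$, so $u$ precedes $v$ and $(u,v)$ is not cut there. A recursive call on more than one edge instead partitions its vertex set into parts, each of which is either recursed on or turned into a single cluster, and the outputs of these parts are concatenated in a fixed order (obtained from how $\sigma^-$ and $\sigma^+$ are assembled by prepends/appends and finally concatenated as $\sigma^-\circ\sigma^+$). Thus $u$ and $v$ either land in the same part -- in which case I descend into that part -- or in two distinct parts $P_u\neq P_v$; in the latter case the relative order of $u$'s and $v$'s final clusters equals the order of the blocks of $P_u$ and $P_v$, so $(u,v)$ is cut iff $P_u$'s block follows $P_v$'s. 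If $u$ and $v$ never separate, they reach a common leaf and $(u,v)$ is not cut. Hence it is enough to prove: at the (unique) call where $u$ and $v$ land in distinct parts with $P_u$ ordered after $P_v$, both $u$ and $v$ are marked.

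The core is the following observation, which I verify by going through every cut the algorithm makes -- the iterations $i(\pm)$ in steps~6--7, the three-way split in step~8(a), and the $\Bout$/$\Bin$ split in step~8(b) together with the step~6 (or 7) run it triggers. In each such cut there is either an in-ball $\Bin_G(c,r)$ (or a union of in-balls sharing the radius $r$, with associated distance $\dist_{\Bin}$) or an out-ball $\Bout_G(c,r)$ (likewise, with $\dist_{\Bout}$), call it $B$, such that: (i) the vertices inside $B$ have their output block placed \emph{before} those of the vertices outside $B$ when $B$ is an in-ball, and \emph{after} when $B$ is an out-ball; and (ii) at the moment of the cut the algorithm marks every currently present vertex $w$ whose associated distance to $B$ lies in $(r-d,r+d]$, a shell of width $2d$ straddling the boundary of $B$ on both sides (in step~8(a) the marking for the inner out-ball $\Bout_G(t,r)$ is additionally intersected with the enclosing in-ball $\Bin_G(s,r)$, which is harmless since both endpoints in question lie in $\Bin_G(s,r)$ there). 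Now if $P_u$ is ordered after $P_v$, then by (i) the edge $u\to v$ must straddle the corresponding $B$ in the bad orientation: $v\in B,\ u\notin B$ when $B$ is an in-ball, and $u\in B,\ v\notin B$ when $B$ is an out-ball. Letting $a$ and $b$ denote the associated distances of the inside and the outside endpoint, the edge $u\to v$ gives $b\le a+d_e\le r+d_e$ and $a\ge b-d_e>r-d_e$, so both $a,b\in(r-d_e,r+d_e]\subseteq(r-d,r+d]$ since $d_e\le d$; and both endpoints are present when the cut occurs, so (ii) marks them both.

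The main obstacle is the bookkeeping in the previous paragraph: one must check property~(i), the final order of the blocks, separately for each cut, carefully following the prepend/append discipline used to build $\sigma^-$ and $\sigma^+$ -- including the subtle case of step~8(b), where $\sigma^+$ already holds $\Bout$'s clusters before step~6 runs on $V\setminus\Bout$ -- and one must check property~(ii) for each cut, noting in particular that in step~8(b) the marked shell is defined through $\dist_{\Bin}/\dist_{\Bout}$, so the straddle inequalities must be argued simultaneously over all centers (the minimizing center for the inside endpoint witnesses $b\le a+d_e$, while every center witnesses $a\ge b-d_e$). The inequalities $b\le a+d_e$ and $a\ge b-d_e$ are immediate from the presence of the edge $u\to v$ in the current induced subgraph, and $d_e\le d$ is exactly what lets the width-$2d$ shell swallow the straddling edge; these are the only computations and each is a single line.
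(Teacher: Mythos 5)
Your proposal is correct and follows essentially the same route as the paper: reduce to the unique step where $u$ and $v$ are separated by a ball cut (the paper phrases this as induction on the recursion), observe that the cut orientation forces the inside/outside pattern, and then use $d_e\le d$ to place both endpoints' distances in the marked shell $(r-d,r+d]$. The only difference is presentational — you make the block-ordering bookkeeping for $\sigma^-\circ\sigma^+$ explicit where the paper appeals to "the construction of $\sigma$" — but the key inequalities and conclusion are identical.
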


\begin{proof}
We prove the lemma by induction.
Observe that $(u,v)$ is cut only if $u,v$ go into the same recursion, in which $(u,v)$ is cut, or if \cref{alg:jason-modified} cuts either a union $B$ of balls or its complement, with exactly one of $u,v$ in $B$.
The former case is handled by induction.

For the latter case, first consider a single out-ball $B=\Bout_G(t,r)$.
By the construction of $\sigma$, to cut $(u,v)$, it must be that $u \in B$ and $v \not\in B$, meaning that $\dist_G(t,u) \le r$ and $\dist_G(t,v) > r$.
Combined with $d_e \le d$, it further implies that $\dist_G(t,u) \ge \dist_G(t,v) - d_e > r-d$ and $\dist_G(t,v) \le \dist_G(t,u) + d_e \le r+d$.
In other words, $\dist_G(t,u),\dist_G(t,v) \in (r-d,r+d]$.
Since \cref{alg:jason-modified} marks all vertices in $\Bout_G(t,r+d) \setminus \Bout_G(t,r-d)$ whenever either $B$ or its complement is cut, $u,v$ must be marked at the same step, rendering both of them not clustered.

For a union of out-balls $B$, which only occurs in case (b) of having both in-heavy and out-heavy vertices, since \cref{alg:jason-modified} marks all vertices $v$ with $\dist_B(v) \in (r-d,r+d]$ whenever $B$ is cut, the same argument as above applies with $\dist_B(\cdot)$.

This concludes the proof as in-balls are symmetric.
\end{proof}

\begin{lemma}
\label{lem:prob-cut}
For edge $e=(u,v) \in E$, $(u,v)$ is cut by \cref{alg:jason-modified} with probability at most $1-\exp(-\O(d_e/D \cdot \log m \log\log(mD)))$.
\end{lemma}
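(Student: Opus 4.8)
The plan is to reduce \cref{lem:prob-cut} to the clustering bound of \cref{lem:prob-cluster} through \cref{lem:prob-cut-cluster}, after observing that the separation parameter $d$ plays no role in producing the clustering itself. The first step is to note that the marks $\mk(\cdot)$ produced by \cref{alg:jason-modified} never influence anything downstream: going through the algorithm, $d$ and the current marks are only ever used to \emph{write} marks, whereas every update of $U$, every choice of which ball, union of balls, or complement thereof to carve, and every subgraph passed into a recursive call depends only on the sampled centers and radii and on $U$. Hence the joint distribution of $\sigma$ --- and in particular the event ``$e$ is cut'' --- is exactly the same whether \cref{alg:jason-modified} is run with separation parameter $d$ or with separation parameter $d_e$. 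So for the purpose of bounding $\Pr(e\text{ is cut})$ we may freely pretend the algorithm was run with $d_e$.

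The second step is to analyze that run. We may assume $d_e \le \epsilon D/\log\log(mD)$; for larger $d_e$ the bound is not needed, and when $d_e/D \cdot \log m\log\log m = \O(1)$ it already follows from the $\O(d_e/D\cdot\log m\log\log m)$ cutting bound of \cite{L25} together with $1-x \ge \exp(-\O(x))$. With separation parameter $d_e$, \cref{lem:prob-cut-cluster} applies (its hypothesis becomes the trivial $d_e \le d_e$), so whenever $e$ is cut the endpoint $u$ is not clustered in this run. Moreover, the proof of \cref{lem:prob-cluster} goes through verbatim for any separation parameter that is at most $\epsilon D/\log\log(mD)$, in particular for $d_e$, giving $\Pr(u\text{ is clustered}) \ge \exp(-d_e/D\cdot\O(\log m\log\log(mD)))$. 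Combining, $\Pr(e\text{ is cut}) \le \Pr(u\text{ is not clustered}) \le 1-\exp(-d_e/D\cdot\O(\log m\log\log(mD)))$, and by the coupling of the first step this is also the bound for the original run.

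There is essentially no heavy computation here: once the coupling is set up, the proof is a one-line combination of two already-established lemmas. The step that genuinely needs care is the bookkeeping behind the first paragraph --- one should inspect every branch of \cref{alg:jason-modified} (the cases of Steps~6--8, the iterations $i(\pm)$ within Steps~6 and~7, and the sub-cases of Step~8, including its recursive invocation of Step~6 or~7) and confirm in each that $d$ and $\mk(\cdot)$ are write-only, never affecting $U$, the carved sets, or the recursive instances. Since the whole reduction rests on this, it should be argued explicitly rather than waved through. A minor secondary point is to check that the range hypothesis of \cref{lem:prob-cluster} --- separation at most $\epsilon D/\log\log(mD)$ --- is met when instantiated with $d_e$, which holds precisely because we restricted to $d_e \le \epsilon D/\log\log(mD)$.
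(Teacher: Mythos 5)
Your proposal is correct and matches the paper's proof: the paper likewise observes (via \cref{rmk:alg}) that $d$ only affects marking and not the clustering $\sigma$, re-runs the algorithm conceptually with $d=d_e$, and then chains \cref{lem:prob-cut-cluster} with \cref{lem:prob-cluster}. Your extra care about verifying the write-only nature of the marks and the range hypothesis $d_e \le \epsilon D/\log\log(mD)$ is exactly the bookkeeping the paper relies on implicitly.
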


\begin{proof}
As noted in \cref{rmk:alg}, $d$ is used only for clustering, while having nothing to do with cutting.
In other words, the event of $(u,v)$ being cut is independent of the value of $d$.
Suppose \cref{alg:jason-modified} is executed with $d=d_e$ instead.
The above argument shows the probability of $(u,v)$ being cut is unchanged.
By \cref{lem:prob-cut-cluster}, the probability is upper bounded by the probability of $u$ not being clustered, which is at most $1-p(m)$ by \cref{lem:prob-cluster}.
\end{proof}

\paragraph{Separation property.}

As noted in \cref{rmk:alg}, the intuition for the separation property comes from that vertices within distance $d$ to the cutting boundary are marked and excluded from clustering.
Furthermore, we emphasize that it is necessary to mark vertices on both sides of the boundary.
To see this, consider two vertices on the same side of the cut with distance at most $d$ between them.
It is possible that the distance increases to larger than $d$ on the induced subgraph in the recursion, which happens if some vertex on the shortest path is on the other side of the cut.
From the view of the recursion, these two vertices are already separated, while actually they are not in the original graph.
This would fail the attempt to prove the separation property by induction.
Nevertheless, by marking vertices on both sides of the cutting boundary, at least one of the two vertices must be marked in such case.
This is formalized in the following lemma.

\begin{lemma}
\label{lem:dist}
For integers $r,d,k \ge 1$ such that $r \ge d$, and vertices $t_1,\ldots,t_k \in V$, let $B = \bigcup_{i=1}^k \Bout_G(t_i,r)$,  $G'=G[B]$, and $G''=G[V \setminus B]$.
For $u,v \in V$, it holds that:
\begin{enumerate}
    \item If $\dist_B(u) \le r-d$ and $\dist_B(v) > r+d$, then $\dist_G(u,v) \ge 2d$.
    \item If $\dist_B(u),\dist_B(v) \le r-d$ and $\dist_G(u,v) \le d$, then $\dist_{G'}(u,v)=\dist_G(u,v)$.
    \item If $\dist_B(u),\dist_B(v) > r+d$ and $\dist_G(u,v) \le d$, then $\dist_{G''}(u,v)=\dist_G(u,v)$.
\end{enumerate}
\end{lemma}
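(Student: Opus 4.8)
The plan is to prove the three parts by direct distance arguments, exploiting that $B = \bigcup_{i=1}^k \Bout_G(t_i,r)$ is a union of out-balls and that $\dist_B(w) = \min_i \dist_G(t_i,w)$. The key structural fact I would use repeatedly is monotonicity of $\dist_B$ along edges: if $(x,y)$ is an edge of length $d_{xy}$, then $\dist_B(y) \le \dist_B(x) + d_{xy}$, because reaching $x$ from some $t_i$ and then traversing $(x,y)$ gives a walk to $y$. More generally, $\dist_B(y) \le \dist_B(x) + \dist_G(x,y)$ for any $x,y$.

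For part~1, suppose $\dist_B(u) \le r-d$ and $\dist_B(v) > r+d$. By the triangle-type inequality above, $\dist_B(v) \le \dist_B(u) + \dist_G(u,v) \le (r-d) + \dist_G(u,v)$. Combining with $\dist_B(v) > r+d$ yields $\dist_G(u,v) > 2d$, which is even slightly stronger than the claimed $\ge 2d$; in any case the bound follows.

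For part~2, suppose $\dist_B(u),\dist_B(v) \le r-d$ and $\dist_G(u,v) \le d$. Take any shortest $u \to v$ path $P$ in $G$; its total length is at most $d$. For any vertex $w$ on $P$, we have $\dist_B(w) \le \dist_B(u) + \dist_G(u,w) \le (r-d) + d = r$, so $w \in B$. Hence $P$ lies entirely in $G' = G[B]$, giving $\dist_{G'}(u,v) \le \dist_G(u,v)$; the reverse inequality is immediate since $G'$ is a subgraph. Thus $\dist_{G'}(u,v) = \dist_G(u,v)$. Part~3 is the symmetric argument on the complement: with $\dist_B(u),\dist_B(v) > r+d$ and a shortest $u \to v$ path $P$ of length $\le d$, every vertex $w$ on $P$ satisfies $\dist_B(w) \ge \dist_B(u) - \dist_G(u,w)$; but here I must be careful, because $\dist_B(w) \ge \dist_B(u) - \dist_G(w,u)$ would need a path $w \to u$, not $u \to w$. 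The correct route is: $\dist_B(u) \le \dist_B(w) + \dist_G(w,u)$ is false in general, so instead use $\dist_B(w) \le \dist_B(u) + \dist_G(u,w)$ in the wrong direction won't help either. The right inequality is $\dist_B(u) \le \dist_B(w) + \dist_G(w,u)$ — no. Let me instead observe that for $w$ on the path from $u$ to $v$, the sub-path from $w$ to $v$ has length $\le d$, so $\dist_B(v) \le \dist_B(w) + \dist_G(w,v) \le \dist_B(w) + d$, whence $\dist_B(w) \ge \dist_B(v) - d > (r+d) - d = r$, so $w \notin B$ and $P$ lies in $G''$. This gives $\dist_{G''}(u,v) = \dist_G(u,v)$.

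The main obstacle is purely a matter of getting the directions of the triangle inequalities right for the out-ball quantity $\dist_B$: it is subadditive in the forward direction along edges, so part~2 uses a vertex $w$'s distance \emph{from} $u$ and part~3 must use a vertex $w$'s distance \emph{to} $v$ (sub-path lengths bounded by $\le d$ in each case). No deeper idea is needed; once the monotonicity lemma $\dist_B(y) \le \dist_B(x) + \dist_G(x,y)$ is stated, all three parts are one-line consequences.
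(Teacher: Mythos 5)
Your proof is correct and follows essentially the same route as the paper: the triangle-type inequality $\dist_B(y) \le \dist_B(x) + \dist_G(x,y)$ obtained by minimizing over ball centers, applied from $u$ forward for parts 1 and 2 and to the suffix toward $v$ for part 3. Aside from the visible false starts in part 3 (which you correct), the final argument matches the paper's.
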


\begin{proof}
For the first item, note that $\dist_G(t_i,v) \le \dist_G(t_i,u) + \dist_G(u,v)$ for any $i \in [k]$ by triangle inequality.
Minimizing over $i \in [k]$ on both sides, we get $\dist_B(v) \le \dist_B(u) + \dist_G(u,v)$.
Thus, $\dist_G(u,v) \ge \dist_B(v) - \dist_B(u) \ge (r+d) - (r-d) = 2d$.

For the second item, consider the shortest path $P$ from $u$ to $v$ in $G$.
If $\dist_G(u,v) \le d$, by a similar argument, any vertex $x$ on $P$ satisfies $\dist_B(x) \le \dist_B(u) + \dist_G(u,x) \le (r-d) + \dist_G(u,v) \le (r-d) + d = r$.
In other words, $P$ is entirely contained in $G'$ and thereby $\dist_{G'}(u,v)=\dist_G(u,v)$.

For the third item, also consider the shortest path $P$ from $u$ to $v$ in $G$.
If $\dist_G(u,v) \le d$, any vertex $x$ on $P$ satisfies $\dist_B(x) \ge \dist_B(v) - \dist_G(x,v) > (r+d) - \dist_G(u,v) \ge (r+d) - d = r$.
So $P$ is entirely contained in $G''$ and thereby $\dist_{G''}(u,v)=\dist_G(u,v)$.
\end{proof}

\begin{lemma}
\label{lem:sep}
Let $\sigma = (C_1,\ldots,C_k)$ and $\mk(\cdot)$ be returned by \cref{alg:jason-modified}.
For $i,j \in [k]$ such that $i > j$, and vertices $u \in C_i$ and $v \in C_j$ such that $\mk(i)=\mk(j)=0$, it holds that $\dist_G(u,v) > d$.
\end{lemma}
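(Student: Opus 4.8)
The plan is to prove \cref{lem:sep} by induction on the number of edges $m$, in contrapositive form: assuming $\mk(u)=\mk(v)=0$ in the returned marking and $\dist_G(u,v)\le d$, I will show that $u$'s cluster is \emph{not} ordered strictly after $v$'s cluster in $\sigma$, which contradicts $i>j$. (The cases $u=v$ and $\dist_G(u,v)=\infty$ are trivial.) The base case $m\le 1$ is immediate: $\sigma$ is the singleton clustering induced by a topological order of the acyclic graph $G$, and no vertex is ever marked, so a path $u\to v$ — which exists since $\dist_G(u,v)<\infty$ — forces $u$'s singleton to precede $v$'s.

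For the inductive step I would first unify the branches of \cref{alg:jason-modified}: in every case the algorithm partitions $V$ (or the working set $U$) into \emph{regions} — out-balls $\Bout_G(t,r)\cap U'$, in-balls $\Bin_G(t,r)\cap U'$, a complement $V\setminus\Bin_G(s,r)$, an intersection $\Bin_G(s,r)\cap\Bout_G(t,r)$ — each either recursed upon (its returned clustering $\sigma'$ inserted \emph{contiguously} into $\sigma$) or output as a single cluster, and before recursing on a region it marks the full distance-$d$ boundary annulus of that region. The \emph{inner} half of each annulus gives the key estimate: an unmarked vertex lying strictly inside a recursed ball $\Bout_G(t,r)\cap U'$ satisfies $\dist_G(t,\cdot)\le r-d$ (and the mirror statements for in-balls, for the complement, and for the intersection region). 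I then split on whether $u,v$ land in the same region or in different ones.

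Different regions: suppose toward a contradiction that $u\in R_u\ne R_v\ni v$ and $R_u$'s block is ordered strictly after $R_v$'s block in $\sigma$. A short case check over the branches — using only that $\sigma^+$ is built by prepending, $\sigma^-$ by appending, and $\sigma=\sigma^-\circ\sigma^+$ — shows this forces one of $u,v$ to lie strictly inside the \emph{first-carved} of the two regions (so, by the previous estimate, within distance $r-d$ of the relevant center), while the other vertex, belonging to a later-carved region, is still present at that moment but outside (so at distance $>r$ from that center). The bound $\dist_G(u,v)\le d$ along the $u\to v$ path, combined with the triangle inequality taken in the direction that uses this path, then forces the outside vertex to within distance $r$ of the center — a contradiction. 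Same region: if $u,v$ both land in the same recursed region $R$, the crux is a sub-claim that the shortest $u$-$v$ path $P$ in $G$ lies entirely in $R$; granting it, $\dist_{G[R]}(u,v)=\dist_G(u,v)\le d$ while $u,v$ are still unmarked in the recursive output, so the inductive hypothesis for $G[R]$ (which has fewer edges) together with the contiguity of the spliced block $\sigma'$ finishes the case (and if $R$ is the single-cluster region $\Bin_G(s,r)\cap\Bout_G(t,r)$ there is nothing to prove, since then $u,v$ share a cluster). The sub-claim combines \cref{lem:dist} and its in-ball/complement analogues — which, fed with $\dist_G(u,v)\le d$ and the $r-d$ estimates, show $P$ stays inside the relevant ball/region — with the observation that $P$ cannot detour through a vertex $x$ removed by an \emph{earlier} cut at the current level: if such an $x$ lay in an earlier-carved out-ball $\Bout_G(t^*,r^*)\cap U^*$ then $\dist_G(t^*,v)\le\dist_G(t^*,x)+\dist_G(x,v)\le r^*+d$, while $v\notin\Bout_G(t^*,r^*)$ and $v\in U^*$ place $v$ in the \emph{outer} half of that ball's annulus, so $v$ would have been marked — contradicting $\mk(v)=0$ (and symmetrically one marks $u$ if $x$ lay in an earlier in-ball). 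This is where the outer half of each annulus is used, so the two cases together are exactly what makes marking on both sides of every cut necessary.

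I expect the main obstacle to be the no-detour half of the same-region sub-claim, which is precisely the scenario flagged before the lemma, where a short $u$-$v$ path in $G$ could blow up inside a recursive subgraph by skirting the far side of an earlier cut. It is delicate because the hypothesis supplies only a directed path $u\to v$, so in each triangle-inequality step one must bound the distance of whichever of $u,v$ sits at the ``$\to v$'' end of the detour (the same asymmetry already present in \cref{lem:prob-cut-cluster}); a secondary nuisance is simply the number of branches of \cref{alg:jason-modified} that must be inspected, though after the ``regions and boundary annuli'' abstraction they all reduce to the two cases above.
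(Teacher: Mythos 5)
Your proposal is correct and follows essentially the same route as the paper's proof: induction over the recursion, a same-region/different-region case split, the $\sigma^-\circ\sigma^+$ prepend/append ordering structure to pin down which of $u,v$ sits inside the first-carved region, and the $r\pm d$ annulus estimates of \cref{lem:dist} combined with the directed triangle inequality. The only organizational difference is how the short $u\to v$ path is shown to survive into the recursed region: the paper applies \cref{lem:dist} step by step at each earlier cut where both endpoints stay outside, whereas you argue directly that a detour vertex removed by an earlier cut would place $u$ or $v$ in the outer half of that cut's annulus and hence mark it -- both arguments exploit the two-sided marking in the same way.
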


\begin{proof}
We prove the lemma by induction.
In the base case of $m \le 1$, $\sigma$ is induced by a topological order, implying that there is no path from $u$ to $v$.
Otherwise, \cref{alg:jason-modified} constructs $\sigma$ step by step while cutting either unions of balls or their complements.
At each step, we only consider the case of out-balls.
The case of in-balls is symmetric.

First suppose both of $u,v$ are in a union of out-balls $B$.
In order for $u,v$ not to be marked, we must have $\dist_B(u),\dist_B(v) \le r-d$.
If $\dist_G(u,v) \le d$, we get $\dist_{G[B]}(u,v) = \dist_G(u,v) \le d$ by \cref{lem:dist}.
In other words, the shortest path from $u$ to $v$ is entirely contained in $G[B]$.
Regardless of \cref{alg:jason-modified} recursing on either $G[B]$ or $G[U \setminus B]$, by induction, at least one of $u,v$ will eventually be marked.
The same argument applies to the case of $u,v$ both in $U \setminus B$.

If exactly one of $u,v$ is in $B$, observe that \cref{alg:jason-modified} constructs $\sigma$ such that each time recursing on an ``out-component'' (i.e., the subgraph induced by a union of out-balls or complement of a union of in-balls), it concatenates the ordered clustering $\sigma'$ of the out-component to the beginning $\sigma^+$.
Meanwhile, each time recursing on an ``in-component'' (i.e., the subgraph induced by a union of in-balls or complement of a union of out-balls), it concatenates the ordered clustering $\sigma'$ of the in-component to the end $\sigma^-$.
\cref{alg:jason-modified} returns $\sigma^- \circ \sigma^+$ after all recursions.

Altogether, it ensures that any vertex of an out-component occurs after vertices of all in-components and remaining vertices of $U$ when recursing on the component.
Also, any vertex of an in-component occurs before vertices of all out-components and remaining vertices of $U$ when recursing on the component.
As a result, it must be that $u \in B$ and $v \not\in B$ as $B$ is a union of out-balls.
In order for $u,v$ not to be marked, we must have $\dist_B(u) \le r-d$ and $\dist_B(v) > r+d$.
By \cref{lem:dist}, we get $\dist_G(u,v) \ge 2d > d$ as claimed.
\end{proof}

\paragraph{Diameter property.}

\begin{lemma}
Let $\sigma=(C_1,\ldots,C_k)$ be returned by \cref{alg:jason-modified}.
For $i \in [k]$, $C_i$ has diameter at most $D$.
\end{lemma}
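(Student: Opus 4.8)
The plan is to prove the statement by induction on the number of edges of the graph on which \cref{alg:jason-modified} is invoked, exploiting the fact that every recursive call is on an \emph{induced} subgraph of the current graph. Concretely, I would establish the following invariant: for every invocation of \cref{alg:jason-modified} on a graph $H$ (always an induced subgraph of the original input) with diameter parameter $D$, every cluster $C$ in the returned clustering satisfies $\dist_H(u,v)\le D$ for all $u,v\in C$. Applied to the top-level call, where $H$ is the original graph, this gives the lemma. The base case $m\le 1$ is immediate: $\sigma$ is then the singleton clustering induced by a topological order, so each cluster is a single vertex and has diameter $0\le D$.

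For the inductive step I would do a case analysis on \emph{how} a cluster of $\sigma^-\circ\sigma^+$ is created. Inspecting \cref{alg:jason-modified}, every returned cluster is either (a) a cluster produced by a recursive call on an induced subgraph $G[B]$ of $H$ with strictly fewer edges — every recursion in steps 6, 7, 8(a)iii, 8(a)vi and 8(b) is of this form, and these calls never create clusters ``directly'', they only append recursively returned clusterings — or (b) the single set $\Bin_H(s,r)\cap\Bout_H(t,r)$ that step 8(a)vii appends to $\sigma^-$. In case (a), the inductive hypothesis applied to the recursive call on $H'=G[B]$ gives $\dist_{H'}(u,v)\le D$ for all $u,v\in C$; since $H'$ is an induced subgraph of $H$, deleting vertices can only increase shortest-path lengths, so $\dist_H(u,v)\le\dist_{H'}(u,v)\le D$, as needed. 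In case (b), I would use that $r\le D/4$ (it is sampled from $(D/8,D/4]$ in step 8(a)i) and that the guard of case 8(a) guarantees the chosen in-heavy vertex $s$ and out-heavy vertex $t$ satisfy $\dist_H(s,t)\le D/4$. Then for any $u,v\in C=\Bin_H(s,r)\cap\Bout_H(t,r)$ we have $\dist_H(u,s)\le r$ (because $u\in\Bin_H(s,r)$) and $\dist_H(t,v)\le r$ (because $v\in\Bout_H(t,r)$), whence
\[
\dist_H(u,v)\le\dist_H(u,s)+\dist_H(s,t)+\dist_H(t,v)\le r+\tfrac{D}{4}+r\le\tfrac{3D}{4}\le D,
\]
and by the symmetric argument (using $v\in\Bin_H(s,r)$ and $u\in\Bout_H(t,r)$) also $\dist_H(v,u)\le r+\tfrac{D}{4}+r\le D$. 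Hence $\diam_H(C)\le D$, completing the induction.

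The only point that needs care — and the closest thing to an obstacle — is being exhaustive in the enumeration of where clusters come from: one must check that steps 6, 7, 8(a)iii, 8(a)vi, 8(b) all merely concatenate recursively-returned clusterings (possibly on a restricted vertex set $U\subsetneq V(H)$, which is still an induced subgraph with strictly fewer edges, as used throughout the running-time and probability analyses), so that step 8(a)vii is the unique place a cluster is produced without being inherited from a recursive call. Given this, the only genuinely ``new'' cluster to bound is the round-trip-type set $\Bin_H(s,r)\cap\Bout_H(t,r)$, handled by the triangle inequality above, and the transfer of the bound from a child call to its parent is exactly the monotonicity of distances under vertex deletion. Everything else is elementary.
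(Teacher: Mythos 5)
Your proof is correct and takes essentially the same route as the paper: induction with the two base cases ($m\le 1$ yielding singleton clusters, and the directly-appended cluster $\Bin_G(s,r)\cap\Bout_G(t,r)$ in case 8(a)), with the inductive step transferring the bound from recursive calls via monotonicity of distances under vertex deletion. The only difference is that you spell out the triangle-inequality argument $\dist(u,v)\le r+D/4+r\le D$ for the round-trip cluster, which the paper defers to \cite{L25}'s Lemma 10.
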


\begin{proof}
We prove by induction with two base cases.
One is when $m \le 1$, implying that each $G[C_i]$ is a single vertex.
The other is case (a) of having both in-heavy and out-heavy vertices.
The same argument as in \cite{Li25}'s Lemma 10 shows that $\Bin_G(s,r) \cap \Bout_G(t,r)$ has diameter at most $D$.
\end{proof}

We remark that $C_i$ may not be strongly connected as required by \cref{thm:jason-modified}.
Nevertheless, the diameter property can be satisfied by clustering $C_i$ into strongly connected components and ordering them topologically.
No other property is affected.

\paragraph{Running time.}

To bound the running time, we have to implement vertex marking as efficiently as ball cutting.
This can be achieved by adapting Dijkstra’s algorithm as will be seen in \cref{lem:time}.
We emphasize that \cite{Li25}'s analysis for the straightforward implementation of the algorithm bounds the running time in expectation because
\begin{itemize}
    \item The number $k$ of sampled vertices in each iteration is bounded in expectation.
    \item For fixed $k$, the number of times each vertex is enqueued in Dijkstra’s algorithm is bounded in expectation as well.
\end{itemize}
\cite{Li25} makes the running time bound also hold with high probability by restarting.
However, the same cannot be trivially applied to \cref{alg:jason-modified}.
Indeed, the event of restarting is correlated with $k$ and the random order, while the restarting probability can be as large as a constant.
So it may overwhelm the exponentially small probability guaranteed by \cref{lem:prob-mark}.
Nevertheless, we show that both of the above bounds actually happen with high probability.

\begin{lemma}
\label{lem:time}
\cref{alg:jason-modified} can be implemented in $\O((m+n\log\log n)\log^2 n)$ time with high probability on graphs with polynomially bounded, integral edge lengths.
\end{lemma}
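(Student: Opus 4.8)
The plan is to combine \cite{L25}'s running-time analysis with two new ingredients: an implementation of the vertex-marking step whose cost is, up to constants, that of the ball-cutting step it accompanies, and a concentration argument that upgrades the two quantities \cite{L25} only bounds in expectation — the number of sampled centers in an iteration, and, for a fixed center set, the number of heap insertions of any fixed vertex — to hold with high probability. The second ingredient is unavoidable because \cite{L25}'s restart trick fails here: the restart event is correlated with both the sampled center set and the random processing order and has probability as large as a constant, so conditioning on ``no restart'' would distort the exponentially small non-marking probabilities of \cref{lem:prob-mark}.

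For the marking implementation, I would fold it into the Dijkstra call that computes the ball. When \cref{alg:jason-modified} processes a sampled center $t$ with radius $r$ in an iteration $i(\pm)$, run Dijkstra from $t$ within the current set $U$, settling vertices in increasing distance order and stopping once the settled distance exceeds $r+d$ (or $U$ is exhausted): a settled $w$ with $\dist(t,w)\le r$ joins the ball $B$ (hence the recursion, and leaves $U$), and a settled $w$ with $\dist(t,w)\in(r-d,r+d]$ gets $\mk(w)\gets1$. In case~(b) of \cref{alg:jason-modified}, where unions of balls are cut, one runs a single multi-source Dijkstra from all heavy centers and uses $\dist_{B^\pm}(\cdot)$ identically. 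The only extra work relative to \cite{L25} is exploring the ``marking annulus'' of vertices at distance in $(r,r+d]$; since $d\le\epsilon D/\log\log m$ and each radius satisfies $r\le a_{i-1}$ with $a_{i-1}-a_i\ge D/(16L)$, for $\epsilon$ small this annulus is contained in a ball of a coarser scale whose volume is bounded by \cref{lem:prob-sample}, so \cite{L25}'s potential/charging argument absorbs it up to constants; one also checks that an annulus vertex stays in $U$ but is re-explored at most $L=O(\log\log m)$ extra times per recursion level, which is harmless.

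For the concentration step, fix a recursive call on $m'$ edges and an iteration $i(\pm)$, and condition on the entire execution prefix up to the start of that iteration (in particular on $U$ and, once drawn, on the radius). The indicators $[v\in S^\pm_i]$ for $v\in U$ are then mutually independent with $\sum_v\Pr[v\in S^\pm_i]\le (2\vol_G(U)/m')\cdot 2^{2^i}\ln(m'D)\le 4\cdot 2^{2^i}\ln(m'D)$, so a Chernoff bound gives $|S^\pm_i|=O(2^{2^i}\ln(m'D)+\log n)$ with probability $1-n^{-\Omega(1)}$. For a fixed vertex $u$, the number of sampled centers whose exploration reaches $u$ equals $|S^\pm_i\cap B^\mp_G(u,r^\pm_i+d)|$, again a sum of independent indicators whose expectation is the quantity \cite{L25} bounds in expectation, hence it is $O((\text{that bound})+\log n)$ w.h.p., and the heap operations charged to $u$ in the iteration are this count times $O(\deg(u))$. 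A union bound over the $O(\log m)$ recursion levels, the $O(\log\log m)$ iterations per node, and all $n$ vertices makes all of these hold simultaneously w.h.p.; on that event the running time is, exactly as in \cite{L25}, the total explored volume (controlled via disjointness of the carved balls, plus the annulus overhead above) plus an $O(n')$-per-iteration term for vertices examined with empty balls, all weighted by the $O(\log n)$ cost of binary-heap operations and of the heavy/light labeling computed via \cite{C97,BringmannCF23}; summing over the recursion levels, where the edge and vertex counts total $\le m$ and $\le n$, yields the claimed $O((m+n\log\log n)\log^2 n)$.

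The main obstacle is the concentration step: the sampling probabilities, the radii, and the processing order in iteration $i$ all depend on a long chain of earlier random choices, so one must condition on the full prefix of the execution and verify that, conditionally, the per-vertex indicators are genuinely independent (and that fixing $|S^\pm_i|$ only makes them negatively associated, so the Chernoff upper tail still applies), while keeping the expectation estimates used inside the conditioning equal to the unconditional worst-case bounds of \cite{L25} and \cref{lem:prob-sample} so that no circularity arises. A secondary subtlety, in the marking implementation, is fitting the annulus exploration into \cite{L25}'s volume accounting, which hinges on $r+d$ staying below a coarser scale's radius so that \cref{lem:prob-sample} still keeps the annulus small.
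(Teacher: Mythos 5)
Your proposal has the right architecture (fold marking into the ball-computing Dijkstra; upgrade \cite{L25}'s expected bounds to high-probability ones), and your Chernoff bound on $|S^\pm_i|$ matches the paper. But the core of the argument --- the per-vertex work bound --- targets the wrong quantity and would not give the claimed running time. You bound the heap operations charged to $u$ by $|S^\pm_i\cap B^\mp_G(u,r^\pm_i+d)|$ and assert its expectation is ``the quantity \cite{L25} bounds in expectation.'' It is not. By the sampling probabilities and \cref{lem:prob-sample}, the expected number of sampled centers within distance $r^\pm_i+d$ of $u$ is of order $2^{2^{i-1}}\ln(mD)$, which is enormous; summing $\deg(u)$ times this over all $u$ gives roughly $m\cdot 2^{2^{i-1}}\ln(mD)$ work in iteration $i$, far exceeding the target. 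What \cite{L25} actually bounds by $\O(\log k)$ is the number of centers that \emph{improve} $u$'s best-so-far distance, and this is small only because the centers are processed in a uniformly random order (a record-minima argument), not because of the independence of the sampling indicators. Your concentration step, which conditions on the prefix and applies Chernoff to the independent membership indicators, therefore establishes concentration of a quantity that is too large to be useful, and it entirely bypasses the randomness of the processing order, which is the randomness that actually matters here.

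This gap is compounded by the implementation: because marked annulus vertices remain in $U$, a vertex can lie within distance $r^\pm_i+d$ of many centers in the \emph{same} iteration, so your claim that such a vertex is ``re-explored at most $L$ extra times per recursion level'' is unjustified. The paper's fix is an explicit rule --- each vertex stores its best distance over all previously processed centers (plus indicators for ball membership and marked status), and is enqueued by a new center only if that center strictly shortens this distance --- together with a correctness argument that skipping non-improving centers loses nothing. The high-probability statement is then obtained by a Chernoff bound on the record-minima process over the random order (each newly enqueuing center is, with probability $1/2$, among the closer half of the surviving candidates, so $\O(\log k)=\O(2^i+\log\log(mD))$ enqueues per iteration and $\O(\log(mD))$ summed over iterations, w.h.p.). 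To repair your proof you would need to add both the distance-improvement rule and this concentration-of-records argument; the independent-indicator Chernoff bound cannot substitute for it.
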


\begin{proof}
We first bound the running time in expectation.
Compared to \cite{Li25}'s algorithm, the only additional running time of \cref{alg:jason-modified} comes from marking vertices.
We focus on cases not having both in-heavy and out-heavy vertices, which are the most time consuming.
It turns out that it can be implemented in the same way as \cite{Li25}'s approach to computing balls, which is based on Dijkstra’s algorithm.

Concretely, we execute Dijkstra’s algorithm for each vertex $v \in S^\pm$ in the sampled random order, to compute $B^\pm_G(v,r^\pm) \cap U$ while marking vertices in $(B^\pm_G(v,r^\pm+d) \setminus B^\pm_G(v,r^\pm-d)) \cap U$.
Throughout, each vertex $u$ maintains the following additional information.
\begin{itemize}
    \item The shortest distance at $u$ for any $v' \in S^\pm$ before $v$ in the random order.  
    \item An indicator of whether $u \in B^\pm_G(v',r^\pm)$ for some $v' \in S^\pm$ before $v$ in the random order.
    \item An indicator of whether $u$ has been marked.
\end{itemize}
All above information can be updated in constant time.
Note that $u$ is marked by $v$ only if $u \not\in B^\pm_G(v',r^\pm)$ for all $v' \in S^\pm$ before $v$ in the random order.
Meanwhile, $u$ is enqueued in Dijkstra’s algorithm only if the shortest distance at $u$ is shortened by $v$.
This is because otherwise, for any vertex $u' \in B^\pm_G(v,r^\pm+d)$ reached via $u$, either $u \in B^\pm_G(v',r^\pm)$ for some $v' \in S^\pm$, or $u$ has been marked.
Thus, the same analysis as in \cite{Li25}'s Lemmas 8 and 13 bounds the running time in expectation.

We now show that the running time actually happens with high probability.
In the following, $m$ always denotes the number of edges in the original graph (as opposed to the induced subgraph for the current recursive instance), and high probability means $1-1/\poly(mD)$.
For each recursive instance and each iteration $i(\pm)$, by Chernoff bound, the number $k$ of sampled vertices is at most $\O(2^{2^i}\log(mD))$ with high probability.
By union bound, with high probability, it holds for all recursive instances and all iterations.

Furthermore, consider fixed vertex $u$ and the random order for sampled vertices.
With probability $1/2$, the first vertex in the random order is among the $k/2$ closest to $u$, so at least $k/2$ sampled vertices will not cause $u$ to be enqueued in Dijkstra’s algorithm.
Similarly, for any sampled vertex that has not been eliminated by previous sampled vertices, it eliminates at least half of remaining sampled vertices with probability $1/2$.
All sampled vertices are eliminated once $\log k=\O(2^i+\log\log(mD))$ such vertices are encountered.
Summing over all iterations, a total of $\O(\log(mD))$ such vertices are encountered.
Since this property holds with probability $1/2$ each time $u$ is enqueued, again by Chernoff bound, $u$ is enqueued a total of $\O(\log(mD))$ times with high probability.
Another union bound over all recursive instances and all vertices concludes the proof as $D=\poly(m)$.
\end{proof}

\bibliographystyle{alpha}
\bibliography{ref}

\appendix

\section{Missing Proofs from \cref{sec:jason-modified}}
\label{append:proof-jason}

\begin{lemma}
\label{lem:avg-ratio-adj}
For reals $k_1,k_2,k_3 > 0$, it holds that
\[
\frac{k_2}{k_1+k_2}+\frac{k_3}{k_1+k_2+k_3} \le 2 - 2\sqrt{\frac{k_1}{k_1+k_2+k_3}}.
\]
Equality is achieved if and only if $k_1+k_2=\sqrt{k_1(k_1+k_2+k_3)}$.
\end{lemma}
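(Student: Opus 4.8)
The plan is to make the substitution $T = k_1+k_2$ and $S = k_1+k_2+k_3$, which turns both sides into something that telescopes. First I would rewrite the left-hand side: since $k_2 = T - k_1$ we have $\frac{k_2}{k_1+k_2} = \frac{T-k_1}{T} = 1 - \frac{k_1}{T}$, and since $k_3 = S - T$ we have $\frac{k_3}{k_1+k_2+k_3} = \frac{S-T}{S} = 1 - \frac{T}{S}$. Hence the left-hand side equals $2 - \frac{k_1}{T} - \frac{T}{S}$. The right-hand side is $2 - 2\sqrt{k_1/S}$, so after cancelling the $2$'s and flipping signs, the claimed inequality is equivalent to
\[
\frac{k_1}{T} + \frac{T}{S} \;\ge\; 2\sqrt{\frac{k_1}{S}}.
\]

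Now the point is that $\frac{k_1}{S} = \frac{k_1}{T}\cdot\frac{T}{S}$, so this is exactly the AM--GM inequality applied to the two positive reals $\frac{k_1}{T}$ and $\frac{T}{S}$ (both positive since $k_1, T, S > 0$, which holds because $k_1, k_2, k_3 > 0$). Thus the inequality holds, and equality in AM--GM occurs precisely when the two terms are equal, i.e.\ $\frac{k_1}{T} = \frac{T}{S}$, equivalently $T^2 = k_1 S$, i.e.\ $T = \sqrt{k_1 S}$ (taking the positive root). Unwinding the substitution, this is $k_1+k_2 = \sqrt{k_1(k_1+k_2+k_3)}$, as stated.

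There is no real obstacle here: the only "idea" is spotting the substitution $T = k_1+k_2$, $S = k_1+k_2+k_3$ that makes the partial sums in the denominators collapse and exposes the product structure $k_1/S = (k_1/T)(T/S)$; after that it is a one-line application of AM--GM together with its equality case. One should just double-check that all quantities in play are strictly positive so that AM--GM and the square roots are legitimate, which is immediate from the hypothesis $k_1,k_2,k_3>0$.
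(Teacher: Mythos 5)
Your proof is correct and is essentially identical to the paper's: the paper likewise rewrites the left-hand side as $1-\frac{k_1}{k_1+k_2}+1-\frac{k_1+k_2}{k_1+k_2+k_3}$ and applies AM--GM to the two fractions, whose product telescopes to $\frac{k_1}{k_1+k_2+k_3}$. Your substitution $T=k_1+k_2$, $S=k_1+k_2+k_3$ is just a notational repackaging of the same argument, and your treatment of the equality case is also correct.
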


\begin{proof}
By AM-GM inequality, we have
\begin{align*}
\frac{k_2}{k_1+k_2}+\frac{k_3}{k_1+k_2+k_3}
& = 1-\frac{k_1}{k_1+k_2}+1-\frac{k_1+k_2}{k_1+k_2+k_3}\\
& \le 2 - 2\sqrt{\frac{k_1}{k_1+k_2} \cdot \frac{k_1+k_2}{k_1+k_2+k_3}}\\
& = 2 - 2\sqrt{\frac{k_1}{k_1+k_2+k_3}}.\qedhere
\end{align*}
\end{proof}

\begin{lemma}
\label{lem:avg-ratio-prefix-1}
For integer $r > 1$, and reals $k,k_1,\ldots,k_r \ge 0$ such that $k_1 \ge 1$ and $k_1+\ldots+k_r=k$, it holds that
\[
\frac{1}{r} \cdot \sum_{i=1}^r \frac{k_i}{\sum_{j=1}^i k_i} \le 1-\left(1-\frac{1}{r}\right) \cdot k^{-\frac{1}{r-1}}.
\]
\end{lemma}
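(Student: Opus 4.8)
The plan is to telescope and then apply AM--GM. Set $S_i := \sum_{j=1}^i k_j$, so that $S_r = k$; since $k_1 \ge 1$ we have $S_i \ge 1 > 0$ for every $i$, hence all denominators below are positive and every quantity is well defined. For $i \ge 2$ there is the identity $\tfrac{k_i}{S_i} = 1 - \tfrac{S_{i-1}}{S_i}$, while the $i=1$ term equals $1$ exactly. Summing,
\[
\sum_{i=1}^r \frac{k_i}{S_i} = 1 + \sum_{i=2}^r\Bigl(1 - \frac{S_{i-1}}{S_i}\Bigr) = r - \sum_{i=2}^r \frac{S_{i-1}}{S_i},
\]
so the claim becomes equivalent to the lower bound $\sum_{i=2}^r \tfrac{S_{i-1}}{S_i} \ge (r-1)\, k^{-1/(r-1)}$.

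For this I would apply the AM--GM inequality to the $r-1$ positive numbers $\tfrac{S_{i-1}}{S_i}$, $i = 2,\dots,r$, whose product telescopes:
\[
\prod_{i=2}^r \frac{S_{i-1}}{S_i} = \frac{S_1}{S_r} = \frac{k_1}{k} \ge \frac{1}{k}.
\]
Thus $\sum_{i=2}^r \tfrac{S_{i-1}}{S_i} \ge (r-1)\bigl(\tfrac{k_1}{k}\bigr)^{1/(r-1)} \ge (r-1)\,k^{-1/(r-1)}$. Substituting back and dividing by $r$ gives $\tfrac1r \sum_{i=1}^r \tfrac{k_i}{S_i} \le 1 - \bigl(1 - \tfrac1r\bigr)\,k^{-1/(r-1)}$, as desired, with equality exactly when all the ratios $S_{i-1}/S_i$ coincide and $k_1 = 1$, i.e.\ when the $S_i$ form a geometric progression from $1$ to $k$.

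I do not expect a genuine obstacle here: the only points needing a line of care are that no denominator vanishes (guaranteed by $k_1 \ge 1$, which is precisely why that hypothesis is imposed) and that coordinates with $k_i = 0$ are harmless, contributing $0$ and remaining consistent with the telescoping identity. The same "telescope, then AM--GM on the telescoping product" template is exactly what should be adapted to prove the more general \cref{lem:avg-ratio-prefix-d} — there one groups the indices into $\lfloor r/d\rfloor$ blocks of $d$ consecutive terms before telescoping — and \cref{lem:avg-ratio-adj} is a small two-term instance of the same estimate used as a building block there.
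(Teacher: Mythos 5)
Your proof is correct. The telescoping identity $\sum_{i=1}^r k_i/S_i = r - \sum_{i=2}^r S_{i-1}/S_i$ is valid (all $S_i \ge k_1 \ge 1 > 0$), the product $\prod_{i=2}^r S_{i-1}/S_i$ telescopes to $k_1/k \ge 1/k$, and a single application of AM--GM to the $r-1$ ratios yields $\sum_{i=2}^r S_{i-1}/S_i \ge (r-1)(k_1/k)^{1/(r-1)} \ge (r-1)k^{-1/(r-1)}$, which is exactly the claimed bound after dividing by $r$.

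However, your route is genuinely different from, and simpler than, the paper's. The paper proves \cref{lem:avg-ratio-prefix-1} via a variational/exchange argument: it first establishes the two-term estimate \cref{lem:avg-ratio-adj} (itself an AM--GM instance), uses it to argue that any configuration with $S_i \ne \sqrt{S_{i-1}S_{i+1}}$ for some interior $i$ can be locally perturbed to strictly increase the left-hand side, concludes that a maximizer must have prefix sums in geometric progression from $k_1$ to $k$, and only then performs the telescoping computation on that extremal configuration. Your proof telescopes first and applies AM--GM once, globally, to the telescoping product; this bypasses \cref{lem:avg-ratio-adj} entirely, avoids the implicit appeal to the existence of a maximizer, and handles $k_1 \ge 1$ directly via $(k_1/k)^{1/(r-1)} \ge k^{-1/(r-1)}$ rather than by setting $k_1 = 1$ at the end. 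What the paper's approach buys is an explicit description of the extremal configuration (geometric prefix sums), which you also recover as your equality case; what yours buys is brevity and the elimination of an auxiliary lemma. One caution on your closing remark: for \cref{lem:avg-ratio-prefix-d} the paper does not re-telescope in blocks but rather splits the sum by residue class of the index modulo $d$ and invokes \cref{lem:avg-ratio-prefix-1} on each class, so the adaptation is a reindexing rather than a block telescoping; but that does not affect the correctness of your proof of the present lemma.
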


\begin{proof}
We claim that for fixed $r$ and $k$, the maximum possible value of LHS is achieved when $k_{\le i} = n^{(i-1)/(r-1)}$.
To see this, consider any possible $k_1,\ldots,k_r$.
Suppose there exists $i \in (1,r)$ such that $k_{\le i} \ne \sqrt{k_{\le (i-1)}k_{\le (i+1)}}$.
By \cref{lem:avg-ratio-adj}, setting $k_{\le i}=\sqrt{k_{\le (i-1)}k_{\le (i+1)}}$, which is same as appropriately setting $k_i$ conditioned on fixed $k_i+k_{i+1}$, strictly increases the value of LHS. 
Consequently, the value of LHS is maximized only if $k_{\le i} = \sqrt{k_{\le (i-1)}k_{\le (i+1)}}$ for all $i \in (1,r)$.
In other words, $k_{\le i} = k_1 \cdot (k/k_1)^{\frac{i-1}{r-1}}$ for $i \in [r]$, implying that
\[
\textrm{LHS} = \frac{1}{r} \cdot \left(1 + \sum_{i=2}^r \frac{k_i}{k_{\le i}}\right) = 1 - \frac{1}{r} \cdot \sum_{i=2}^r \frac{k_{\le (i-1)}}{k_{\le i}} = 1-\left(1-\frac{1}{r}\right) \cdot \left(\frac{k}{k_1}\right)^{-\frac{1}{r-1}}.
\]
This concludes the proof by setting $k_1=1$.
\end{proof}

\begin{proof}[Proof of \cref{lem:avg-ratio-prefix-d}]
Observe that
\[
\textrm{LHS} = \frac{1}{r} \cdot \sum_{\hat{i}=1}^{d} \sum_{i=0}^{\lfloor (r-\hat{i})/d \rfloor} \frac{\sum_{j = \max(\hat{i} + (i-1) \cdot d + 1, 1)}^{\hat{i} + i \cdot d} k_j}{\sum_{j=1}^{\hat{i} + i \cdot d} k_j}.
\]
For fixed $\hat{i} \in [d]$, each term of the inner summation corresponds to a unique interval in $[1,\hat{i}],[\hat{i}+1,\hat{i}+d],\ldots,[\hat{i} + (\hat{r}-1) \cdot d, \hat{i} + \hat{r} \cdot d]$, where $\hat{r} = \lfloor (r-\hat{i})/d \rfloor$.
Let $\hat{k} = k_{\le (\hat{i} + \hat{r} \cdot d)}$.
Applying \cref{lem:avg-ratio-prefix-1}, we get that the inner summand for $\hat{i}$ is at most
\[
(\hat{r}+1) \cdot \left(1 - \left(1-\frac{1}{\hat{r}+1}\right) \cdot \hat{k}^{-\frac{1}{\hat{r}}}\right) \le (\hat{r}+1) \cdot \left(1 - \left(1-\frac{1}{\lfloor r/d \rfloor}\right) \cdot k^{-\frac{1}{\lfloor r/d \rfloor - 1}}\right),
\]
because $\hat{r}+1 \ge \lfloor r/d \rfloor$ and $\hat{k} \le k$.
Combining all above concludes the proof as the summation of $\hat{r}+1$ over $i \in [d]$ is just $r$.
\end{proof}
\end{document}